\documentclass[11pt]{article}
\usepackage{amssymb}
\usepackage[sectionbib]{natbib}
\usepackage{array,epsfig,rotating}
\usepackage{amsmath}
\usepackage{amsfonts}
\usepackage{multirow}
\usepackage{amsthm}
\usepackage{array}
\usepackage{booktabs}
\usepackage{graphicx}
\usepackage{graphics}
\usepackage{epsfig,latexsym,verbatim}
\usepackage{color}
\usepackage{multicol}
\usepackage{threeparttable}
\usepackage{float}
\usepackage{cancel}
\usepackage{url}
\usepackage{ulem}
\usepackage{rotating}
\usepackage{xr}
\usepackage{secdot}
\usepackage{subcaption}
\usepackage[lined,algonl,boxed]{algorithm2e}
\usepackage{cancel}

\definecolor{red}{rgb}{0,0,0}
\definecolor{green}{rgb}{0,0,0}
\definecolor{blue}{rgb}{0,0,0}

\def\bs{\boldsymbol}
\def\code#1{\texttt{#1}}
  % end of proof

\setcounter{page}{1}

\textwidth=37.4pc \textheight=50.5pc \oddsidemargin=0.4pc
\evensidemargin=0.4pc \headsep=15pt
\topmargin=.6cm
\parindent=1.6pc
\parskip=0pt
\floatstyle{ruled}
\newtheorem{theorem}{Theorem}
\newtheorem{lemma}{Lemma}

\newtheorem{proposition}{Proposition}
\theoremstyle{definition}

\newtheorem{example}{Example}
\newtheorem{remark}{Remark}

%%%%%%%%%%%%%%%%%%%%%%%%%%%%%%%%%%%%%%%%%%%%%%%%%%%%%%%%%%%%%%
%%%%%%%%%%%%%%%%%%%%%%%%%%%%%%%%%%%%%%%%%%%%%%%%%%%%%%%%%%%%%%
%%%%%%%%%%%%%%%%%%%%%%%%%%%%%%%%%%%%%%%%%%%%%%%%%%%%%%%%%%%%%%
%%%%%%%%%%%%%%%%%%%%%%%%%%%%%%%%%%%%%%%%%%%%%%%%%%%%%%%%%%%%%%

\begin{document}

\renewcommand{\baselinestretch}{1.2}
\markboth{\hfill{\footnotesize\rm Xinyi Li, Shan Yu, Yueying Wang, Guannan Wang, Ming-Jun Lai and Li Wang}\hfill}
{\hfill {\footnotesize\rm Spline smoothing of 3D geometric data} \hfill}
\renewcommand{\thefootnote}{}
$\ $\par \fontsize{10.95}{14pt plus.8pt minus .6pt}\selectfont
\vspace{0.8pc} \centerline{\large\bf Nonparametric Regression for 3D Point Cloud Learning}
\vspace{.4cm} \centerline{ 
Xinyi Li$^{a}$, 
Shan Yu$^{b}$, 
Yueying Wang$^{c}$, 
Guannan Wang$^{d}$, 
Li Wang$^{e}$ and  
Ming-Jun Lai$^{f}$
\footnote{\emph{Address for correspondence}: 
Li Wang, Department of Statistics, George Mason University, Fairfax, VA 22030, USA. Email: lwang41@gmu.edu}} \vspace{.4cm} 
\centerline{\it 
$^{a}$Clemson University, 
$^{b}$University of Virginia,
$^{c}$Amazon.com, Inc } 
 \centerline{\it 
 $^{d}$College of William \& Mary,
 $^{e}$George Mason University and 
$^{f}$University of Georgia} \vspace{.55cm}
\fontsize{9}{11.5pt plus.8pt minus .6pt}\selectfont

%%%%%%%%%%%%%%%%%%%%%%%%%%%%%%%%%%%%%%%%%%%%%%%%%%%%%%%%%%%%%%
%%%%%%%%%%%%%%%%%%%%%%%%%%%%%%%%%%%%%%%%%%%%%%%%%%%%%%%%%%%%%%

\begin{quotation}
\noindent {\it Abstract:}
In recent years, there has been an exponentially increased amount of point clouds collected with irregular shapes in various areas. Motivated by the importance of solid modeling for point clouds, we develop a novel and efficient smoothing tool based on multivariate splines over the triangulation to extract the underlying signal and build up a 3D solid model from the point cloud. The proposed method can denoise or deblur the point cloud effectively, provide a multi-resolution reconstruction of the actual signal, and handle sparse and irregularly distributed point clouds to recover the underlying trajectory. In addition, our method provides a natural way of numerosity data reduction. We also introduce a bootstrap method to quantify the uncertainty of the estimators. We establish the theoretical guarantees of the proposed method, including the convergence rate and asymptotic normality of the estimator, and show that the convergence rate achieves optimal nonparametric convergence. Through extensive simulation studies and a real data example, we demonstrate the superiority of the proposed method over traditional smoothing methods in terms of estimation accuracy and efficiency of data reduction. For example, we can reduce an image point cloud of 510,340 voxels to a vector of 4,856 spline coefficients with a peak signal-to-noise ratio 30.76. 

\vspace{9pt}
\noindent {\it Key words and phrases:}
3D pattern recognization;
Complex domain; 
Penalized splines; 
Triangulation; 
Trivariate splines.
\end{quotation}

%%%%%%%%%%%%%%%%%%%%%%%%%%%%%%%%%%%%%%%%%%%%%%%%%%%%%%%%%%%%%%
%%%%%%%%%%%%%%%%%%%%%%%%%%%%%%%%%%%%%%%%%%%%%%%%%%%%%%%%%%%%%%
\fontsize{10.95}{14pt plus.8pt minus .6pt}\selectfont
\thispagestyle{empty}

\label{SEC:intro}
\setcounter{equation}{0}
\noindent \textbf{1. Introduction} \vskip 0.1in
\renewcommand{\thefigure}{1.\arabic{figure}} \setcounter{figure}{0}

\noindent Recent advances in computer and information technology have dramatically boosted the availability of three-dimensional (3D) point clouds in many fields, such as geography, environmental science, computer graphics, engineering, economics, and medical imaging. These point clouds are usually collections of enormous measurements in space defined by a given coordinates system. Sometimes, these point clouds are rendered and inspected directly; examples can be found in \cite{Rusinkiewicz:Levoy:00}. More often, the point clouds are converted to polygon mesh or triangle mesh models, e.g., 3D reconstruction is widely used for automated driving vehicles \citep{Lee:Song:Jo:16, Shao:21}. Many efforts have been devoted to investigating the converting technique, and a survey of surface reconstruction from point clouds can be found in \cite{Berger:etal:17}. 

These 3D data typically contain more information than the shape of an object, as they represent some meaningful values in real life. For example, healthcare professionals can use brain activity levels contained in 3D neuroimages to access new angles, resolutions, and details to better understand tissue. Another example is using cloud-based 3D scans to explore the location and access the grade and type of ore in the mining industry. Despite the potential of these applications, there has been limited research on identifying underlying signals from 3D solid objects, as there are many challenges in achieving this goal.

First of all, as shown in the point cloud inputs in Figure \ref{FIG:flowchart_entire}, the shape of the point clouds is usually irregular. To extract the underlying signal over the irregular-shaped 3D solid object, many conventional methods, such as kernel smoothing \citep{Zhu:Fan:Kong:14}, tensor product smoothing \citep{Reiss:Ogden:10}, thin plate spline smoothing \citep{Duchon:77, yue2010nonstationary}, and wavelet smoothing \citep{Morris:Carroll:06}, suffer from the problem of ``leakage'', which is referred as the inaccurate signal reconstruction across the complex domains; see \cite{Ramsay:02, Wang:Ranalli:07, kim2020generalized} for more details. {Take the pillar of the horseshoe domain in Figure \ref{FIG:hs_tri_eg} as an example. Although the Euclidean distance between points A and B is very short, erroneously borrowing information from point B when estimating point A will lead to unsatisfactory results.} Several techniques have been developed in the last two decades to tackle the ``leakage'' problem for two-dimensional (2D) smoothing, such as spline smoothing \citep{Wang:Ranalli:07, wilhelm2016igs, Lai:Wang:13, Awanou:Lai:Wenston:05}, finite element analysis \citep{Arnone:22, lindgren2011explicit}, and kernel smoothing \citep{guo2010non}. For 3D images, \cite{chung2018discrete} applied the discrete heat kernel smoothing, and for irregular-shaped 3D objects,  \cite{huo2020smoothed} considered the finite element method.

%%%%%%%%%%%%%%%%%%%%%%%%%%%%%%%%%%%%%%%%%%%%%%%%%%%%%%%%%%%%%%%%%%%%%%
\begin{figure}[ht]
\begin{center}
    \includegraphics[height=1.2in]{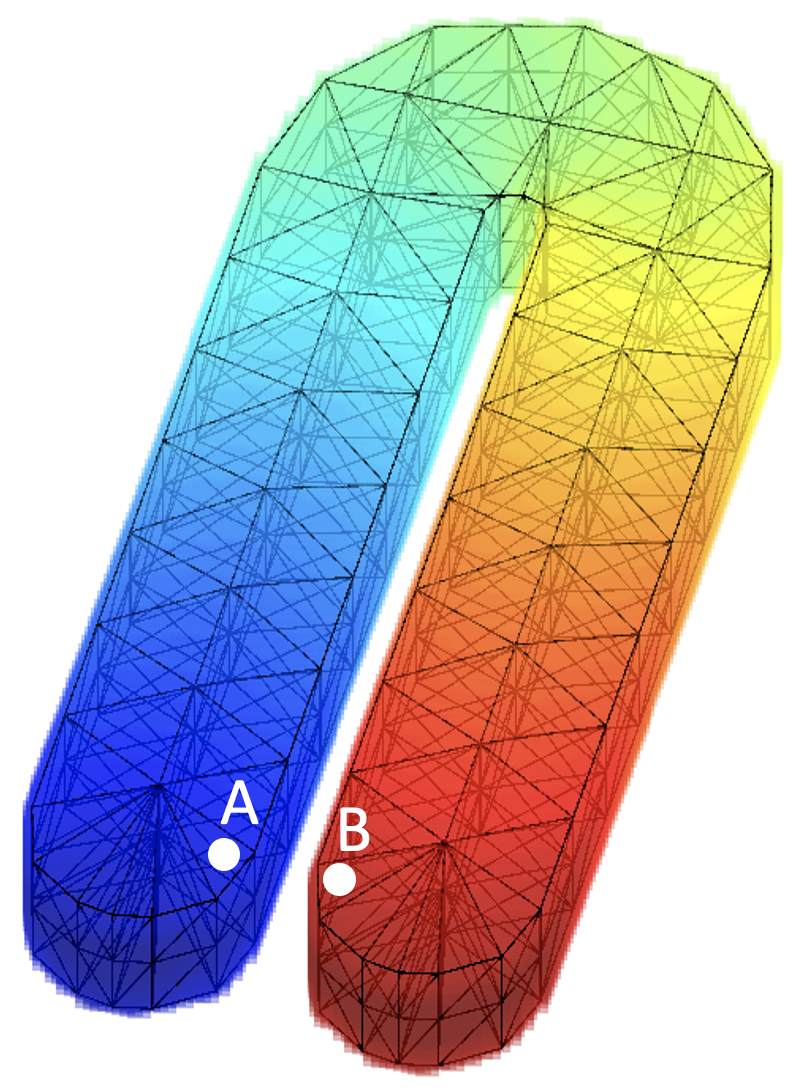} \includegraphics[height=1.15in]{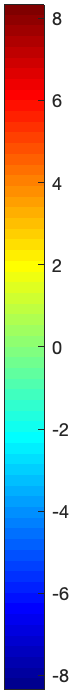} 
    \caption{A illustration of ``leakage'' problem.}
\label{FIG:hs_tri_eg}
\end{center}
\end{figure}
%%%%%%%%%%%%%%%%%%%%%%%%%%%%%%%%%%%%%%%%%%%%%%%%%%%%%%%%%%%%%%%%%%%%%%

Second, the information comprised in the 3D point clouds usually contains unwanted noise that can obscure the features of the underlying signal of interest. Consequently, denoising is a crucial step in the modeling process. Smoothing methods are widely adopted for denoising in many 2D studies \citep{goldsmith2014smooth, kang2020consistent}. However, the unique characteristics of 3D point clouds present significant challenges in denoising and other forms of analysis \citep{wong2016fiber}. 

Third, point cloud data acquisition generally relies on 3D sensors such as LiDAR and depth cameras. Raw point cloud data obtained from these devices is often sparse, uneven, or even partially missing, which poses significant challenges for advanced vision tasks such as point cloud classification, segmentation, and statistical modeling and inference. Therefore, it is necessary to preprocess the raw data to generate complete, dense, and uniform point cloud data.

Last but not least, as mentioned before, the size of the point clouds is usually enormous, and thus both the storage and the analysis method are very challenging. For example, a 3D image produced from a standard positron emission tomography (PET) brain scan contains over half a million voxels. This large amount of data can make it challenging to perform computationally intensive tasks such as data processing and  analysis.

To address the prevalent challenges in converting a 3D irregular-shaped point cloud to a solid model and accurately estimating the underlying function, we propose a novel and efficient smoothing method based on trivariate penalized spline over triangulations (TPST) with a directional derivative-based penalty function. Figure \ref{FIG:flowchart_entire} demonstrates the flowchart of the overall procedure of the TPST method for analyzing a 3D point cloud. The proposed TPST smoothing method has several appealing properties for handling data collected from a point cloud.

{One of the main advantages is its ability to effectively solve the problem of ``leakage'' across complex domains. By borrowing the information from neighboring tetrahedra, the TPST method is able to accurately denoise or deblur the collected data while preserving inherent geometric features or spatial structures. This makes it particularly useful for analyzing irregularly shaped 3D point clouds. Similar to P-splines of \cite{eilers1996flexible}, TPST uses compactly supported basis functions and a sparse penalty, which is computationally efficient for smoothing unevenly distributed data, such as sparse and irregular 3D point cloud data. By using an appropriate partition, the method can reconstruct or recover the underlying function from both global and local missing data, and provide a multi-resolution reconstruction of the actual signal.}

{Additionally, the proposed TPST has the potential to be used as a numerosity reduction method \citep{han2011data}. Assuming the underlying signal is smooth, it can significantly reduce the amount of data that needs to be stored by representing it through a smaller set of spline coefficients. Once the spline basis functions are determined, we can use the coefficients to recover the underlying signal of the point clouds. For example, as demonstrated in Section 6, a point cloud of size 50,000 can be represented by a vector of 323 spline coefficients. Let's take the biomedical image for another example, where the proposed method can reduce an image point cloud of 510,340 voxels to a vector of 4,856 coefficients, with a peak signal-to-noise ratio reaching 30.76. See Table \ref{TAB:app} in a later section for detail. This can greatly alleviate the storage and computational challenges associated with large point clouds.}

{Furthermore, the proposed TPST is computationally efficient to deal with point clouds of large size. As shown in Section 3, it is very easy to set up the spline basis functions and penalties of TPST. Similar to many other types of spline smoothing \citep{Ma:12, Wang:Xue:Yang:20, Wang:Yang:09, Liu:Zhao:21, Wang:Wang:Lai:Gao:20}, TPST is a global estimation method with an explicit model expression, which makes it computationally much more efficient compared to traditional methods. This is because it only requires solving a single linear system, which reduces the computational complexity and makes it suitable for handling large point clouds.}

%%%%%%%%%%%%%%%%%%%%%%%%%%%%%%%%%%%%%%%%%%%%%%%%%%%%%%%%%%%%%%%%%%%%%%
\begin{figure}[ht]
\begin{center}
    \includegraphics[scale=0.4]{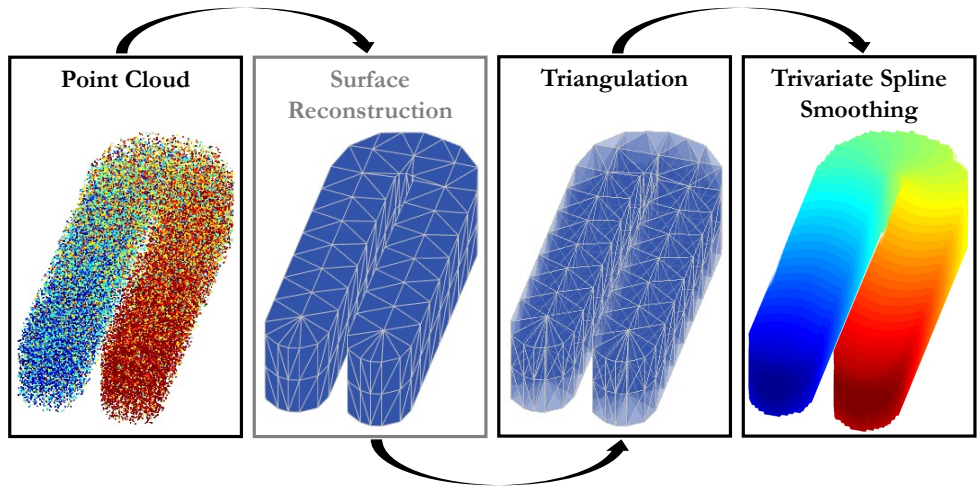}
    \caption{A flowchart demonstrating the overall procedure of TPST method analyzing a 3D point cloud.}
\label{FIG:flowchart_entire}
\end{center}
\end{figure} 

%%%%%%%%%%%%%%%%%%%%%%%%%%%%%%%%%%%%%%%%%%%%%%%%%%%%%%%%%%%%%%%%%%%%%%
In this article, we also investigate the theoretical properties of the proposed smoothing method. To be more specific, we establish the convergence rate of the proposed TPST estimator, which is governed by the fineness of the triangulation, the trivariate spline degree and smoothness, penalty parameter, and smoothness of the unknown function. We further illustrate that the rate achieves the optimal nonparametric convergence rate when the penalty parameter is zero. Additionally, we derive the asymptotic normality for the TPST estimator.

{Quantifying the uncertainty of estimators is essential when working with point clouds, as it allows for the assessment of the reliability of estimates, determination of the robustness of the estimators, and making accurate conclusions based on the results. To address this, one may use the asymptotic normality of the proposed TPST estimator to establish pointwise confidence intervals in theory. In practice, however, it can be difficult to derive the exact form of the standard error due to the characteristics of the trivariate spline basis functions. As an alternative, we propose using a wild bootstrap method to estimate the standard error. This method is particularly useful for 3D point clouds that may have varying levels of noise or uncertainty as it can handle the heterogeneity effectively in data. Our simulation results show that the bootstrap standard error is very close to the true standard errors, providing a reliable method for quantifying the uncertainty of the estimators.}

The rest of the paper is structured as follows. In Section 2, we give an overview of the triangulations and the methods for constructing the triangulations from a 3D point cloud. Once a triangulation is obtained, we can construct the spline space over this triangulation. We introduce the penalized spline estimators in Section 3. In Section 4, we present the theoretical results. In Section 5, we discuss implementation details for the proposed TPST method, including the creation of the penalty matrix, selection of the penalty parameter, choice of triangulation, and evaluations of uncertainties. We then demonstrate the performance of the proposed smoothers on several simulation examples and a real data example in Section 6. Finally, this paper is concluded, and future work is outlined in Section 7. We put all the theoretical details and more detailed introductions in the supplementary material.

%%%%%%%%%%%%%%%%%%%%%%%%%%%%%%%%%%%%%%%%%%%%%%%%%%%%%%%%%%%%%%
%%%%%%%%%%%%%%%%%%%%%%%%%%%%%%%%%%%%%%%%%%%%%%%%%%%%%%%%%%%%%%
%%%%%%%%%%%%%%%%%%%%%%%%%%%%%%%%%%%%%%%%%%%%%%%%%%%%%%%%%%%%%%
% \setcounter{equation}{0} \setcounter{lemma}{0} %
% \setcounter{theorem}{0} \setcounter{proposition}{0} \setcounter{corollary}{0}
\vskip 0.2in \noindent \textbf{2. Trivariate splines over triangulations and basic properties} \vskip 0.1in
\renewcommand{\thetable}{2.\arabic{table}} \setcounter{table}{0} 
\renewcommand{\thefigure}{2.\arabic{figure}} \setcounter{figure}{0}
\renewcommand{\theequation}{2.\arabic{equation}} \setcounter{equation}{0} 
\label{SEC:preliminaries}

In this section, we provide a basic framework for triangulations and trivariate splines on those triangulations. Please refer to Appendix A for more detailed introductions.

%%%%%%%%%%%%%%%%%%%%%%%%%%%%%%%%%%%%%%%%%%%%%%%%%%%%%%%%%%%%%%
%%%%%%%%%%%%%%%%%%%%%%%%%%%%%%%%%%%%%%%%%%%%%%%%%%%%%%%%%%%%%%
\vskip 0.1in \noindent \textbf{2.1. Triangulations} \vskip .10in

We assume there exists a tight polyhedral domain, $\Omega$, that contains all the data locations of a point cloud. In the theoretical development below, we consider the $\Omega$ is fixed and given. Additionally, we take into account that the domain may have one or multiple holes that do not contain any observations, as seen in the 3D domain of Figure \ref{FIG:tetrademos} (b). Triangulations are very popular in approximating the domain $\Omega$. In the following, we use $T$ to denote a tetrahedron, that is, a convex hull of four noncoplanar points in $\mathbb{R}^3$. Then, a collection $\triangle = \{T_1, \ldots, T_N\}$ of $N$ tetrahedra is called a \textsf{triangulation} of $\Omega = \cup_{h=1}^N T_h$, provided that any pair of tetrahedra in $\triangle$ intersect at most at a common vertex, along a common edge, or along a common triangular face. An example of triangulation $\triangle = \{T_1 = \langle\bs{v}_5, \bs{v}_1, \bs{v}_4, \bs{v}_3\rangle, T_2=\langle \bs{v}_2, \bs{v}_1, \bs{v}_3, \bs{v}_4\rangle\}$ of the domain $\Omega = \langle\bs{v}_2, \bs{v}_5, \bs{v}_3, \bs{v}_4\rangle$ is illustrated in Figure \ref{FIG:TriPartition} (a), where $\bs{v}_{\imath},~\imath = 1, \ldots, 5,$ are the vertices of the tetrahedra. In contrast, as illustrated in Figure \ref{FIG:TriPartition} (b) for tetrahedra $T_1$, $T_3 = \langle\bs{v}_2, \bs{v}_1, \bs{v}_3, \bs{v}_6\rangle$ and $T_4 = \langle\bs{v}_2, \bs{v}_1, \bs{v}_6, \bs{v}_4\rangle$, $\{T_1, T_3, T_4\}$ does not form a triangulation of $\Omega$, because of the intersections between the pairs of tetrahedra $(T_1, T_3)$ and $(T_1, T_4)$.

%%%%%%%%%%%%%%%%%%%%%%%%%%%%%%%%%%%%%%%%%%%%%%%%%%%%%%%%%%%%%%%%%%%%%%
\begin{figure}[ht]
	\begin{center}
		\begin{tabular}{ccccc}
			\includegraphics[width=.78in,height=.78in]{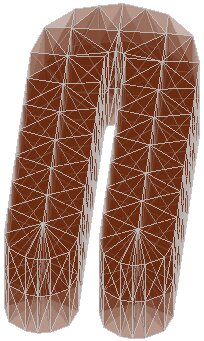} &
			\includegraphics[height=.8in]{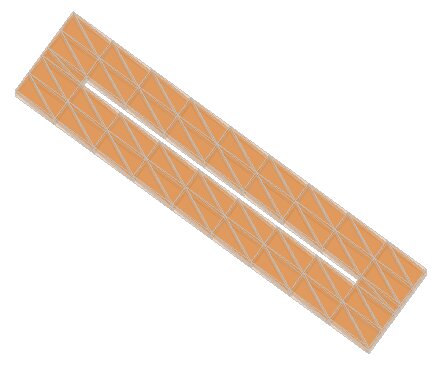}  &
			\includegraphics[height=.82in]{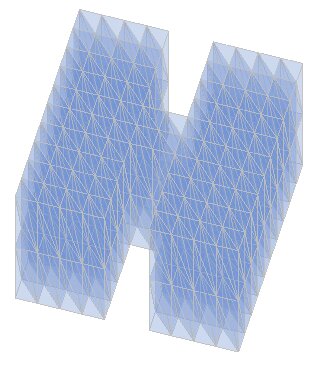} & \includegraphics[width=1in,height=.78in]{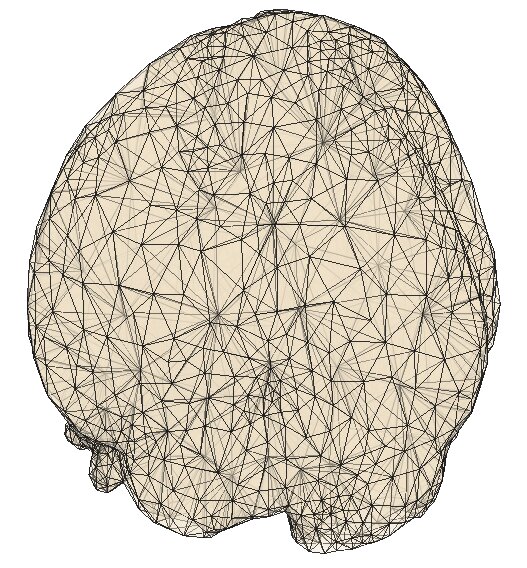} &
			\includegraphics[scale=0.12]{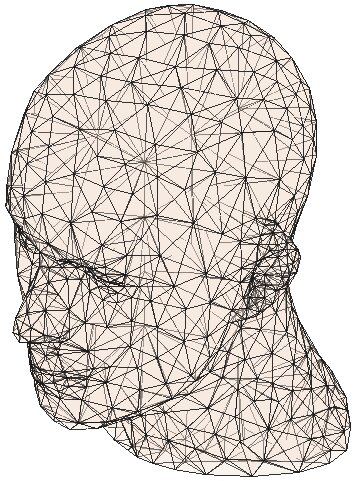}\\
			(a) & (b) & (c) & (d) & (e) \\
		\end{tabular} 
	\end{center}
	\caption{Examples of 3D point clouds and corresponding triangulation.}
	\label{FIG:tetrademos}
\end{figure}

\begin{figure}
\begin{center}
\begin{tabular}{ccccc}
\includegraphics[width=.23\linewidth]{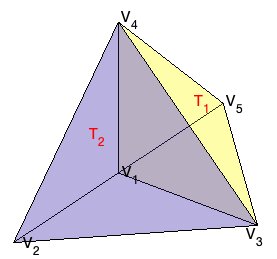} &\includegraphics[width=.23\linewidth]{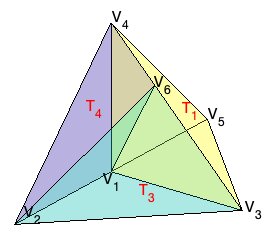}\\
(a) A triangulation &(b) A non-triangulation\\
\end{tabular}
\caption{(a) and (b) provide an example and a counterexample of triangulation.}
\label{FIG:TriPartition}
\end{center}
\end{figure}

Furthermore, let $|T|$ be the length of the longest edge of $T$, and $\varrho_{T}$ be the radius of the largest ball that can be inscribed in $T$, then the ratio $\beta_T:=|T|/\varrho_{T}$ is the \textsf{shape parameter} of $T$; see an illustration example in Figure \ref{FIG:TriShape} (a). In general, the shape parameter $\beta_T$ describes the shape of $T$, {and the larger $\beta_T$ is, the flatter the tetrahedron $T$ becomes}. If $T$ is a regular tetrahedron whose six edges are all of the same lengths, then $\beta_T = 2\sqrt{6}$, and for any other tetrahedron, $\beta_T > 2\sqrt{6}$.  Based on this shape parameter, we can define that a triangulation $\triangle $ is $\beta$\textsf{-quasi-uniform} if there is a positive value $\beta$ such that $\triangle$ satisfies
\begin{equation}
|\triangle|/\varrho_{T} \le \beta < \infty, \quad \mbox{ for all } T\in \triangle, 
\label{DEF:beta}
\end{equation}
where $|\triangle| := \max\{|T|,T \in \triangle\}$ is referred as the \textsf{size of $\triangle$}, i.e., the length of the longest edge of $\triangle$. Let $N$ be the number of the tetrahedra in the polygonal domain $\Omega$. From (\ref{DEF:beta}), we have for a $\beta$-quasi-uniform partition, $N\leq (4\pi|\triangle|^3)^{-1} 3V_{\Omega}\beta^3$, where $V_{\Omega}$ denotes the volume of $\Omega$.

To illustrate the definition of quasi-uniform partition, we show an example of non-$\beta$-quasi-uniform triangulation in Figure \ref{FIG:TriShape} (b),
\begin{align*}
    \triangle=\{
        &T_1=\langle \bs{v}_2, \bs{v}_1, \bs{c}_1, \bs{v}_4\rangle, 
        T_2=\langle \bs{v}_1, \bs{c}_2, \bs{c}_1, \bs{v}_4\rangle, 
        T_3=\langle \bs{c}_1, \bs{v}_1, \bs{c}_3, \bs{c}_2\rangle, \\
        &T_4=\langle \bs{c}_3, \bs{v}_1, \bs{c}_4, \bs{c}_2\rangle, 
        T_5=\langle \bs{c}_3, \bs{v}_1, \bs{v}_3, \bs{c}_4\rangle,\ldots\},    
\end{align*}
where the edges $\langle\bs{v}_4, \bs{c}_1\rangle$ and $\langle\bs{c}_2, \bs{c}_3\rangle$ are perpendicular to the edge $\langle\bs{v}_2, \bs{v}_3\rangle$, and $\langle\bs{c}_1, \bs{c}_2\rangle$ and $\langle\bs{c}_3, \bs{c}_4\rangle$ are perpendicular to $\langle\bs{v}_3, \bs{v}_4\rangle$. 
\begin{figure}
\begin{center}
\begin{tabular}{cc}
	\includegraphics[scale=.45]{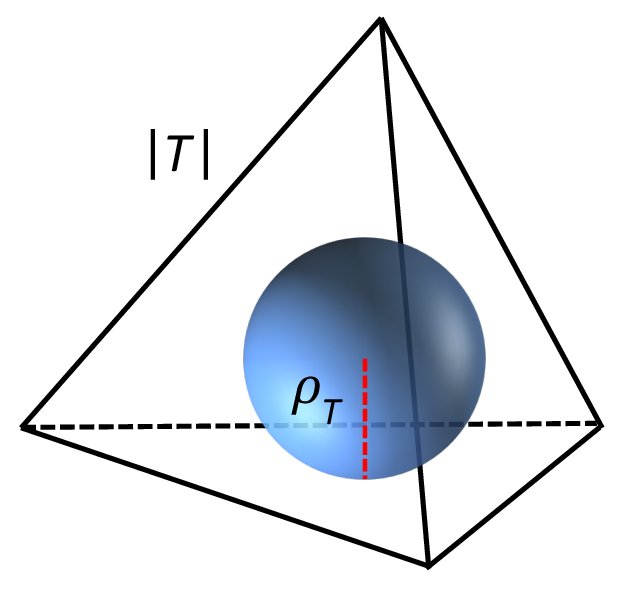} &\includegraphics[scale=.11]{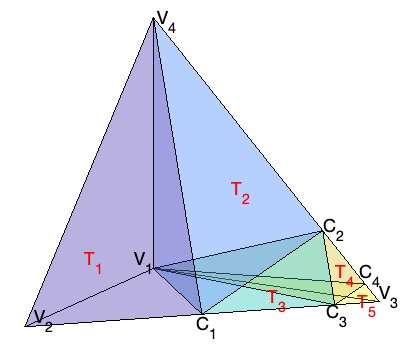}\\
	(a) Shape parameters &(b) Non-$\beta$-quasi-uniform triangulation\\
\end{tabular}
\caption{(a) shows an illustration of the shape parameters of a tetrahedron; (b) shows an example of a non-$\beta$-quasi-uniform triangulation.}
\label{FIG:TriShape}
\end{center}
\end{figure}

%%%%%%%%%%%%%%%%%%%%%%%%%%%%%%%%%%%%%%%%%%%%%%%%%%%%%%%%%%%%%%
%%%%%%%%%%%%%%%%%%%%%%%%%%%%%%%%%%%%%%%%%%%%%%%%%%%%%%%%%%%%%%
\vskip 0.1in \noindent \textbf{2.2. Point clouds to triangulations} \vskip .10in

We now describe how to construct a triangulation from a 3D point cloud. The first step is to determine the domain of the point cloud, which is often given as a polygonal domain $\Omega$, a geometric region defined by a set of polygons. This can be used to represent a wide range of objects, such as buildings, terrain, and objects in medical imaging. If the domain of the point cloud is not known, we can use surface reconstruction techniques, such as triangular meshes, to find a tight polygonal domain that encases the point cloud. As $\Omega$ has piecewise linear boundary faces, we have a normal direction perpendicular to each planar face of $\Omega$, which can be used to determine whether a point is inside or outside of $\Omega$. Then, we use this normal direction information of all boundary faces and the observed data locations to partition $\Omega$ into a collection of tetrahedra. Many mesh generation algorithms for 3D domains available in various software packages and toolboxes, such as the MATLAB functions \code{delaunay} and \code{distmesh/distmeshnd} \citep{Persson:Strang:04}, the C++ library \code{CGAL} \citep{Jamin:2015,cgal:eb-20b}, the \code{TetGen} \citep{Si:2015}, and \code{iso2mesh} \citep{Fang:2009} built on \code{CGAL}. In this article, we consider a triangulation, $\triangle$, with vertices containing the partial or whole set of observed data locations to achieve interpolation at the given data locations. This type of triangulation can be constructed using the constrained Delaunay triangulation method, as described in \cite{Shewchuk:98}. Based on a similar method proposed in \cite{Xu:19}, we obtain all the triangulations using MATLAB. Figure \ref{FIG:flowchart} illustrates our flowchart for obtaining a triangulation. This generation procedure can be applied to many other 3D point clouds collected in various fields, as shown in the triangulation examples in Figure \ref{FIG:tetrademos}.

%%%%%%%%%%%%%%%%%%%%%%%%%%%%%%%%%%%%%%%%%%%%%%%%%%%%%%%%%%%%%%%%%%%%%%
\begin{figure}[ht]
\begin{center}
    \includegraphics[scale=0.45]{./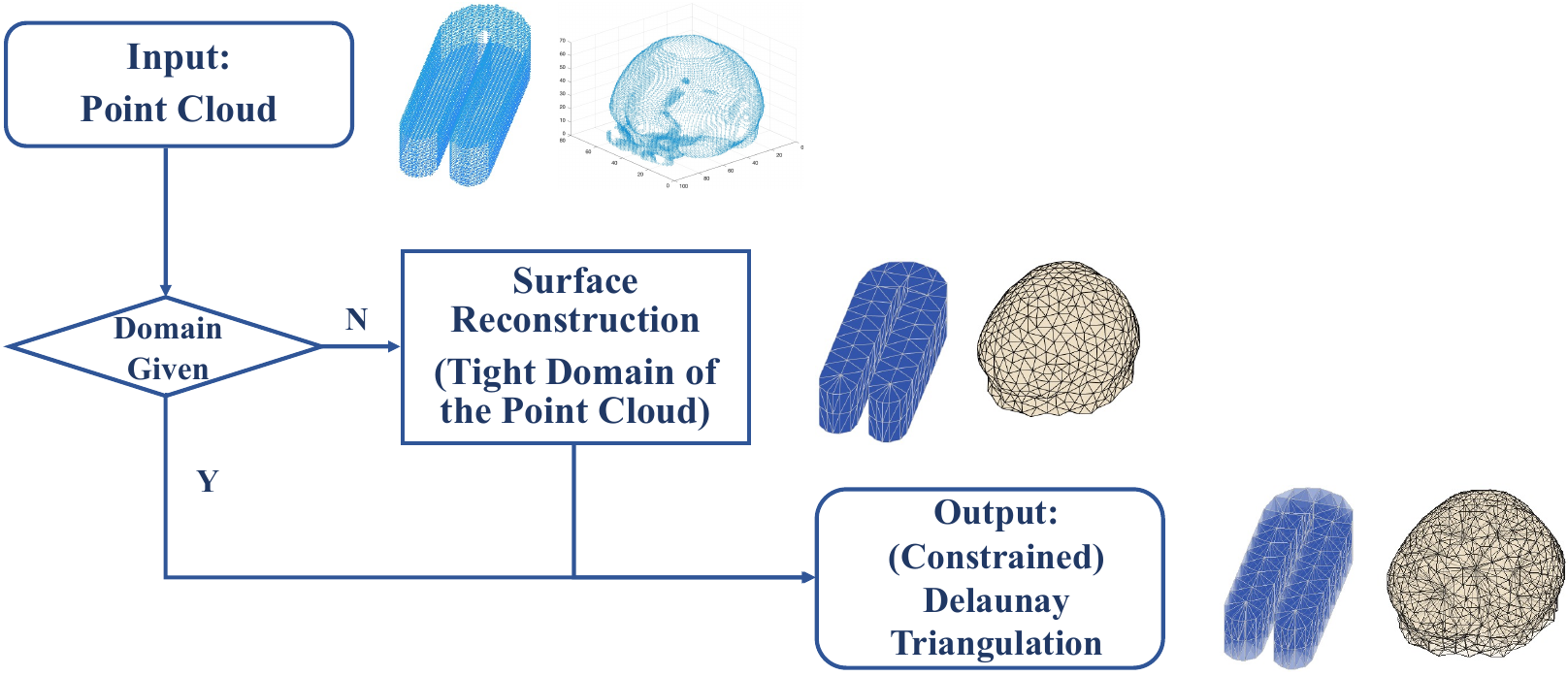}
    \caption{{Flowchart demonstrating the process of constructing a triangulation from a 3D point cloud.}}
\label{FIG:flowchart}
\end{center}
\end{figure}

%%%%%%%%%%%%%%%%%%%%%%%%%%%%%%%%%%%%%%%%%%%%%%%%%%%%%%%%%%%%%%
%%%%%%%%%%%%%%%%%%%%%%%%%%%%%%%%%%%%%%%%%%%%%%%%%%%%%%%%%%%%%%
\vskip 0.1in \noindent \textbf{2.3. Trivariate splines on a triangulation} \vskip .10in
\label{SUBSEC:spline}

Suppose we have obtained a triangulation $\triangle$ for a tight domain $\Omega$ of a given point cloud. For any tetrahedron $T = \langle\bs{v}_1, \bs{v}_2, \bs{v}_3, \bs{v}_4\rangle \in \triangle$, any point $\bs{p} = (x,y,z) \in \mathbb{R}^3$ has a unique representation in terms of $\langle\bs{v}_1, \bs{v}_2, \bs{v}_3, \bs{v}_4\rangle$,
\[
\bs{p} = b_1\bs{v}_1 + b_2\bs{v}_2 + b_3\bs{v}_3 + b_4\bs{v}_4,~\text{with}~b_1+ b_2 + b_3 + b_4 = 1,
\]
where $(b_1, b_2, b_3, b_4)$ are called the \textsf{barycentric coordinates} of $\bs{p}$ \textsf{relative to the tetrahedron} $T$, and they are nonnegative if $\bs{p}$ is inside or on the faces of $T$. Accordingly, for some nonnegative integers $i$, $j$, $k$, $l$ with $i + j + k + l = d$, define \textsf{trivariate Bernstein basis polynomial of degree} $d$ \textsf{relative to} $T$ as
\begin{equation}
\label{Bform}
B_{ijkl}^{d,T}(\bs{p}) := \frac{d!}{i!j!k!l!} b_1^i b_2^j b_3^k b_4^l,~\text{with}~i + j + k + l = d.
\end{equation}
For any positive integer $d$, let $\mathcal{P}_d$ be the space of all trivariate polynomials with total degrees less than or equal to $d$. Note that the dimension of $\mathcal{P}_d$ is $\binom{d+3}{3}$. According to Theorem 15.8 in \cite{Lai:Schumaker:07} and Lemma \ref{LEM:basis} in the Appendix A, the set of Bernstein basis polynomials in (\ref{Bform}) forms a basis for the space of polynomials $\mathcal{P}_d$. Thus, any polynomial $\phi(\bs{p}) \in \mathcal{P}_d$ on with domain $T$ can be written uniquely as the \textsf{B-form}, 
\begin{equation}
\label{EQN:Bform-1}
\phi(\bs{p}) = \sum_{i + j + k + l = d} \gamma_{T; ijkl}B_{ijkl}^{d, T}(\bs{p}) = \mathbf{B}^d_T(\bs{p})^{\top}\bs{\gamma}_T,
\end{equation}
where the coefficients $\bs{\gamma}_T = \{\gamma_{T; ijkl}\}_{i + j + k + l = d}$ are called \textsf{B-coefficients} of $\phi(\cdot)$.

For any nonnegative integer $r$, we use $\mathcal{C}^r(\Omega)$ to denote the collection of all $r$-th continuously differentiable functions over $\Omega$. For triangulation $\triangle = \{T_1, \ldots, T_N\}$, let $\mathcal{S}_d^r(\triangle) = \{s \in \mathcal{C}^r(\Omega):s|_T \in \mathcal{P}_d(T), T \in \triangle\}$ be a spline space of degree $d$ and smoothness $r$ over $\triangle$, where $s|_T$ is the polynomial piece of spline $s$ restricted on tetrahedron $T$. According to (\ref{EQN:Bform-1}), for any $s\in\mathcal{S}_d^r(\triangle)$, there exists a coefficient vector $\bs{\gamma} = (\bs{\gamma}_{T_1}^{\top}, \ldots, \bs{\gamma}_{T_N}^{\top})^{\top}$ with $\bs{\gamma}_{T_J}={\{\gamma_{T_J;ijkl}\}_{i + j + k + l = d}}$ such that 
\begin{equation}
\label{EQN:s_Bform}
s|_{T_J}(\bs{p}) = \sum_{i + j + k + l = d} \gamma_{T_J; ijkl}B_{ijkl}^{d, T_J}(\bs{p}) = \mathbf{B}^d_{T_J}(\bs{p})^{\top}\bs{\gamma}_{T_J}, ~J = 1, \ldots, N.
\end{equation}
That is, $s(\bs{p}) = \mathbf{B}_d(\bs{p})^{\top}\bs{\gamma}$, 
where $\mathbf{B}_d = \{(\mathbf{B}^d_{T_1})^{\top}, \ldots, (\mathbf{B}^d_{T_N})^{\top}\}^{\top}$.

For noise-free data, the approximation order of trivariate spline spaces was studied in \cite{Lai:Schumaker:07} when $d\geq 8r + 1$ and \cite{Lai:89} when $d\geq 6r+3$. Particularly, \cite{Lai:89} proved that when $d\geq 6r+3$, the space $\mathcal{S}_d^r(\triangle)$ can {attain the optimal converge rate}; see Lemma \ref{LEM:appord} below. For any index $\bs{\alpha} = (\alpha_1, \alpha_2, \alpha_3)$ of order $|\bs{\alpha}| = \alpha_1 +\alpha_2 + \alpha_3$, we denote the derivatives $D^{\bs{\alpha}} s = \partial^{|\bs{\alpha}|} s / \partial_x^{\alpha_1} \partial_y^{\alpha_2} \partial_z^{\alpha_3}$. For any function $f$ over the closure of domain $\Omega$, denote the $L^{q}(\Omega)$ norm ($1 \leq q < \infty$) and supremum norm as $\Vert f \Vert_{L^q(\Omega)} = \{\int_{\Omega}|f(\bs{v})|^q d\bs{v}\}^{1/q}$ and $\Vert f \Vert_{\infty, \Omega} = \sup_{\bs{v}\in \Omega} |f(\bs{v})|$, respectively. Next, for any $k \geq 0$, denote $|f|_{k, \infty, \Omega} = \max_{|\bs{\alpha}|=k}\Vert D^{\bs{\alpha}} f \Vert_{\infty, \Omega}$, and $|f|_{k,q,\Omega}=\sum_{|\bs{\alpha}|\leq k}\|D^{\bs{\alpha}} f \|_{L^{q}(\Omega)}$. Further, we define the Sobolev space $\mathcal{W}^{\ell,q }(\Omega )=\left\{f:|f|_{k,q, \Omega}<\infty, 0\le k\le \ell \right\}$ for $1\leq q\leq \infty$, $\ell\geq1$.
\begin{lemma}
\label{LEM:appord}
\citep[Theorem 3.5.2 in][]{Lai:89}
For all $f\in \mathcal{W}^{d+1,q}(\Omega)$ with $1\leq q \leq \infty$, $r\geq0$, and $d\geq 6r+3$, there exists a spline $s_f\in\mathcal{S}_d^r(\triangle)$ such that 
\begin{equation*}
\Vert D^{\bs{\alpha}}\left(f-s_{f}\right) \Vert_{L^q(\Omega)} \leq K|\triangle|^{d +1-|\bs{\alpha}|}|f|_{d +1, q,\Omega},  
\label{eqGS}
\end{equation*}
for all $0\leq|\bs{\alpha}|\leq m$, where $K>0$ is a constant independent of $f$ and $|\triangle|$ but is dependent on the geometry of $\triangle$.
\end{lemma}

\begin{remark} \label{remark1}
Both $d$ and $r$ are parameters for the spline space $\mathcal{S}_d^r(\triangle)$, which determines the smoothness of the TPST estimator and are usually predetermined by the user. Practically, as long as $d \geq r$, we can construct the spline bases. If $d \geq 6r+3$, we can achieve the full approximation (approximation with optimal convergence rate) order, as shown in Lemma \ref{LEM:appord}. A higher value of $d$ indicates a higher degree polynomial, which can result in a higher computation burden; see our analysis in Section 6. 
In practice, the choice of $d$ and $r$ is closely tied to the intended interpretation of the estimated function. If the goal is to enhance the signal-to-noise ratio for visualization or to suggest a simple parametric model, then a slightly oversmoothed function with a subjectively chosen parameter may be appropriate. However, if the focus is on accurately estimating the regression function and preserving local structures, then a slightly undersmoothed function may be more suitable.
\end{remark}

%%%%%%%%%%%%%%%%%%%%%%%%%%%%%%%%%%%%%%%%%%%%%%%%%%%%%%%%%%%%%%
%%%%%%%%%%%%%%%%%%%%%%%%%%%%%%%%%%%%%%%%%%%%%%%%%%%%%%%%%%%%%%
%%%%%%%%%%%%%%%%%%%%%%%%%%%%%%%%%%%%%%%%%%%%%%%%%%%%%%%%%%%%%%
\vskip 0.2in \noindent \textbf{3. Penalized spline estimators} \vskip 0.1in
\renewcommand{\thetable}{3.\arabic{table}} \setcounter{table}{0} 
\renewcommand{\thefigure}{3.\arabic{figure}} \setcounter{figure}{0}
\renewcommand{\theequation}{3.\arabic{equation}} \setcounter{equation}{0} 
\label{SEC:TPS}

With all the preparations in the previous introduction, we apply the trivariate spline over the triangulation to recognize the underlying signal from 3D point clouds in this section. 

{In the following, for any $i = 1, \ldots, n$, let point $\bs{p}_i = (x_i, y_i, z_i)\in\mathbb{R}^3$ be the location or design point of the $i$th observation in a point cloud of sample size $n$. Let $W_i$ be the response variable observed on the $i$th location $\bs{p}_i$.} Then, we regard any point cloud as a set of $n$ observations $\{(\bs{p}_i, W_i)\}_{i=1}^n$ in general. To extract the underlying signal, we consider the following nonparametric regression model
\begin{equation}
W_i = m(\bs{p}_i) + {\sigma(\bs{p}_i)} \varepsilon_i,
\label{DEF:model}
\end{equation}
where $m(\cdot)$ is some smooth but unknown 3D function, {$\sigma(\cdot)$ is the unknown conditional standard deviation function}, and $\varepsilon_i$ is the random error term with mean zero. {Assume $(W_i,\varepsilon_i)$ are general iid copies of $(W,\varepsilon)$.}

In nonparametric smoothing, the roughness penalty approach is widely used when smoothing noisy data \citep{Green:Silverman:94, Wood:03, Lai:Wang:13}. Including a roughness penalty and choosing a proper tuning parameter can avoid overfitting problems and balance the bias and variance of the estimator of the function. To estimate the underlying function $m$ in (\ref{DEF:model}), we formulate the roughness penalty approach as the following penalized least squares problem:
\begin{equation}
    \min_{s\in\mathcal{S}_d^r(\triangle)} \sum_{i=1}^n\{W_i-s(\bs{p}_i)\}^2 + \rho_n\mathcal{E}(s),
\label{EQ:mhatlambda}
\end{equation}
where the roughness penalty
\begin{equation}
\mathcal{E}(s) = \sum_{|\bs{\alpha}| = 2} {2 \choose \alpha_1} {2-\alpha_1 \choose \alpha_2} \int_{\Omega} \{D^{\bs{\alpha}} s(\bs{p})\}^2 \mathrm{d}\bs{p},
\label{penalty}
\end{equation}
and $\rho_n$ is the roughness penalty parameter. We aim to find the minimizer of (\ref{EQ:mhatlambda}) in $\mathcal{S}_d^r(\triangle)$, denoted as $\widehat{m}_{\rho_n}$, which is the \textsf{Trivariate Penalized Spline over triangulation} (TPST) estimator of $m$. The tuning parameter $\rho_n$ controls the smoothness of the fitted spline function. A larger $\rho_n$ leads to a less fluctuating function. If $\rho_n$ goes to infinity, our estimator shrinks to linear functions where the roughness penalty $\mathcal{E}(s) = 0$. On the other hand, when $\rho_n = 0$, the estimator becomes the standard unpenalized least squares spline estimator. A proper penalty parameter $\rho_n$ balances the goodness of fit for the data and the volatility of estimated functions. 

{The penalty $\mathcal{E}(s)$ in (\ref{penalty}) is a commonly used penalty; see \cite{Green:Silverman:94} for the 2D case.} For a spline function $s \in \mathcal{S}_d^r(\triangle)$, combining (\ref{EQN:s_Bform}), the roughness penalty in (\ref{penalty}) can be written as follows: 
\begin{align}
\label{EQN:EsP}
\mathcal{E}(s) = \sum_{T \in \triangle} \mathcal{E}(s|_T)
& = \sum_{T \in \triangle} \sum_{|\bs{\alpha}| = 2} {2 \choose \alpha_1}{2 - \alpha_1 \choose \alpha_2} \int_T \{D^{\bs{\alpha}}s|_T(\bs{p})\}^2 \mathrm{d}\bs{p}
=\sum_{T \in \triangle}\bs{\gamma}^{\top}_T \mathbf{P}_T \bs{\gamma}_T,
\end{align}
where $\mathbf{P}_T$ is the corresponding penalty matrix. Let $\mathbf{B}_d = (\mathbf{B}_d(\bs{p}_1), \ldots, \mathbf{B}_d(\bs{p}_n))^{\top}$, then we have $\mathcal{E}(\mathbf{B}_d^{\top} \bs{\gamma}) = \bs{\gamma}^{\top} \mathbf{P} \bs{\gamma}$, where $\mathbf{P}$ is the block diagonal penalty matrix. See Section A.6 for more details in derivations and calculations of penalty matrices.

Since $s\in {\cal C}^r(\triangle)$, the coefficients of $s$ satisfy some smoothness conditions across each interior faces of $\triangle$. One can obtain a smoothness constraint matrix $\mathbf{H}$ such that $\mathbf{H}\bs{\gamma} = \bs{0}$ by repeatedly applying equation (\ref{EQ:H_r}) in the Appendix A over all shared triangular faces, and more details are available in Section A.2 in the supplementary material. A MATLAB implementation is discussed in \cite{Awanou:Lai:Wenston:05}. Thus, the objective function can be written as
\begin{equation}
\min_{\bs{\gamma}} \sum_{i=1}^n \left\{W_i-\mathbf{B}_d(\bs{p}_i)^{\top} \bs{\gamma}\right\}^{2} + \rho_n\bs{\gamma}^{\top}\mathbf{P} \bs{\gamma},~\textrm{subject to~} \mathbf{H}\bs{\gamma}=\mathbf{0},
\label{EQU:objective_2}
\end{equation}
where $\mathbf{H}$ is the matrix for all smoothness conditions across shared edges or faces of tetrahedra, which depends on the smoothness parameter $r$ and the structure of the triangulation. One can use QR decomposition to get rid of the constraint in (\ref{EQU:objective_2}). Specifically, $\mathbf{H}^{\top} = \mathbf{Q}\mathbf{R}=(\mathbf{Q}_1~\mathbf{Q}_2) \binom{\mathbf{R}_1}{\mathbf{R}_2}$, where $\mathbf{Q}$ is an orthogonal matrix, $\mathbf{R}$ is an upper triangular matrix, $\mathbf{R}_1$ is a full rank matrix with the same rank as $\mathbf{H}$, and $\mathbf{R}_2$ is a matrix of zeros. Note that for any vector $\bs{\gamma}$ satisfying $\mathbf{H}\bs{\gamma} = \mathbf{0}$, there exists some $\bs{\theta}$ such that $\bs{\gamma} = \mathbf{Q}_2\bs{\theta}$. Also, for any $\bs{\theta}$, $\mathbf{H}(\mathbf{Q}_2\bs{\theta}) = \mathbf{0}$ holds. Thus, (\ref{EQU:objective_2}) is equivalent to a penalized regression problem without constraint:
\begin{equation}
\sum_{i=1}^n\left\{W_i-\mathbf{B}_d(\bs{p}_i)^{\top}
\mathbf{Q}_2\bs{\theta}\right\}^{2}+\rho_n\bs{\theta}^{\top}\mathbf{Q}_{2}^{\top}\mathbf{P}\mathbf{Q}_{2}\bs{\theta},
\label{EQ:PLS}
\end{equation}
which leads to a closed form of the solution. To be specific, let $\mathbf{W}=(W_1,\ldots, W_n)^{\top}$, then the minimizer of (\ref{EQ:PLS}) is given by $\widehat{\bs{\theta}}_{\rho_n}=(\mathbf{Q}^{\top}_2\mathbf{B}_d^{\top}\mathbf{B}_d\mathbf{Q}_2  + \rho_n\mathbf{Q}_2^{\top} \mathbf{P} \mathbf{Q}_2)^{-1} \mathbf{Q}^{\top}_2\mathbf{B}_d^{\top} \mathbf{W}$. Consequently, the spline coefficient in (\ref{EQU:objective_2}) can be estimated by $\widehat{\bs{\gamma}}_{\rho_n}=\mathbf{Q}_{2}\widehat{\bs{\theta}}_{\rho_n}$, which yields the TPST estimator $\widehat{m}_{\rho_n}(\bs{p})=\mathbf{B}_d(\bs{p})^{\top}\widehat{\bs{\gamma}}_{\rho_n}$. Several methods can be used for choosing the penalty parameter $\rho_n$, such as the block cross-validation (block CV) and generalized cross-validation (GCV). Detailed discussion is given in Section 5.2.

%%%%%%%%%%%%%%%%%%%%%%%%%%%%%%%%%%%%%%%%%%%%%%%%%%%%%%%%%%%%%%
%%%%%%%%%%%%%%%%%%%%%%%%%%%%%%%%%%%%%%%%%%%%%%%%%%%%%%%%%%%%%%
%%%%%%%%%%%%%%%%%%%%%%%%%%%%%%%%%%%%%%%%%%%%%%%%%%%%%%%%%%%%%%
\vskip 0.2in \noindent \textbf{4. Theoretical results} \vskip 0.1in
\renewcommand{\thetable}{4.\arabic{table}} \setcounter{table}{0} 
\renewcommand{\thefigure}{4.\arabic{figure}} \setcounter{figure}{0}
\renewcommand{\theequation}{4.\arabic{equation}} \setcounter{equation}{0} 
\label{SEC:results}

In the previous section, we discuss how to construct TPST and capture the underlying signal using TPST from the point clouds. In the following, we investigate the theoretical support of TPST. 

For random variables $X_n$, $X_n=O_P(b_n)$ if $\lim\nolimits_{c\rightarrow \infty }\limsup_n P(|X_n|\geq cb_n)=0$, $n\geq 1$. Similarly, for any constant $c>0$, $X_n=o_P(b_n)$ if $\lim_{n\to\infty}P(|X_n|\geq cb_n)=0$. And $a_n \asymp b_n$ if there exist two positive constants $c_1, c_2$ such that $c_1 |a_n| \le |b_n| \le c_2 |a_n|$, for all $n\ge 1$.

%%%%%%%%%%%%%%%%%%%%%%%%%%%%%%%%%%%%%%%%%%%%%%%%%%%%%%%%%%%%%%
%%%%%%%%%%%%%%%%%%%%%%%%%%%%%%%%%%%%%%%%%%%%%%%%%%%%%%%%%%%%%%
\vskip 0.1in \noindent \textbf{4.1. Convergence rate} \vskip .10in
\label{SSEC:convergence}

Before we state our main results, we make the following assumptions, which are standard in nonparametric literature \citep{Lai:Wang:13, Yu:etal:20, Huang:03b}.
\begin{itemize}
	\item[(A1)] The trivariate function $m\in \mathcal{W}^{\ell+1,\infty}(\Omega)$ for an integer $\ell \ge 1$.

	\item[(A2)] The noise $\epsilon$ satisfies that $\lim_{\eta\rightarrow\infty}\mathrm{E}[\epsilon^{2}I(\epsilon>\eta)]=0$ and $\mathrm{E}|\epsilon_n^{2+\eta}|\leq\upsilon_{\eta}$ for some $\eta>0$. The standard deviation function $\sigma(\bs{p})$ is continuous on $\Omega$ and 
	$0<c_{\sigma}\leq \inf_{\bs{p}\in \Omega}\sigma(\bs{p}) \leq \sup_{\bs{p}\in \Omega}\sigma(\bs{p}) \leq C_{\sigma}<\infty$.

	\item[(A3)] The density function of the observations is bounded below and above.

	\item[(A4)] The number of the tetrahedra $N$ and the sample size $n$ satisfy that $N=Cn^{\gamma}$ for some constant $C>0$ and $\gamma<1$.
\end{itemize}

To obtain the asymptotic analysis of spline estimators, we account on an important property that the data-driven norm uniformly approach to its expectation uniformly over the entire spline space. To see this, for any function $f$ over the closure of domain $\Omega$, let $E_n(f) = n^{-1}\sum_{i=1}^n f( \bs{p}_i) $ and $E(f) = E[f(\bs{p})]$. Define the empirical inner product and norm as $\left\langle f_1, f_2\right\rangle _{n,\Omega} = E_n(f_1f_2)$ and $\left\|f_1\right\|_{n,\Omega}^2 =\left\langle f_1,f_1\right\rangle _{n,\Omega}$, respectively, for measurable functions $f_1$ and $f_2$ on $\Omega$. The theoretical $L^2$ inner product and the induced norm are given by $\left\langle f_1,f_2\right\rangle_{L^2(\Omega)} = E(f_1f_2)$ and $\left\| f_1\right\|_{L^2(\Omega)}^2 = \left\langle f_1,f_1\right\rangle_{L^2(\Omega)}$. We illustrate the uniform convergence rate of empirical inner product to the theoretical one in the following Lemma \ref{LEM:Rnorder}.

%%%%%%%%%%%%%%%%%%%%%%%%%%%%%%%%%%%%%%%%%%%%%%%%%%%%%%%%%%%%%
\begin{lemma}
\label{LEM:Rnorder} 
{Denote the basis for $\mathcal{S}_d^r(\triangle)$ constructed in \cite{Lai:Schumaker:07} by  $\{B_{\xi}\}_{\xi \in \mathcal{M}}$}, where $\mathcal{M}$ stands for the index set of spline bases. Let $g_1 = \sum_{\xi \in \mathcal{M}}c_{\xi}B_{\xi}$, $g_2 = \sum_{\zeta \in \mathcal{M}}\tilde{c}_{\zeta}B_{\zeta}$ be any spline functions in $\mathcal{S}_d^r(\triangle)$. Under Assumptions (A3) and (A4), we have
\[
R_n = \sup\limits_{g_1, g_2 \in \mathcal{S}_d^r(\triangle)} \left|\frac{\left\langle g_1, g_2\right\rangle_{n,\Omega} - \left\langle g_1, g_2\right\rangle _{\Omega}}{\left\|g_1\right\|_{\Omega}\left\|g_2\right\|_{\Omega}}\right| = O_P\left\{(N\log n)^{1/2} n^{-1/2}\right\}.
\]
\end{lemma}

For the purpose of illustrating theoretical development, we rewrite the penalty in (\ref{penalty}) in terms of linear operation. Let $\mathcal{B}:= \mathcal{B}(\Omega)$ be the space of all bounded real-valued functions over $\displaystyle \Omega = \cup_{T \in \triangle}T$ equipped with the inner product $n\langle f_1, f_2 \rangle_{n, \Omega} + \rho_n \langle f_1, f_2 \rangle_{\mathcal{E}}$, where
\[
\langle f_1, f_2 \rangle_{\mathcal{E}} = \sum_{|\bs{\alpha}| = 2}{2\choose \alpha_1}{2 - \alpha_1\choose \alpha_2}\sum_{T \in \triangle} \int_T\{D^{\bs{\alpha}} f_1(\bs{p})\} \{D^{\bs{\alpha}} f_2(\bs{p})\} \mathrm{d}\bs{p}
\]
for $f_1, f_2\in\mathcal{B}$. 

Next, we introduce a measure of the complexity of the spline space $\mathcal{S}_d^r(\triangle)$, and another measure which bounds the size of the derivatives:
\begin{equation}
V_n=\sup_{g\in \mathcal{S}_d^r(\triangle)}\left\{ \frac{\left\| g\right\|
_{\infty,\Omega}}{\left\| g\right\| _{n,\Omega}},\left\|
g\right\| _{n,\Omega
}\neq 0\right\},\quad \overline{V}_n=\sup_{g\in \mathcal{S}_d^r(\triangle)}\left\{ \frac{%
\left\| g\right\| _{\mathcal{E}}}{\left\| g\right\|
_{n,\Omega}},\left\| g\right\| _{n,\Omega}\neq 0\right\}.
\label{DEF:Vn}
\end{equation}
These two measures will play an important role in developing the asymptotic results. 
We use the following Lemma \ref{LEM:Vnorder} to demonstrate the upper bounds of $V_n$ and $\overline{V}_n$.

%%%%%%%%%%%%%%%%%%%%%%%%%%%%%%%%%%%%%%%%%%%%%%%%%%%%%%%%%%%%%
\begin{lemma}
\label{LEM:Vnorder} 
Under Assumptions (A3) and (A4), we have $V_n =O_P(|\triangle|^{-3/2}), ~\overline{V}_n =O_P(|\triangle|^{-2})$.
\end{lemma}

{Let $\epsilon_i=\sigma(\bs{p}_i)\varepsilon_i$, and denote the generic variable of $\epsilon_i$ as $\epsilon$.}
We then define a linear operator $P_{\rho_n}:\mathcal{B}\mapsto\mathcal{S}_d^r(\triangle)$ such that $P_{\rho_n}W = \widehat{m}_{\rho_n}$. Note that in general, $P_{\rho_n}$ is not a linear projection. Thus, we have $P_{\rho_n}W = P_{\rho_n}m + P_{\rho_n}\epsilon$, where $P_{\rho_n}m$ and $P_{\rho_n}\epsilon$ are the penalized spline estimators based on $\{m(\bs{p}_i)\}_{i=1}^n$ and $\{\epsilon\}_{i=1}^n$, respectively. Under some conditions \citep{Huang:03b}, $P_0$ is a bounded operator on $\mathcal{S}_d^r(\triangle)$ in the maximum norm. Denote $s_{\rho_n, m} = P_{\rho_n}m$ and $s_{\rho_n, \epsilon} = P_{\rho_n}\epsilon$. Consequently, for the penalized spline estimator $\widehat{m}_{\rho_n}$ in (\ref{EQ:mhatlambda}), 
\begin{eqnarray}
\label{EQ:decomposition}
\widehat{m}_{\rho_n}(\bs{p}) - m(\bs{v}) = \left\{s_{\rho_n,m}(\bs{p}) - m(\bs{p})\right\} + s_{\rho_n, \epsilon}(\bs{p}),
\end{eqnarray}
where $s_{\rho_n,m}(\bs{p}) - m(\bs{p})$ and $s_{\rho_n, \epsilon}(\bs{p})$ are referred to as the bias and noise terms, respectively.

According to the error decomposition in (\ref{EQ:decomposition}), to derive the convergence rate of $\widehat{m}_{\rho_n}$ to $m$, it is sufficient to evaluate the size of the bias and noise terms. The following propositions provide the upper bound of the bias size and noise size.

\begin{proposition}
\label{PROP:bias} 
Under Assumptions (A1), (A3) and (A4), if $d\ge 6r+3$ and $\triangle$ is a $\beta$-quasi-uniform triangulation, then we have
\[
\left\| s_{\rho_n, m} - m\right\| _{\infty, \Omega} = O_P\left\{\frac{\rho_n}{n\left|\triangle \right|^{7/2}}|m|_{2, \infty, \Omega} + \left(1 + \frac{\rho_n}{n\left|\triangle\right|^{11/2}}\right) \left|\triangle\right|^{\ell+1}\left|m\right|_{\ell +1, \infty,\Omega}\right\}.
\]
\end{proposition}

\begin{proposition}
\label{PROP:variance} Under Assumptions (A2) and (A4), $\Vert s_{\rho_n,\epsilon}\Vert _{L^2(\Omega)} = O_P\left(\frac{1}{\sqrt{n}|\triangle|^{3/2}}\right)$.
\end{proposition}

\begin{proposition}
\label{PROP:maxnorm} Under Assumptions (A2) and (A4), $\|s_{\rho_n, \epsilon}\|_{\infty, \Omega} = O_P\left\{\frac{(\log{n})^{1/2}}{\sqrt{n}|\triangle|^{3/2}} + \frac{\rho_n}{n^{3/2}|\triangle|^7}\right\}$.
\end{proposition}

Based on Propositions \ref{PROP:bias}--\ref{PROP:maxnorm}, we illustrate the convergence rates of the TPST estimator in the following Theorem \ref{THM:L2-convergence}, in terms of both the $L^2(\Omega)$ and supremum norms.

\begin{theorem}
\label{THM:L2-convergence} 
Under Assumptions (A1)--(A4), if $d \ge 6r+3$ and $\triangle$ is a $\beta$-quasi-uniform triangulation, we have
\begin{align*}
	\Vert\widehat{m}_{\rho_n}-m\Vert_{L^2(\Omega)}
	=&O_P\left\{\frac{\rho_n}{n|\triangle|^{7/2}}|m|_{2,\infty,\Omega} + \left(1+\frac{\rho_n}{n|\triangle|^{11/2}}\right) |\triangle|^{\ell+1}|m|_{\ell+1,\infty,\Omega} + \frac{1}{\sqrt{n}|\triangle|^{3/2}}\right\},\\
	\Vert\widehat{m}_{\rho_n}-m\Vert_{\infty,\Omega}
	=&O_P\left\{\frac{\rho_n}{n|\triangle|^{7/2}}|m|_{2,\infty,\Omega} + \left(1+\frac{\rho_n}{n|\triangle|^{11/2}}\right)|\triangle|^{\ell+1}|m|_{\ell+1,\infty,\Omega}\right.\\
	&+ \left.\frac{(\log{n})^{1/2}}{\sqrt{n}|\triangle|^{3/2}} + \frac{\rho_n}{n|\triangle|^4}\right\}.
\end{align*}
\end{theorem}

\begin{remark}
For the unpenalized spline estimator, i.e., $\rho_n=0$, if one takes $N\asymp n^{1/(\ell+2)}$, then $\Vert\widehat{m}_0-m\Vert_{L^2(\Omega)}^2 = O_P(n^{-(\ell+1)/(\ell+2)})$, which achieves the optimal convergence rate as shown in \cite{Stone:82}. Similarly, for the supremum norm, $\Vert\widehat{m}_0-m\Vert_{\infty,\Omega}^2 = O_P\{(n^{-1}\log{n})^{(\ell+1)/(\ell+2)}\}$ when $N\asymp (n/\log n)^{1/(\ell+2)}$, which is also the optimal rate of convergence. When $\rho_n>0$, one can obtain optimal convergence rates for both $L^2$ and supremum norms with $\rho_n=o(n^{\ell/2(\ell+2)})$ and same orders of $N$.
\end{remark}

\begin{remark}
Assumption (A3) is standard in nonparametric literature. Even though this assumption may not be satisfied with missing data, we usually can find a proper choice of the triangulation by adjusting the size of some of the tetrahedra to obtain a decent penalized spline fitting.
\end{remark}

%%%%%%%%%%%%%%%%%%%%%%%%%%%%%%%%%%%%%%%%%%%%%%%%%%%%%%%%%%%%%
%%%%%%%%%%%%%%%%%%%%%%%%%%%%%%%%%%%%%%%%%%%%%%%%%%%%%%%%%%%%%
\vskip .10in \noindent \textbf{4.2. Asymptotic normality} \vskip .10in
\label{SSEC:normality}

To derive the asymptotic normality of the TPST estimator, we further assume the following conditions.
\begin{itemize}
\item[(A4')] The number of the tetrahedra $N$ and the sample size $n$ satisfy that $N=Cn^{\gamma}$ for some constant $C>0$ and $1/(\ell+2)<\gamma<1$.

\item[(A5)] The penalized parameter $\rho_n$ satisfies $\rho_n=o(n^{1/2}N^{-1})$.
\end{itemize}

\begin{remark}
A sufficient condition for a negligible bias term is provided by Assumptions (A4') and (A5). Compared with Assumption (A4) in Section 4.1, (A4') further assumes that the number of tetrahedra needs to be greater than a lower bound which depends on the degree of the function. A similar assumption for the univariate case has been discussed in \cite{Li:Ruppert:08}. Meanwhile, assumption (A5) requires smaller $\rho_n$, which reduces the bias through under smoothing. 
\end{remark}

\begin{theorem}
\label{THM:normality} 
Under Assumptions (A1)--(A3), (A4') and (A5), as $n\rightarrow\infty$, for each $\bs{p}\in \Omega$,
\[
\frac{\widehat{m}_{\rho_n}(\bs{p}) - m(\bs{p})}{\sqrt{\mathrm{Var}\{\widehat{m}_{\rho_n}(\bs{p})|\mathbb{P}\}}}\stackrel{D}{\longrightarrow} N(0,1),
\]
where $\mathbb{P}$ is the collection of all $\bs{p}_i$, $i=1,\ldots,n$. 
\end{theorem}

\begin{remark}
The above asymptotic distribution result can be used to construct asymptotic confidence intervals in theory. For example, if we estimate $m(\bs{p})$ using piecewise constant splines, Lemma \ref{LEM:var-cons} in the Appendix B gives the size of the pointwise variance $\mathrm{Var}\{\widehat{m}_{\rho_n}(\bs{p})\} = \sigma^2(\bs{p})\{nf(\bs{p})V_T\}^{-1}\{1 + o(1)\}$, $\bs{p}\in \Omega$, where $V_T$ is the volume of tetrahedron $T$. Therefore, an asymptotic $100(1-\alpha)\%$ pointwise confidence envelop for $m(\bs{p})$ over $\Omega$ is $\widehat{m}_{\rho_n}(\bs{p}) \pm z_{\alpha/2} \sigma(\bs{p})\{nf(\bs{p})V_T\}^{-1/2}$, where $f(\cdot)$ stands for the density function of $\bs{p}_i$. However, it is very difficult to obtain the exact form of the standard error for general TPST estimators due to the characteristic of the trivariate spline basis functions. To overcome this, Section 5.4 proposes using a wild bootstrap method to estimate the standard errors and quantify the uncertainty of the estimators.
\end{remark}

%%%%%%%%%%%%%%%%%%%%%%%%%%%%%%%%%%%%%%%%%%%%%%%%%%%%%%%%%%%%%
%%%%%%%%%%%%%%%%%%%%%%%%%%%%%%%%%%%%%%%%%%%%%%%%%%%%%%%%%%%%%
%%%%%%%%%%%%%%%%%%%%%%%%%%%%%%%%%%%%%%%%%%%%%%%%%%%%%%%%%%%%%
\vskip .20in \noindent \textbf{5. Implementation details} \vskip .10in
\renewcommand{\thetable}{5.\arabic{table}} \setcounter{table}{0} 
\renewcommand{\thefigure}{5.\arabic{figure}} \setcounter{figure}{0}
\renewcommand{\theequation}{5.\arabic{equation}} \setcounter{equation}{0} 
\label{SEC:imp}

This section provides some implementation details on how to construct the penalty matrix $\mathbf{P}$ and select the penalty parameter $\rho_n$ in (\ref{EQU:objective_2}). To facilitate discussion, we first introduce the directional derivatives for basis functions, followed by the construction details of the penalty matrix, the selection criteria of the penalty parameter, and triangulation selection.

%%%%%%%%%%%%%%%%%%%%%%%%%%%%%%%%%%%%%%%%%%%%%%%%%%%%%%%%%%%%%%
%%%%%%%%%%%%%%%%%%%%%%%%%%%%%%%%%%%%%%%%%%%%%%%%%%%%%%%%%%%%%%
\vskip .10in \noindent \textbf{5.1. Construction of Penalty Matrix} \vskip .10in
\label{SUBSEC:P}

We list here the key steps in the construction of the penalty matrix $\mathbf{P}$ in (\ref{EQN:EsP}). One can refer to Section A.6 in Appendix A for more details.

For a general multivariate smooth function $\phi$, the directional derivative at point $\bs{p}$ with respect to direction $\bs{u}$ is defined as
\[
D_{\bs{u}}\phi(\bs{p}) := \frac{\partial}{\partial t}\phi(\bs{p} + t\bs{u}) \Big|_{t = 0} = \lim_{t\rightarrow0}\frac{\phi(\bs{p} + t\bs{u}) - \phi(\bs{p})}{t}.
\]
Accordingly, for vector $\bs{u}:=(u_x, u_y, u_z)\in\mathbb{R}^3$ and trivariate function $\phi$, the directional derivative at $\bs{p}=(x,y,z)$ is
\[
D_{\bs{u}}\phi(x,y,z) := \frac{\partial}{\partial t}\phi(x + tu_x, y + tu_y, z + tu_z) \Big|_{t = 0}.
\]
Then for the introduced Bernstein basis function with degree $d$, based on Lemma \ref{LEM:ddb} in Appendix A, we have
\[
D_{\bs{u}}B_{ijkl}^d(\bs{p}) = d\left\{a_1 B_{i-1,j,k,l}^{d-1}(\bs{p}) + a_2 B_{i,j-1,k,l}^{d-1}(\bs{p}) + a_3 B_{i,j,k-1,l}^{d-1}(\bs{p}) + a_4 B_{i,j,k,l-1}^{d-1}(\bs{p})\right\},
\]
where $(a_1, a_2, a_3, a_4)$ is the barycentric coordinate of direction $\bs{u}$. Based on this conclusion, for a tetrahedron $s_T$, the corresponding penalty term can be written as
\begin{align*}
\mathcal{E}(s_T)
&= \sum_{|\bs{\alpha}| = 2}{2 \choose \alpha_1}{2 - \alpha_1 \choose \alpha_2}\int_T \bigg\{\sum_{i + j + k + l = d} \gamma_{T; ijkl} D^{\bs{\alpha}} B_{ijkl}^{d,T}(\bs{p})\bigg\}^2 \mathrm{d}\bs{p} \nonumber \\
&= \sum_{|\bs{\alpha}| = 2} \bs{\gamma}^{\top}_T \mathbf{P}_T^{\bs{\alpha}} \bs{\gamma}_T = \bs{\gamma}^{\top}_T \mathbf{P}_T \bs{\gamma}_T, 
\end{align*}
where each $\mathbf{P}_T^{\bs{\alpha}}$  is a $\binom{d+3}{3} \times \binom{d+3}{3}$ matrix with entries $\int_T \{D^{\bs{\alpha}} B_{ijkl}^{d,T}(\bs{p})\}$ $\{D^{\bs{\alpha}} B_{i^{\prime} j^{\prime} k^{\prime} l^{\prime}}^{d,T}(\bs{p})\} \mathrm{d} \bs{p}$ for $\bs{\alpha}$ satisfying $|\bs{\alpha}| = 2$. 
As for the penalty term defined in the whole domain, recall that $\mathcal{E}(s) = \sum_{T \in \triangle} \mathcal{E}(s_T)$ and $\mathcal{E}(s_T) = \bs{\gamma}_T^{\top} \mathbf{P}_{T}\bs{\gamma}_T.$ Therefore, $\mathcal{E}(s) = \bs{\gamma}^{\top} \mathbf{P} \bs{\gamma}$, where $\bs{\gamma}= (\bs{\gamma}_1^\top, \ldots, \bs{\gamma}_N^{\top})^{\top}$, and $\mathbf{P} = \textrm{diag}(\mathbf{P}_T, T \in \triangle)$ is a block diagonal matrix.

%%%%%%%%%%%%%%%%%%%%%%%%%%%%%%%%%%%%%%%%%%%%%%%%%%%%%%%%%%%%%
%%%%%%%%%%%%%%%%%%%%%%%%%%%%%%%%%%%%%%%%%%%%%%%%%%%%%%%%%%%%%%
\vskip .10in \noindent \textbf{5.2. Penalty parameter selection} \vskip .10in
\label{SUBSEC:GCV}

To balance the bias and variance of the proposed estimator and achieve a good estimation and prediction performance, it is crucial to choose a suitable value of the penalty parameter $\rho_n$. Since the in-sample fitting errors can not gauge the prediction accuracy of the fitted function, we select a criterion function that attempts to measure the out-of-sample performance of the fitted model. Minimizing the generalized cross-validation (GCV) criterion is one computationally efficient approach to selecting smoothing parameters that also has good theoretical properties. 

Note that 3D object data are often generated with spatial dependence. When performing cross-validation (CV), these dependence structures are usually ignored, leading to underestimating the predictive error \citep{Roberts:etal:17}. To tackle this problem, one can adopt the block CV strategy \citep{Roberts:etal:17,Valavi:18}. To be specific, all the sample points are first divided into 3D blocks with similar volumes. Then, these blocks are randomly allocated to the CV folds. In this paper, we adopt the triangulation to divide the 3D domain into small 3D blocks. Each tetrahedron is considered as one single 3D block. Figure \ref{FIG:blockCV} shows an example of block CV using a triangulation. In Figure \ref{FIG:blockCV}, we divide the domain into 504 tetrahedra and randomly assign these tetrahedra into five folds with colors indicating different folds.

\begin{figure}[htbp]
\begin{center}
\includegraphics[width=1.0in]{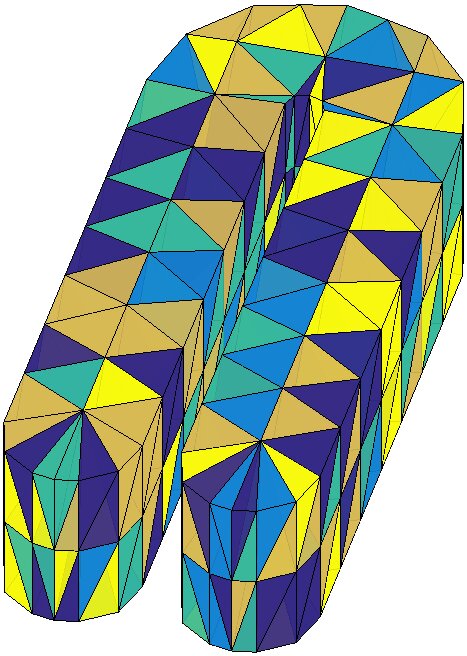} \vspace{-20pt}
\end{center} 
\caption{An illustration of block CV using a triangulation. Tetrahedra with the same color belong to the same fold.}
\label{FIG:blockCV}
\end{figure}

%%%%%%%%%%%%%%%%%%%%%%%%%%%%%%%%%%%%%%%%%%%%%%%%%%%%%%%%%%%%%
%%%%%%%%%%%%%%%%%%%%%%%%%%%%%%%%%%%%%%%%%%%%%%%%%%%%%%%%%%%%%%
\vskip .10in \noindent \textbf{5.3. Triangulation selection} \vskip .10in
\label{SUBSEC:ThSelection}

As illustrated in Figure \ref{FIG:flowchart}, to form a proper triangulation, we start with surface reconstruction. A wide range of techniques has been developed to reconstruct the surface from point clouds. Our paper focuses on the methods that generalize a well-sampled point cloud to arbitrary shapes and produce a watertight surface mesh, such as a triangular mesh. Since we are interested in recovering the actual signal within the 3D point clouds, we skip the details of the surface reconstruction here. The triangulation surface representations can be coarse or fine and affect the construction of the triangulation. A very coarse triangulation could give a poor approximation to the object, while a very fine triangulation could introduce a more serious computation burden. Note that Assumption (A4') requires that the number of the tetrahedra, $N$, is larger than some minimum depending upon the degree of the spline, so theoretically, we can determine the fineness of a triangulation by setting $N \approx \lfloor c_1 n^{\gamma} \log(n)\rfloor +c_2$, in which $\lfloor\cdot\rfloor$ denotes the integer part, $c_1, c_2$ are tuning constants, and $1/(\ell+2) < \gamma < 1$. In practice, we can implement the proposed method in a coarse-to-fine resolution manner with similar criteria discussed in Section 5.2. Our extensive Monte Carlo simulation studies suggest that once a triangulation is fine enough, further refinement usually has little effect on the fitting accuracy. To avoid model over-fitting, we can stop the partition refinement when the model performance on the test set sits flat or even worsens.

%%%%%%%%%%%%%%%%%%%%%%%%%%%%%%%%%%%%%%%%%%%%%%%%%%%%%%%%%%%%%
%%%%%%%%%%%%%%%%%%%%%%%%%%%%%%%%%%%%%%%%%%%%%%%%%%%%%%%%%%%%%%
\vskip .10in \noindent \textbf{5.4. Uncertainty studies} \vskip .10in
\label{SSEC:boot}

In this section, we adopt a bootstrap method to quantitatively estimate the uncertainty of the TPST estimator. A great advantage of the bootstrap method is its simplicity, which can be a straightforward way to derive estimates of standard deviations and confidence intervals for estimators of functions over complex domains. As shown in model (\ref{DEF:model}), the assumption of homoscedasticity if often invalid when a regression model is used to estimate the point clouds. The wild bootstrap \citep{Mammen:93, Hall:Horowitz:13} is specifically designed to work when the model is heteroscedastic. We conduct the following wild bootstrap procedure for estimating the standard errors.

\begin{enumerate}
\item[Step 1.] Based on the data $\{(\bs{p}_i, W_i)\}_{i=1}^{n}$, obtain the the TPST estimator $\widehat{m}(\bs{p}_i)$ described in Section 3 and the following residuals $\widehat{\epsilon}_i = W_i-\widehat{m}(\bs{p}_i)$, $i=1,\ldots,n$;
\item[Step 2.] Generate the bootstrap residuals $\{\epsilon^{\ast}_{i}\}_{i=1}^{n}$ by
$\epsilon_{i}^{\ast}=\delta_{i} \widehat{\epsilon}_i$, where $\delta_{i}=\frac{1\pm \sqrt{5}}{2}$ with probability $\frac{5\pm \sqrt{5}}{10}$, respectively. Define $W_i^{\ast} = \widehat{W}_i+\epsilon_{i}^{\ast}$.
\item[Step 3.] Apply the TPST estimator to the sample $\{(\bs{p}_i, W_i^{\ast})\}_{i=1}^{n}$, and obtain the estimated function $\widehat{m}^{\ast}(\cdot)$ over the entire domain;
\item[Step 4.] Repeat Steps 2 and 3 $B$ times and obtain a bootstrap sample of the TPST estimator as $\{\widehat{m}^{\ast}_{b}(\cdot)\}_{b=1}^{B}$. Then the standard deviation of $\widehat{m}(\bs{p})$, $\bs{p}\in \Omega$, is estimated by
\[
\left[\frac{1}{B}\sum_{b=1}^{B}\left\{\widehat{m}^{\ast}_{b}(\bs{p})-\overline{\widehat{m}}^{\ast}(\bs{p})\right\}^{2}\right]^{1/2},
\]
where $\overline{\widehat{m}}^{\ast} (\cdot)= B^{-1}\sum_{b=1}^B \widehat{m}^{\ast}_b(\cdot)$.
\end{enumerate}

%%%%%%%%%%%%%%%%%%%%%%%%%%%%%%%%%%%%%%%%%%%%%%%%%%%%%%%%%%%%%
%%%%%%%%%%%%%%%%%%%%%%%%%%%%%%%%%%%%%%%%%%%%%%%%%%%%%%%%%%%%%
%%%%%%%%%%%%%%%%%%%%%%%%%%%%%%%%%%%%%%%%%%%%%%%%%%%%%%%%%%%%%
\vskip .20in \noindent \textbf{6. Numerical studies} \vskip .10in
\renewcommand{\thetable}{6.\arabic{table}} \setcounter{table}{0} 
\renewcommand{\thefigure}{6.\arabic{figure}} \setcounter{figure}{0}
\renewcommand{\theequation}{6.\arabic{equation}} \setcounter{equation}{0} 
\label{SEC:simulations}

In this section, we conduct various simulation studies to assess the performance of the proposed TPST method. We use studies in Sections 6.1 and 6.2 to illustrate the capabilities of TPST in handling a variety of complex data structures, Section 6.3 to evaluate the uncertainties of the TPST estimator, and we use Section 6.4 to show the superiority of TPST in data reduction as well as signal enhancement, compared to other smoothing methods. More specifically, depending on the nature of the point clouds, we conduct experiments with unstructured (random design) and structured (fixed design) point clouds. The case of random design in Section 6.1 mimics the scenario that the point clouds are collections of 3D points distributed randomly in space. In contrast, the case of structured design is studied in Section 6.2. In this structured design setting, the point cloud is reconstructed from grid data; in other words, the locations of the points are deterministic.
The evaluation of standard error for the proposed TPST estimator is illustrated in Section 6.3.
In Section 6.4, we explore the practical performance of TPST on a structured point cloud by using PET scan data as a reference.

%%%%%%%%%%%%%%%%%%%%%%%%%%%%%%%%%%%%%%%%%%%%%%%%%%%%%%%%%%%%%
%%%%%%%%%%%%%%%%%%%%%%%%%%%%%%%%%%%%%%%%%%%%%%%%%%%%%%%%%%%%%%
\vskip .10in \noindent \textbf{6.1. Unstructured complete point clouds} \vskip .10in
\label{SSEC:eg1}

In this example, we consider an unstructured (random design) point cloud in which the observations are randomly generated over the entire domain. We set the number of observations in each point cloud as $n = 20,000$ and $50,000$. To mimic some complicated scenarios in practice, we generate the point clouds from the following two domains: (i) a cuboid with a hole inside ($\Omega_1$); (ii) a 3D horseshoe ($\Omega_2$). These domains are illustrated in Figure \ref{FIG:Surface_Eg23}. To extract the underlying signal from point clouds, we consider the model in (\ref{DEF:model}), where the random noises, $\epsilon_i$'s, are assumed to be independent and identically distributed and follow a normal distribution $N(0,\sigma^2)$. For each domain, we consider two types of underlying functions with different degrees of variation: $m_1$ and $m_2$ for $\Omega_1$ (Figure \ref{FIG:Surface_Eg23} (c) and (e)), and $m_3$ and $m_4$ for $\Omega_2$ (Figure \ref{FIG:Surface_Eg23} (i) and (k)). The noises level, $\sigma$, is chosen according to the peak signal-to-noise ratios (PSNR) defined as
\[
\textrm{PSNR} = 20\log_{10} \left\{\max_i m(\bs{p}_i)/\sigma\right\}.
\]
In this study, we set $\mathrm{PSNR} = 5$ and $10$, representing scenarios of high and moderate noise levels, respectively. We use this study to investigate the effect of the size of the point cloud, degree of spline polynomial, and triangulation. We also compare the proposed method with the traditional tensor product spline method \citep{Stone:94} and thin plate spline smoothing \citep{Wood:03}. We implement the tensor product spline and the thin plate spline smoothing using the function \texttt{gam} in R package \texttt{mgcv}. For a fair comparison, we set the dimension of the basis used to represent the underlying function for the thin plate spline and tensor product spline to be very similar to the dimension we used for the TPST method. As a result, the dimension after the numerosity reduction is comparable among different methods.

%%%%%%%%%%%%%%%%%%%%%%%%%%%%%%%%%%%%%%%%%%%%%%%%%%%%%%%%%%%%
\begin{figure}[htbp!]
\begin{center}
\begin{tabular}{ccccccc}
    \includegraphics[scale=0.28]{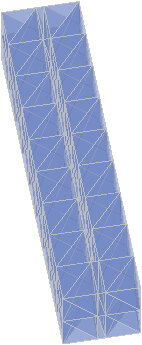} 
    &\includegraphics[scale=0.28]{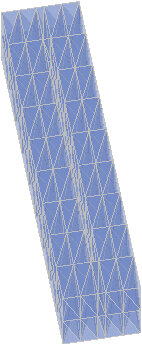}
    &\includegraphics[scale=0.14]{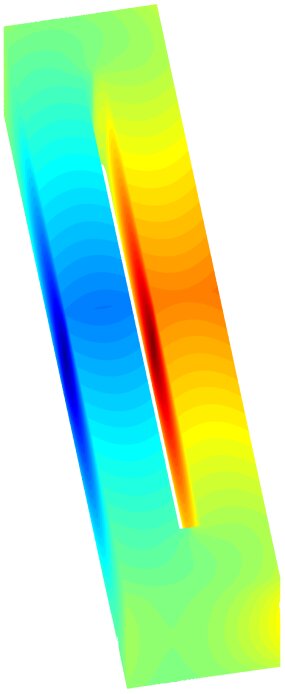} 
    &\includegraphics[scale=0.28]{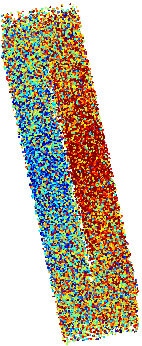}
    &\includegraphics[scale=0.14]{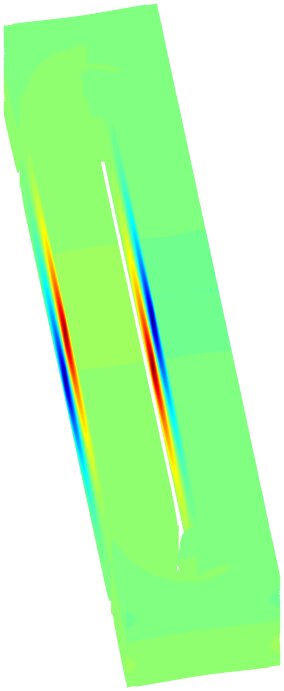}
    &\includegraphics[scale=0.28]{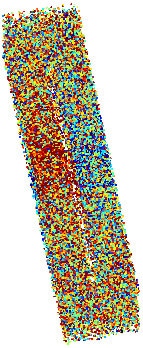}\\
    (a) &(b) &(c) &(d) &(e) &(f)\\
 
    \includegraphics[scale=0.22]{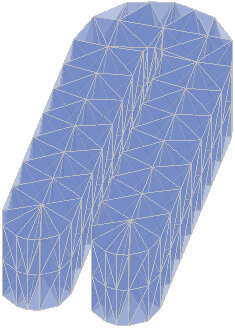} 
    &\includegraphics[scale=0.22]{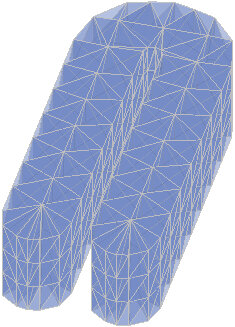}
    &\includegraphics[scale=0.11]{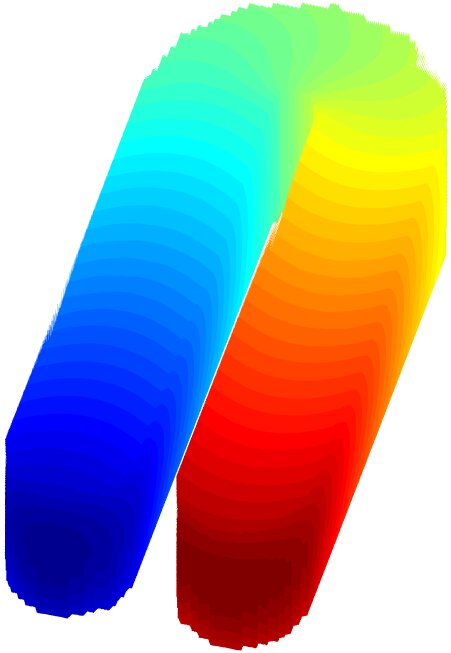} 
    &\includegraphics[scale=0.22]{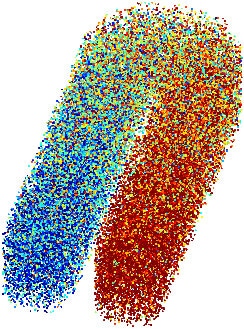}
    &\includegraphics[scale=0.11]{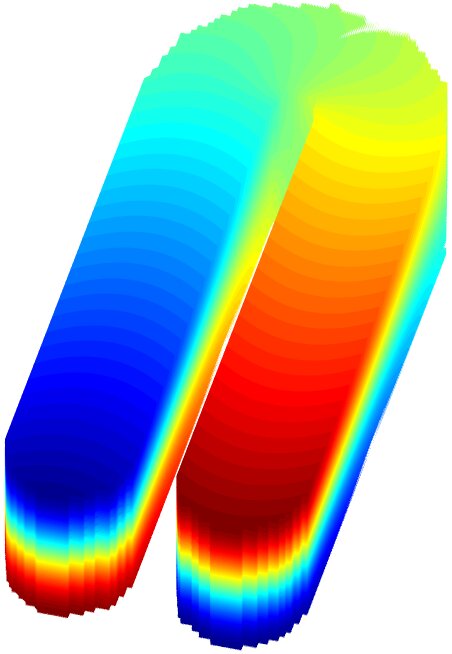}
    &\includegraphics[scale=0.22]{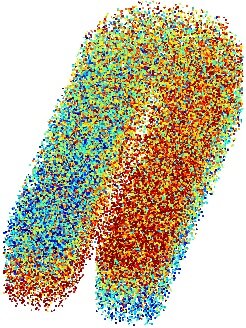}\\
    (g) &(h) &(i) &(j) &(k) &(l) 
\end{tabular} 
\end{center}
\caption{Triangulations, true functions and observed point clouds. Plots (a)--(f), are based on domain $\Omega_1$; Plots (g)--(l), are based on domain $\Omega_2$: (a) $\triangle_{11}$ for $\Omega_1$; (b) $\triangle_{12}$ for $\Omega_1$; (c) true function $m_1$; (d) observed point clouds with underlying function $m_1$; (e) true function $m_2$; (f) observed point clouds with underlying function $m_2$; (g) $\triangle_{21}$ for $\Omega_2$; (h) $\triangle_{22}$ for $\Omega_2$; (i) true function $m_3$; (j) observed point clouds with underlying function $m_3$; (k) true function $m_4$; (l) observed point clouds with underlying function $m_4$.}
\label{FIG:Surface_Eg23}
\end{figure}

%%%%%%%%%%%%%%%%%%%%%%%%%%%%%%%%%%%%%%%%%%%%%%%%%%%%%%%%%%%%
To evaluate the estimation and prediction performance of each method, we calculate the out-of-sample mean integrated squared error (MISE). Figure \ref{FIG:scatterplot_random} presents the average of the MISEs over 200 replications for all different scenarios. Based on Figure \ref{FIG:scatterplot_random}, one can observe that as the size of the point cloud or the PSNR increases, the estimation and prediction accuracy improves for all the methods. Regardless of the scenarios, the proposed TPST method outperforms the other two traditional techniques with a similar dimension of numerosity reduction. These results indicate that the proposed TPST can better handle the ``leakage" problem over the irregular domain than conventional smoothing methods.

%%%%%%%%%%%%%%%%%%%%%%%%%%%%%%%%%%%%%%%%%%%%%%%%%%%%%%%%%%%%%%%%%%%%%%
\begin{figure}[htbp!]
\begin{center}
\begin{tabular}{c}
	\includegraphics[scale = 0.46]{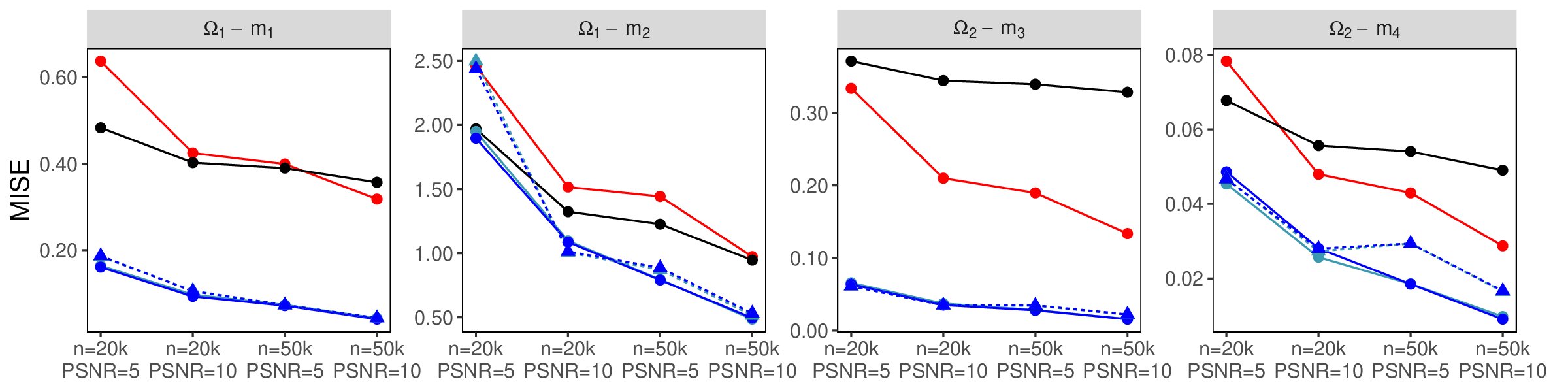}\\
	\includegraphics[scale = 0.8]{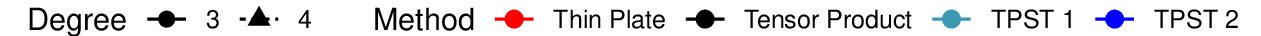}
\end{tabular} 
\end{center}
\caption{Average MISE plots for different methods on unstructured point clouds.}
\label{FIG:scatterplot_random}
\end{figure}

%%%%%%%%%%%%%%%%%%%%%%%%%%%%%%%%%%%%%%%%%%%%%%%%%%%%%%%%%%%%%%%%%%%%%%
To evaluate the effect of triangulations on the TPST, we consider a relatively coarse mesh $\triangle_{11}$ for $\Omega_1$ and $\triangle_{21}$ for $\Omega_2$, and a fine mesh $\triangle_{12}$ for $\Omega_1$ and $\triangle_{22}$ for $\Omega_2$. An illustration of these partitions is given in the Figure \ref{FIG:Surface_Eg23} (a), (b), (g), and (h), and the number of vertices and the number of tetrahedra are summarized in Table \ref{TAB:tetrahedron}. From Figure \ref{FIG:scatterplot_random}, we can see that for each domain, the TPST estimator has a similar performance based on the two partitions. As discussed in Section 5.3, for TPST, when the number of the tetrahedra is sufficiently large to capture the pattern and features of the underlying function, more delicate triangulations will not benefit the estimation or the data reduction. For example, for domain $\Omega_2$, when we keep $d=3$, the remaining dimensions after the numerosity reduction are 323 and 340 for the two different triangulations, respectively.

%%%%%%%%%%%%%%%%%%%%%%%%%%%%%%%%%%%%%%%%%%%%%%%%%%%%%%%%%%%%%%%%%%%%%%
\begin{table}[htbp!]
\centering
\caption{The number of tetrahedra (vertices) in each triangulation.}
\begin{tabular}{cc|cc}\hline\hline
$\Omega_1:~\triangle_{11}$ &$\Omega_1:~\triangle_{12}$ &$\Omega_2:~\triangle_{21}$ &$\Omega_2:~\triangle_{22}$\\ \hline
240 (120) &456 (180) &504 (207) & 756 (276)\\ \hline\hline
\end{tabular}
\label{TAB:tetrahedron}
\end{table}

%%%%%%%%%%%%%%%%%%%%%%%%%%%%%%%%%%%%%%%%%%%%%%%%%%%%%%%%%%%%%%%%%%%%%%
Furthermore, based on Figure \ref{FIG:scatterplot_random}, one can see that the difference between estimates with $d = 3$ and $d = 4$ are relatively small. When we use a larger $d$, the estimators are usually less biased but with larger variance and more computationally intensive. In this example, $d = 3$ is preferred as they are more efficient in data reduction. In general, the choice of $d$ depends on the smoothness of the underlying function, the strength of signals, and computing resources.

As discussed previously, the spline methods are computationally efficient since they provide a global estimator. In this simulation example, the TPST method takes less than twenty seconds to fit the model for most of the simulation samples on a single Intel E5-2640 v3 core. The speed is comparable to tensor product and thin plate spline estimators.

%%%%%%%%%%%%%%%%%%%%%%%%%%%%%%%%%%%%%%%%%%%%%%%%%%%%%%%%%%%%%
%%%%%%%%%%%%%%%%%%%%%%%%%%%%%%%%%%%%%%%%%%%%%%%%%%%%%%%%%%%%%%
\vskip .10in \noindent \textbf{6.2. Structured point clouds with missing data} \vskip .10in
\label{SSEC:eg2}

One of the critical aspects of analyzing point clouds data is handling the {unevenly distributed point clouds and/or} missing values. We use this example to investigate further the proposed method on a structured point cloud at different resolutions with different missing schemes. To be more specific, we mimic three types of missing schemes, including (i) complete data with no missing, (ii) missing at random, and (iii) missing in a contiguous block as well as at random. Figures \ref{FIG:MissTypes} illustrate different types of missing data we deal with in this example. Furthermore, we explore various missing rates under different missing mechanisms. For missing at random, we consider the missing rates ranging uniformly from 0 to 0.5, where 0 represents no missing voxels, and 0.5 means half of the voxels are missing. The contiguous block shown in Figure \ref{FIG:MissTypes} contains 12\% of the data.

%%%%%%%%%%%%%%%%%%%%%%%%%%%%%%%%%%%%%%%%%%%%%%%%%%%%%%%%%%%%%%
\begin{figure}[htbp!]
\begin{center}
\begin{tabular}{ccc}
	\includegraphics[scale=0.28]{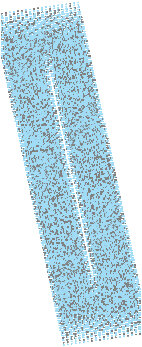}
	&\includegraphics[scale=0.28]{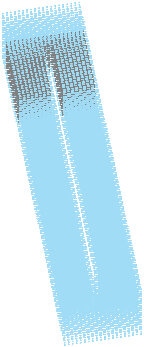}
	&\includegraphics[scale=0.28]{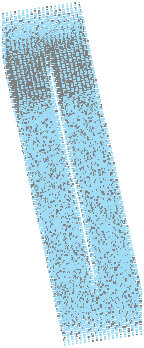}\\
	{(a) Missing at random} 
	& {(b) Missing in a block} 
	& {(c) Missing both} \\
\end{tabular} 
\end{center}
\caption{An illustration of three different missing patterns in point clouds. In (a) and (c), missing rate $ = 30\%$ and in (b), missing rate $ = 12\%$.}
\label{FIG:MissTypes}
\end{figure}

%%%%%%%%%%%%%%%%%%%%%%%%%%%%%%%%%%%%%%%%%%%%%%%%%%%%%%%%%%%%%%
Based on the first simulation example in Section 6.1, we can see that the performance for various methods is relatively consistent. Thus, we only consider the first domain, $\Omega_1$, and try the same triangulations in this example. We consider two fixed resolutions here: the lower resolution/scale is $60\times20\times20$ with 22,160 voxels falling within the domain, and the higher resolution/scale is $75\times25\times25$ with 42,600 voxels inside.

Similar to Section 6.1, we calculate the average MISEs over 200 replications with different missing types and missing rates and illustrated in Figure \ref{FIG:scatterplot_fix}. Based on this Figure, one can see that the prediction accuracy improves for all of the methods as the missing rate decreases. The proposed methods outperform the two traditional methods regardless of the type of missing scheme and the missing rate. Furthermore, the missing type does not affect the proposed method very much. In contrast, the thin plate spline smoothing usually performs better for missing at random, while the tensor product spline is better when missing a contiguous block and at random. 

%%%%%%%%%%%%%%%%%%%%%%%%%%%%%%%%%%%%%%%%%%%%%%%%%%%%%%%%%%%%%%%%%%%%%%
\begin{figure}[htbp!]
\begin{center}
\begin{tabular}{cccc}
	\multicolumn{4}{c}{Random Missing}\\ 
 	\multicolumn{4}{c}{\includegraphics[scale = 0.4]{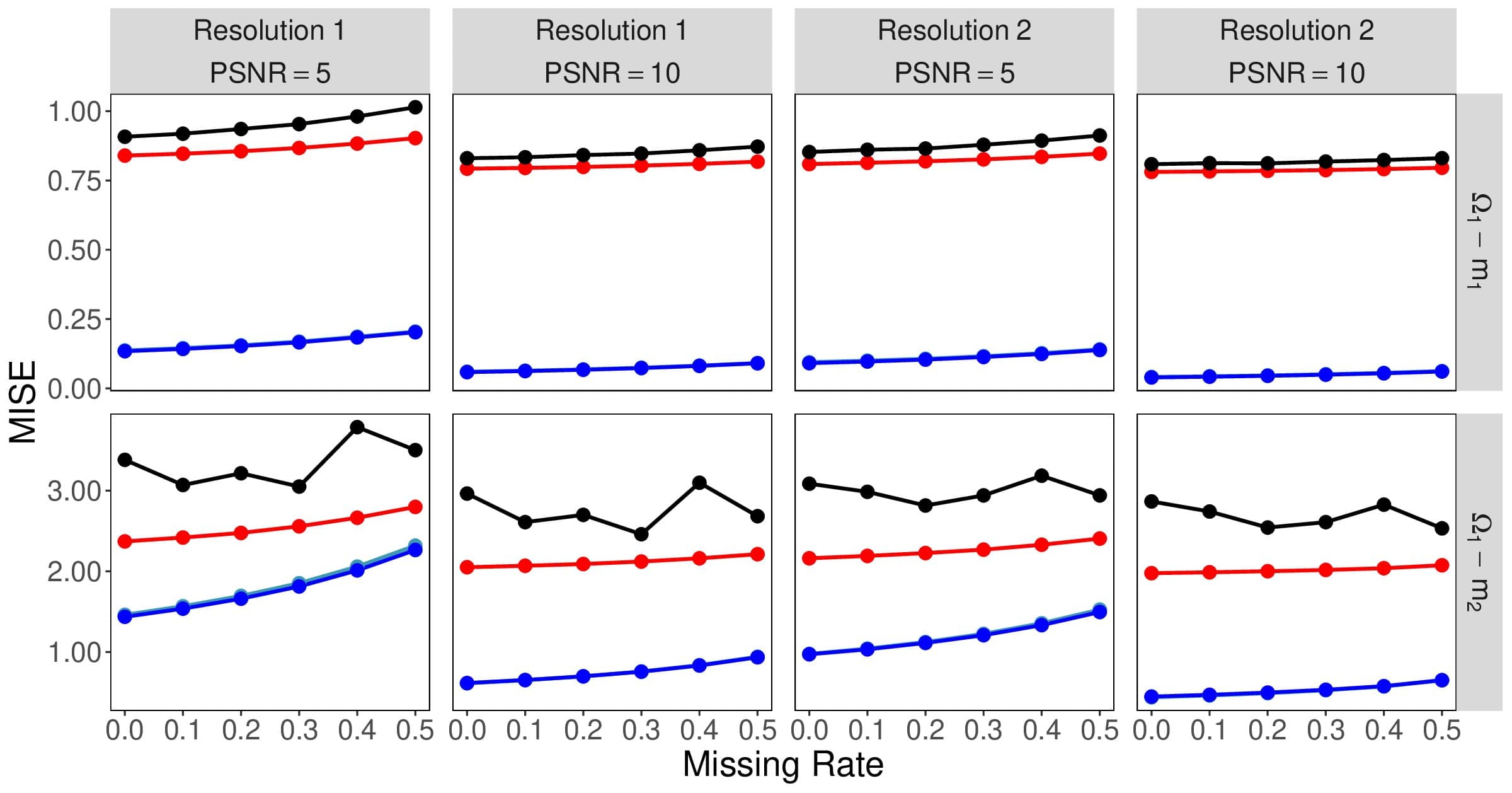}}\\
	\multicolumn{4}{c}{Tube+Random Missing}\\
 	\multicolumn{4}{c}{\includegraphics[scale = 0.4]{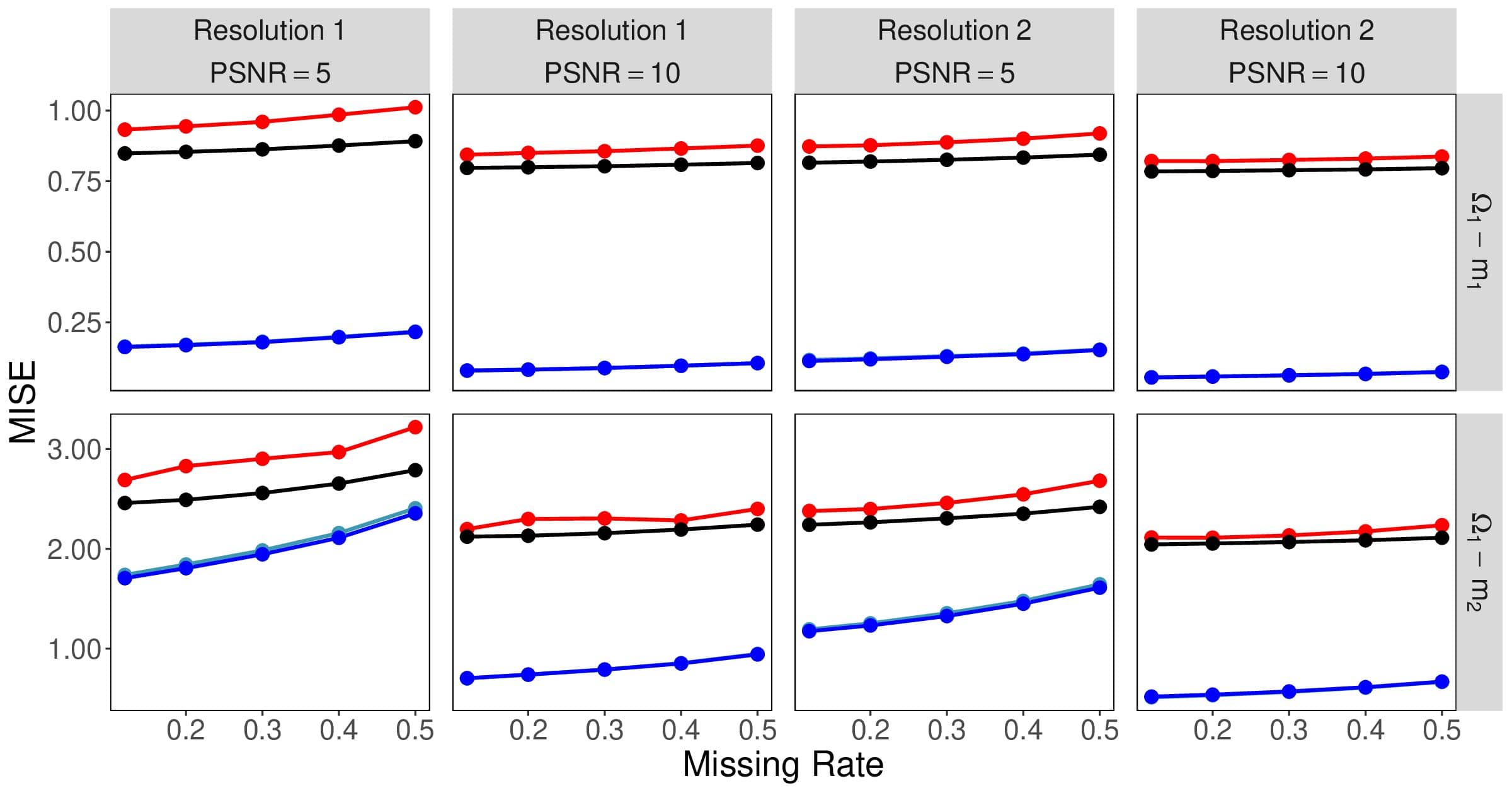}} \\
 \multicolumn{4}{c}{\includegraphics[scale = 0.65]{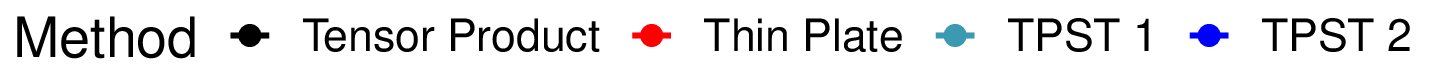}}
\end{tabular} 
\end{center} 
\caption{Average MISEs plots for different methods on structured point clouds with/without missing.}
\label{FIG:scatterplot_fix}
\end{figure}

%%%%%%%%%%%%%%%%%%%%%%%%%%%%%%%%%%%%%%%%%%%%%%%%%%%%%%%%%%%%%
%%%%%%%%%%%%%%%%%%%%%%%%%%%%%%%%%%%%%%%%%%%%%%%%%%%%%%%%%%%%%%
\vskip .10in \noindent \textbf{6.3. Standard error evaluation} \vskip .10in
\label{SSEC:egBoot}

To evaluate the standard error (SE) for the proposed TPST estimator, we conduct another computational experiment using structured points clouds generated from the 3D horseshoe domain ($\Omega_2$). We consider the model in (\ref{DEF:model}), where the underlying signal is illustrated in Figure \ref{FIG:Surface_Eg23} (i), and the random noises, $\epsilon_i=\sigma(\bs{p}_i)\varepsilon_i$'s, are generated with $\sigma(\bs{p}_i)=6 - \{(x_i - 1.25)^2 + y_i^2 + z_i^2\}$ and $\varepsilon_i\sim N(0,1)$ iid. We consider a fixed solution/scale of $101\times65\times17$ with $93,449$ voxels falling within the domain.

To quantify the uncertainty of the estimator, we generate 200 replications as in Sections 6.1 and 6.2. For each replication, we calculate the bootstrap SEs over $100$ bootstrap samples using the wide bootstrap method introduced in Section 5.4. We then compute the the mean ($\mathrm{SE}_{\text{mean}}$) and median ($\mathrm{SE}_{\text{median}}$) of these bootstrap SEs across 200 replications. Additionally, we calculate the standard deviation of the TPST estimator based on the Monte Carlo samples ($\mathrm{SE}_{\text{mc}}$), which is served as the true value for SE. The results are displayed in Figure \ref{FIG:SE} (a) and (b) for $\mathrm{SE}_{\text{mean}}$ and $\mathrm{SE}_{\text{median}}$, respectively, and in Figure \ref{FIG:SE} (c) for $\mathrm{SE}_{\text{mc}}$. From these plots, one sees that the $\mathrm{SE}_{\text{mean}}$ and $\mathrm{SE}_{\text{median}}$ is very close to $\mathrm{SE}_{\text{mc}}$, which verifies the accuracy of the proposed bootstrap SE estimation method.

%%%%%%%%%%%%%%%%%%%%%%%%%%%%%%%%%%%%%%%%%%%%%%%%%%%%%%%%%%%%
\begin{figure}[htbp!]
\begin{center}
\begin{tabular}{ccc}
    \includegraphics[width = 1in]{./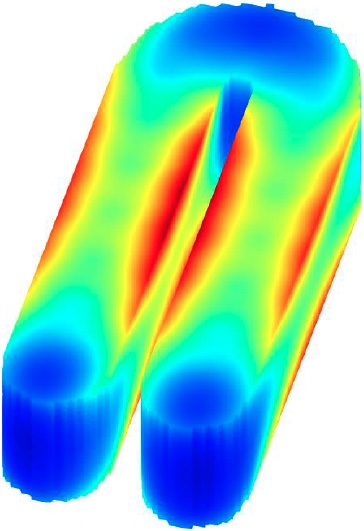} 
    &\includegraphics[width = 1in]{./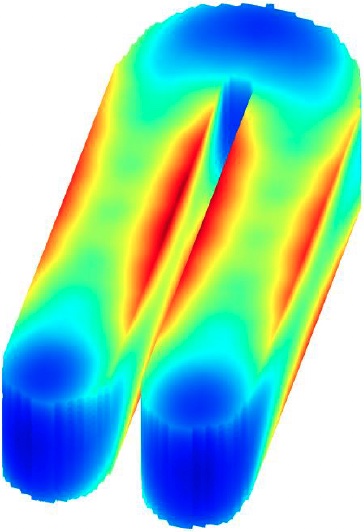}
    &\includegraphics[width = 1in]{./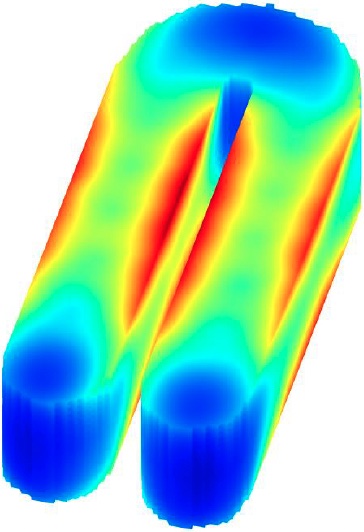}\\
    (a) $\mathrm{SE}_{\text{mean}}$ 
        & (b) $\mathrm{SE}_{\text{median}}$
        & (c) $\mathrm{SE}_{\text{mc}}$
\end{tabular} 
\end{center} 
\caption{{Standard error plot of the proposed TPST estimators: (a) and (b) are the mean ($\mathrm{SE}_{\text{mean}}$) and median ($\mathrm{SE}_{\text{median}}$) of these bootstrap SEs across 200 replications, respectively; and (c) shows the standard deviation of the TPST estimator based on the 200 Monte Carlo samples ($\mathrm{SE}_{\text{mc}}$).}}
\label{FIG:SE}
\end{figure}

%%%%%%%%%%%%%%%%%%%%%%%%%%%%%%%%%%%%%%%%%%%%%%%%%%%%%%%%%%%%%
%%%%%%%%%%%%%%%%%%%%%%%%%%%%%%%%%%%%%%%%%%%%%%%%%%%%%%%%%%%%%%
\vskip .10in \noindent \textbf{6.4. Biomedical imaging analysis} \vskip .10in
\label{SSEC:eg3}

In our increasingly aging societies, Alzheimer's disease (AD) has become the most frequent cause of dementia. Much progress has been made in assisting the early diagnosis of AD with neuroimaging techniques. One widely used neuroimaging technique is PET imaging, which can also be considered as an example of structured point clouds. However, the traditional PET scanning technique limits the overall resolution of the brain image, and there is a lack of effective and efficient image reconstruction methods. In this example, we apply the proposed TPST method to denoise and enhance the resolution of an actual PET image while improving the storage efficiency. A visual representation of transverse, coronal, and sagittal planes is shown in Figure \ref{FIG:app} (a). {Based on this figure, we can observe that brain images are collected on a rectangular parallelepiped grid with a dimension of $79 \times 95 \times 66$. However, accurate signals are only present within the voxels of the human brain. As traditional methods, such as tensor product spline and thin plate spline, cannot handle complex domains, we train the model using all the voxels within the entire image. In contrast, our proposed TPST method can manage complex domains efficiently. Therefore, we use a brain mask to determine the domain of the human brain, resulting in 280,000 voxels within the human brain domain. These masks can be obtained by many methods, such as building a PET template by averaging normalized PET images or testing the quantile value of each voxel to exceed a given threshold.}

The results from Section 6.1 suggest that the choice of triangulation has a minimal effect on the performance of the TPST estimator as long as the number of tetrahedra is sufficient to capture the underlying pattern. In our implementation of the TPST method for brain imaging analysis, we use the triangulation depicted in Figure \ref{FIG:app} (b) that consists of 317 tetrahedra and 117 vertices. To effectively capture high-frequency oscillations present in human brain scans, we set the polynomial degree $d$ to at least 4. {Additionally, we investigate the impact of different levels of global smoothness by considering $r=0$ and $r=1$. Figure \ref{FIG:app} (c) shows an estimated PET image based on the settings $d=5$ and $r=1$, demonstrating the ability of the proposed TPST estimator to denoise the image while preserving the overall brain structure. To quantify the uncertainty of our estimations, we compute the bootstrap standard error as detailed in Section 5.4, and the logarithmic (base 10) transformation of the standard error map is presented in Figure \ref{FIG:app} (d).}

To compare the estimation accuracy, we consider four different measures: (i) root mean squared error, RMSE $= n^{-1/2}\{\sum_{i=1}^n (\widehat{W}_i-W_i)^2\}^{1/2}$, (ii) mean absolute error, MAE $= n^{-1}\sum_{i=1}^n |\widehat{W}_i - W_i|$, (iii) peak signal-to-noise ratio, PSNR $= 20\log_{10} \left\{\max (W_i)/\mathrm{RMSE}\right\}$, and (iv) mean relative absolute error, MRAE $= n^{-1}\sum_{i=1}^n |\widehat{W}_i - W_i|/W_i$. The results of the TPST method are compared to those of other smoothing techniques, such as Thin Plate smoothing and Tensor Product spline. Similar to simulation studies in previous sections, we implement the tensor product spline and the thin plate spline smoothing using the function \texttt{gam} in R package \texttt{mgcv}. Since the Thin Plate and Tensor Product methods could be affected by the noises outside the actual brain, we calculate all the measurements based on the voxels within the brain domain to ensure a fair comparison. The estimation and numerosity reduction results of different methods are shown in Table \ref{TAB:app}.

If a lower target for data reduction is acceptable, smoothing methods can achieve higher levels of accuracy. However, this comes at the cost of a larger basis expansion size. Conventional Thin Plate and Tensor Product splines, due to their specific functional form, may result in a high-dimensional representation of the data, particularly in the presence of significant noise or random fluctuations. This can increase the risk of overfitting, as the model may become overly complex and struggle to accurately capture the underlying patterns in the data. Additionally, the high dimensionality can result in memory constraints and make it difficult to process larger datasets. The TPST approach, however, addresses these limitations by providing better control over local variations in the data, resulting in a more compact representation compared to Thin Plate and Tensor Product splines. In this example, we cannot implement the Thin Plate and Tensor Product splines methods when the dimension of the spline basis exceeds 2500 on a regular PC.  As demonstrated in Table~\ref{TAB:app}, the TPST method, in contrast, is better equipped to deal with data with complex shapes or patterns due to its unique combination of sparsity, local control, adaptivity, and computational efficiency.

%%%%%%%%%%%%%%%%%%%%%%%%%%%%%%%%%%%%%%%%%%%%%%%%%%%%%%%%%%%%
\begin{table}[htbp]
\caption{Estimation and numerosity reduction results.}
\label{TAB:app}
\begin{center}
\begin{tabular}{crrrrr}\hline \hline
Method &RMSE &MAE &PSNR &MRAE &Dimension\\ \hline
Thin Plate &{0.0906} &{0.0706} &{26.2514} &{0.0917} &2000\\ 
Tensor Product &{0.0945} &{0.0740} &{25.8799} &{0.0957} &2197\\ 
TPST ({$d = 5, r = 1$}) &0.0860 &0.0656 &26.7026 &0.0877 &2212\\ 
{TPST ($d = 4, r = 0$)} &{0.0647} &{0.0490} &{29.1734} &{0.0652} &{4247} \\
TPST ({$d = 6, r = 1$}) &0.0539 &0.0405 &30.7608 &0.0522 &4859\\ 
{TPST ($d = 5, r = 0$)} & {0.0461} & {0.0339} & {32.1176} & {0.0444} & {7941} \\
\hline\hline
\end{tabular}
\end{center}
\end{table}

%%%%%%%%%%%%%%%%%%%%%%%%%%%%%%%%%%%%%%%%%%%%%%%%%%%%%%%%%%%%
\begin{figure}[htbp!]
\begin{center}
\begin{tabular}{cc}
    \includegraphics[scale = 0.25]{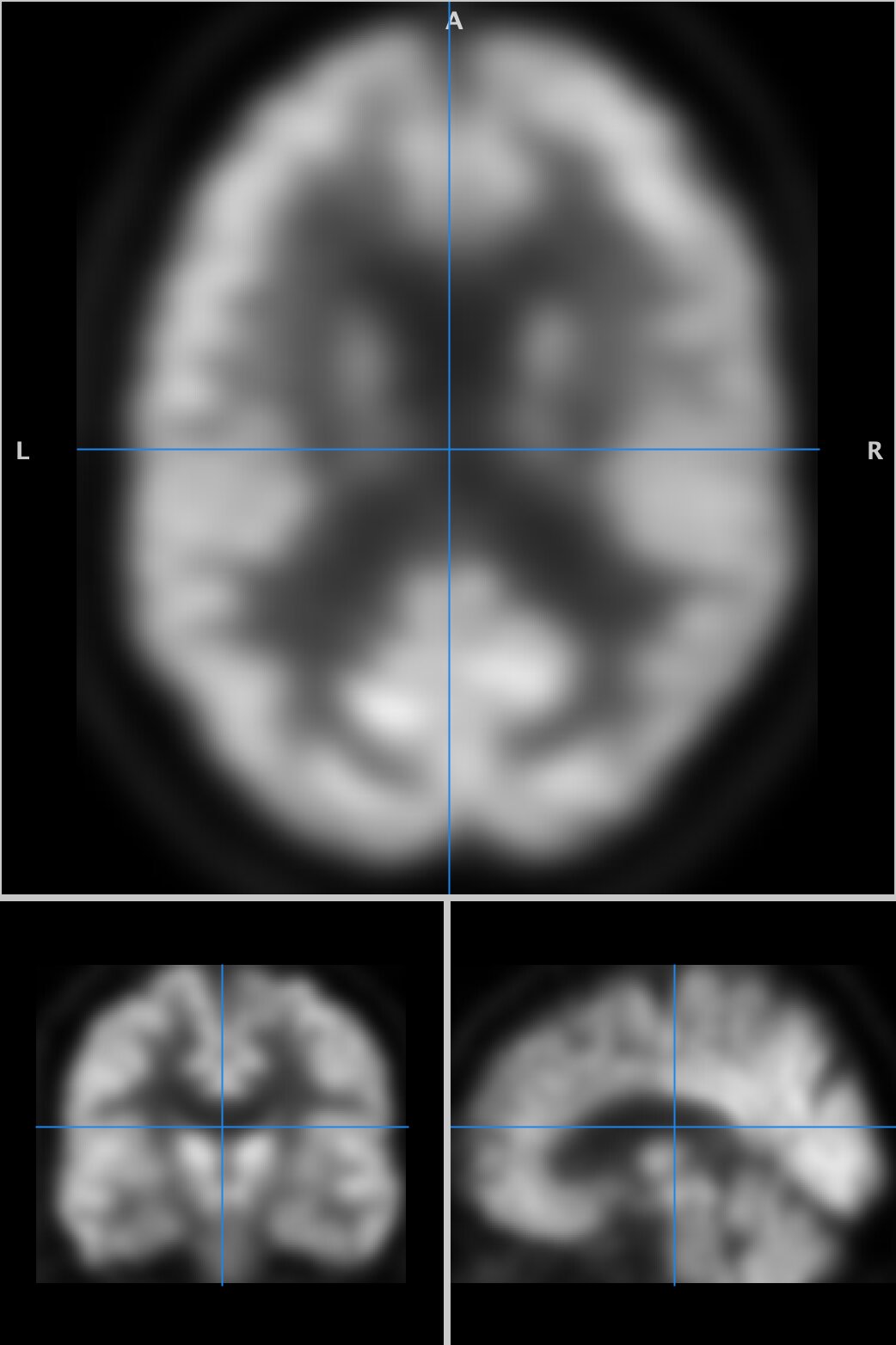} 
     & \includegraphics[scale = 0.25]{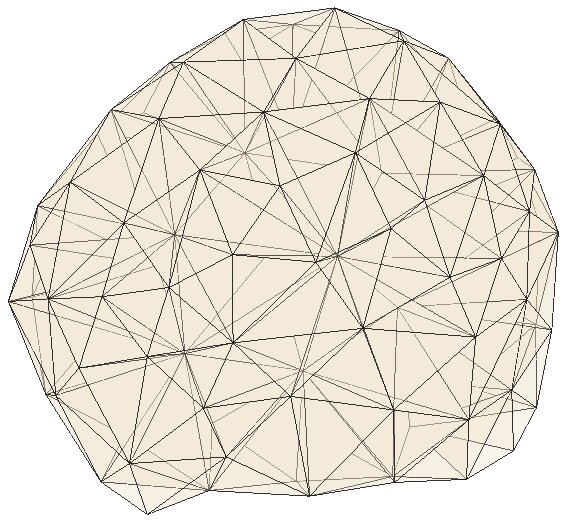} \\
     (a) Observed PET image & (b) Triangulation \\
     \includegraphics[scale = 0.1]{./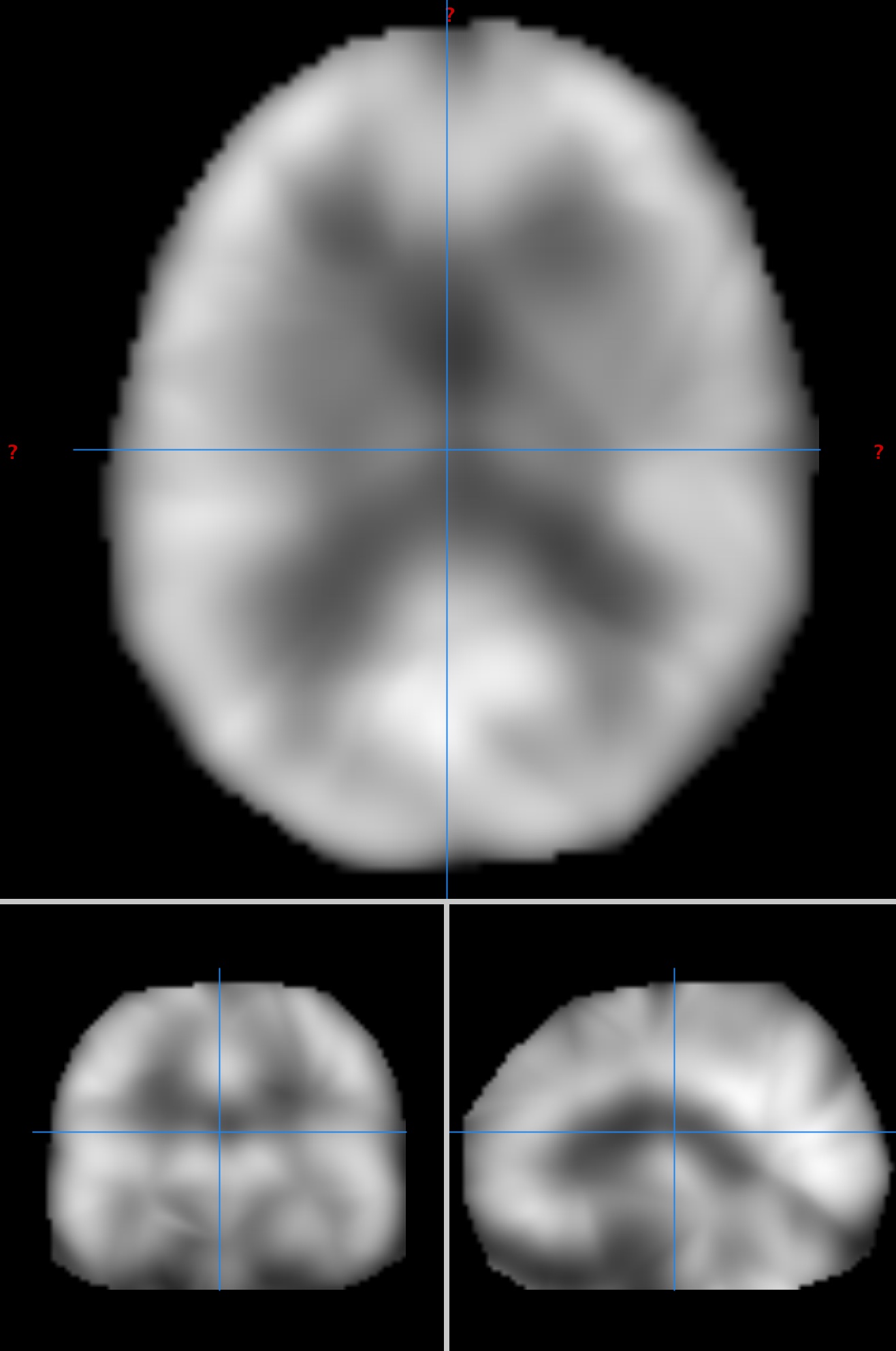}
     & \includegraphics[scale = 0.1]{./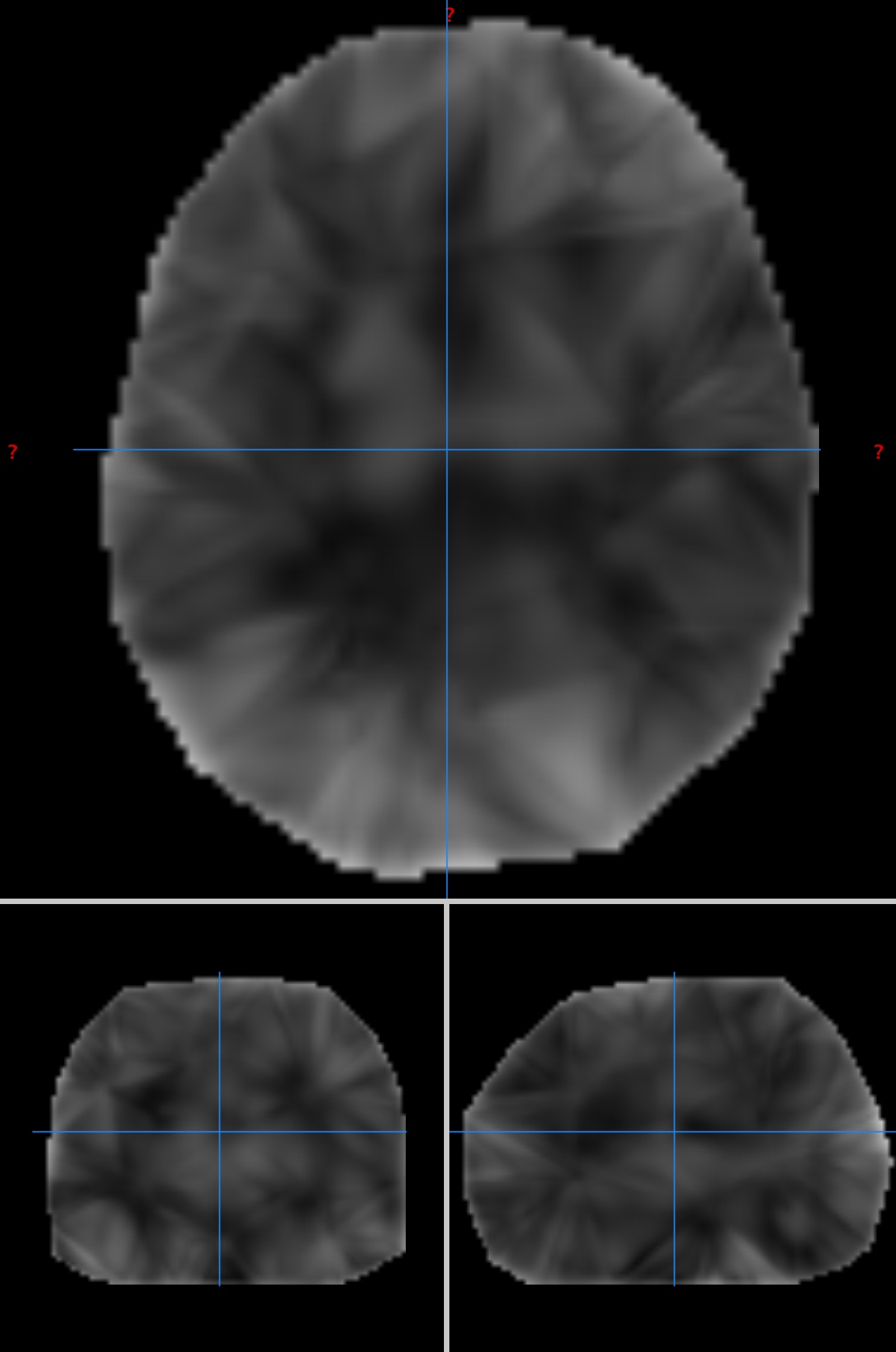}\\
	{(c) Estimated PET image}  & {(d) Bootstrap standard error map (logarithm with base 10)}
\end{tabular} 
\end{center} 
\caption{{An illustration of an observed PET image, triangulations of the brain domain, estimated PET image, and corresponding bootstrap standard error map.}}
\label{FIG:app}
\end{figure}

%%%%%%%%%%%%%%%%%%%%%%%%%%%%%%%%%%%%%%%%%%%%%%%%%%%%%%%%%%%%%
%%%%%%%%%%%%%%%%%%%%%%%%%%%%%%%%%%%%%%%%%%%%%%%%%%%%%%%%%%%%%
%%%%%%%%%%%%%%%%%%%%%%%%%%%%%%%%%%%%%%%%%%%%%%%%%%%%%%%%%%%%%
\vskip 0.1in  \noindent \textbf{7. Conclusions and discussion} \vskip 0.1in
\label{SEC:conclusions}

Challenges in handling and analyzing the irregularly-shaped 3D point cloud data motivated the advanced statistical methods to uncover the underlying trajectory over the point clouds. Unlike conventional smoothing methods, the proposed TPST methods are proved to perform well for complex objects. Moreover, the proposed approaches are able to handle the irregularly and regularly missing data problem, effectively denoise or deblur the data while preserving inherent geometric features or spatial structures, and providing multi-resolution reconstruction. The experimental results demonstrate the effectiveness of the proposed approaches compared to existing smoothing techniques.

Modern geoinformation technologies have provided a variety of feasible means for generating point clouds with billions of points. The enormous size of point clouds data always leads to serious time and memory consumption problems in data processing, storage, visualization, and transmission. Two approaches are currently considered as the most promising ones: parallel computing and data reduction. As a numerosity reduction method, the proposed method is an efficient data reduction technique which replaces the original big cloud point by a much smaller form of data representation. On the hand other, parallel computing is very appealing to address such ``big" data issues. Unlike tensor product spline or thin-plate spline, one unique feature of the proposed TPST is its great scalability in computing. Specifically, the spline basis function is generated restricted to each tetrahedron without any overlap, and the smoothness is achieved only by the constraints on the spline coefficients. Based on this feature, we plan to develop a parallel computing algorithm based on domain decomposition in future work.

%%%%%%%%%%%%%%%%%%%%%%%%%%%%%%%%%%%%%%%%%%%%%%%%%%%%%%%%%%%%%
%%%%%%%%%%%%%%%%%%%%%%%%%%%%%%%%%%%%%%%%%%%%%%%%%%%%%%%%%%%%%
%%%%%%%%%%%%%%%%%%%%%%%%%%%%%%%%%%%%%%%%%%%%%%%%%%%%%%%%%%%%%
%%%%%%%%%%%%%%%%%%%%%%%%%%%%%%%%%%%%%%%%%%%%%%%%%%%%%%%%%%%%%
% \section*{Acknowledgements}

\vskip 0.1in  \noindent \textbf{Acknowledgements} \vskip 0.1in

We express our sincere gratitude to the reviewers, the associate editor, and the action editor for their insightful comments and suggestions that greatly improved the quality of our paper. Research reported in this publication was partially supported by the National Institute Of General Medical Sciences of the National Institutes of Health under Award Number P20GM139769 (Xinyi Li), National Science Foundation awards DMS-2210658 (Xinyi Li) and DMS-2203207 (Li Wang), and Simons Foundation Mathematics and Physical Sciences-Collaboration Grant for Mathematicians \#864439 (Ming-Jun Lai) and \#963447 (Guannan Wang). The content is solely the responsibility of the authors and does not necessarily represent the official views of the National Institutes of Health.
The investigators within the ADNI contributed to the design and implementation of ADNI and/or provided data but did not participate in analysis or writing of this report. A complete listing of ADNI investigators can be found at \url{http://adni. loni.usc.edu/wp-content/uploads/how_to_apply/ADNI_Acknowledgement_List.pdf}.

%%%%%%%%%%%%%%%%%%%%%%%%%%%%%%%%%%%%%%%%%%%%%%%%%%%%%%%%%%%%%
%%%%%%%%%%%%%%%%%%%%%%%%%%%%%%%%%%%%%%%%%%%%%%%%%%%%%%%%%%%%%
%%%%%%%%%%%%%%%%%%%%%%%%%%%%%%%%%%%%%%%%%%%%%%%%%%%%%%%%%%%%%
%%%%%%%%%%%%%%%%%%%%%%%%%%%%%%%%%%%%%%%%%%%%%%%%%%%%%%%%%%%%%
\newpage

\vspace{0.8pc} \centerline{\large Supplemental Materials for \bf ``Nonparametric Regression for 3D Point Cloud Learning''}

%%%%%%%%%%%%%%%%%%%%%%%%%%%%%%%%%%%%%%%%%%%%%%%%%%%%%%%%%%%%%
%%%%%%%%%%%%%%%%%%%%%%%%%%%%%%%%%%%%%%%%%%%%%%%%%%%%%%%%%%%%%
%%%%%%%%%%%%%%%%%%%%%%%%%%%%%%%%%%%%%%%%%%%%%%%%%%%%%%%%%%%%%
\fontsize{12}{14pt plus.8pt minus .6pt}\selectfont
\vskip 0.1in  \noindent \textbf{A. Detailed introduction of trivariate splines on tetrahedra} \vskip 0.1in % \label{App:1}
\renewcommand{\thetheorem}{{\sc A.\arabic{theorem}}}
\renewcommand{\thelemma}{{\sc A.\arabic{lemma}}} %
\renewcommand{\thecorollary}{{\sc A.\arabic{corollary}}}
\renewcommand{\theequation}{A.\arabic{equation}} 
\renewcommand{\theproposition}{A.\arabic{proposition}} 
\renewcommand{\thefigure}{A.\arabic{figure}} 
\renewcommand{\thetable}{A.\arabic{table}} 
\renewcommand{\thedefinition}{A.\arabic{definition}} 
\renewcommand{\theremark}{A.\arabic{remark}} 
\renewcommand{\theexample}{A.\arabic{example}} 
\renewcommand{\thesubsection}{A.\arabic{subsection}}
\setcounter{equation}{0}
\setcounter{theorem}{0}
\setcounter{lemma}{0}
\setcounter{figure}{0}
\setcounter{remark}{0}
\setcounter{proposition}{0}
\setcounter{subsection}{0}
\label{SEC:spline}

In this section, we give the detailed discussion of the trivariate splines on tetrahedra, along with the introduction for some theoretical properties. In Section A.1, we first introduce the barycentric coordinates associated with a tetrahedron and show that a trivariate polynomial can be written in a convenient form using the barycentric coordinates. In Section A.2, we describe the directional derivatives of a polynomial, and smoothness conditions for polynomials on adjoining triangular face. The introduction of trivariate splines on a triangulation is given in Section 2.3 in main part.

Section A.3 illustrates an example of the constraint matrix.
The technical details of the conclusions in Section A.2 were given in Section A.4.
Sections A.5 and A.6 give a detailed introduction and induction for the directional derivatives for basis functions and penalty matrix, respectively.

%%%%%%%%%%%%%%%%%%%%%%%%%%%%%%%%%%%%%%%%%%%%%%%%%%%%%%%%%%%%%%
%%%%%%%%%%%%%%%%%%%%%%%%%%%%%%%%%%%%%%%%%%%%%%%%%%%%%%%%%%%%%%
\vskip .10in \noindent \textbf{A.1. Barycentric coordinates and Bernstein basis polynomials} \vskip .10in
\label{SUBSEC:bbp}

Given a tetrahedron $T = \langle\bs{v}_1, \bs{v}_2, \bs{v}_3, \bs{v}_4\rangle$, any fixed point $\bs{p} := (x,y,z) \in \mathbb{R}^3$ has a unique representation in terms of $\langle\bs{v}_1,\bs{v}_2,\bs{v}_3,\bs{v}_4\rangle$,
\[
	\bs{p}=b_1\bs{v}_1 + b_2\bs{v}_2 + b_3\bs{v}_3 + b_4\bs{v}_4,~~ \text{ with } b_1+ b_2 + b_3 + b_4 = 1,
\]
where $(b_1,b_2,b_3,b_4)$ are called the \textsf{barycentric coordinates} of $\bs{p}$ \textsf{relative to the tetrahedron} $T$. When the point $\bs{p}$ is inside or on the faces of $T$, all $b_1$, $b_2$, $b_3$ and $b_4$ are nonnegative. By Cramer's rule, the barycentric coordinate corresponding to vertex $\bs{v}_{\imath}$, satisfies $b_{\imath} = {\mathrm{det}(\mathbf{M}_{\imath})}/{\mathrm{det}(\mathbf{M})}$, $\imath = 1, \ldots, 4$, where 
\[
	\mathbf{M}:= 
	\begin{pmatrix}
		1 & 1 & 1 & 1 \\
		x_1 & x_2 & x_3 & x_4 \\
		y_1 & y_2 & y_3 & y_4 \\
		z_1 & z_2 & z_3 & z_4 
	\end{pmatrix}
\] 
and $\mathbf{M}_{\imath}$ replaces $\mathbf{M}$'s $\imath$-th column with $(1\ x\ y\ z)^\top$.

The barycentric coordinates $(b_1,b_2,b_3,b_4)$ also have an interesting geometric interpretation. As shown in Figure \ref{Fig:Tri}, for any $\bs{p}\in T$, it divides the tetrahedron $T$ to four sub-tetrahedra, $T_1$, $T_2$, $T_3$ and $T_4$. Notice that $\mathrm{det}(\mathbf{M})=6V_T$, then $b_{\imath}=V_{T_{\imath}}/V_T$, where $T_{\imath}$ replace the vertex $\bs{v}_{\imath}$ in $V_T$ with $\bs{p}$, $\imath = 1,\ldots,4$.
\begin{figure}[htbp]
\begin{center}
\scalebox{0.85}{
\begin{tabular}{cc}
	\includegraphics[scale=0.08]{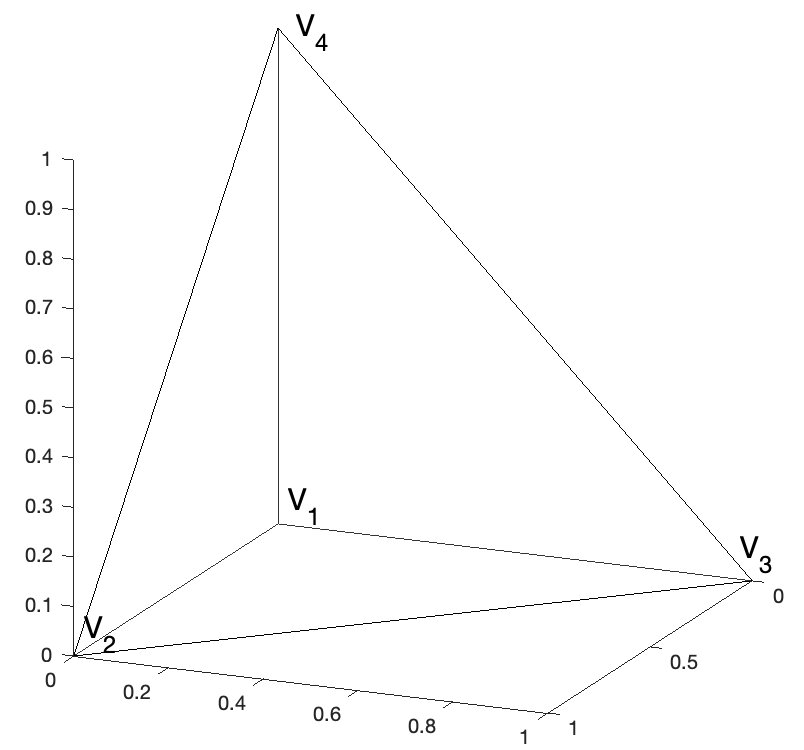}  ~~~&~~~
	\includegraphics[scale=0.2]{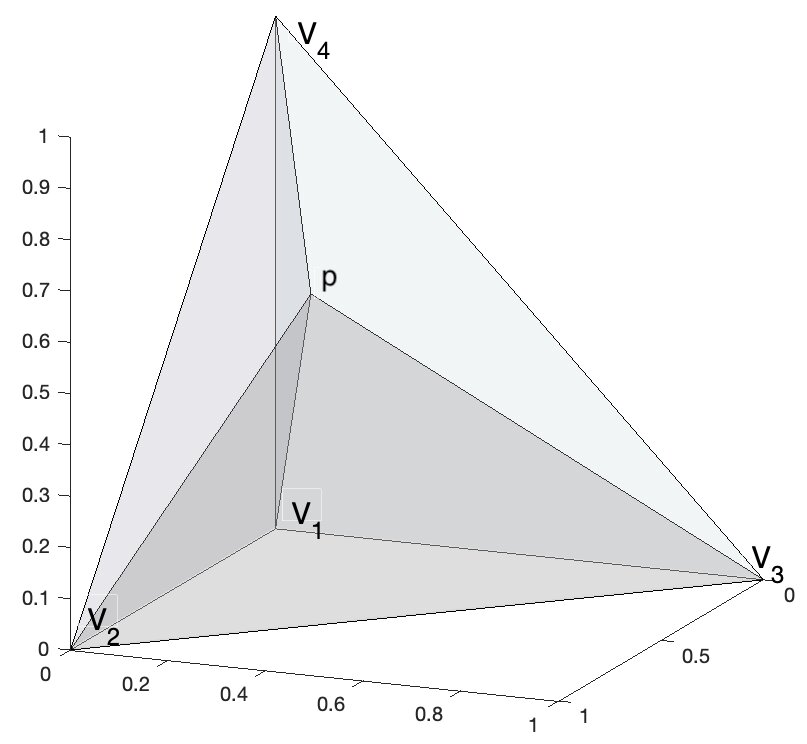} \\
\end{tabular}}
\end{center} 
\caption{An illustration of barycentric coordinates of a point $\bs{p}$ in a tetrahedron $T$. 
\label{Fig:Tri}}
\end{figure}

For a nondegenerate tetrahedron $T$ and a point $\bs{p}\in T$ with barycentric coordinates $(b_1, b_2, b_3, b_4)$, for nonnegative integers $i$, $j$, $k$, $l$ with $i+j+k+l=d$, define \textsf{trivariate Bernstein basis polynomial of degree} $d$ \textsf{relative to} $T$ as
\[
	B_{ijkl}^{d,T}(\bs{p}):=\frac{d!}{i!j!k!l!}b_1^i b_2^j b_3^k b_4^l,\text{ with } i + j + k + l = d.
\]
For any positive integer $d$ and tetrahedron $T$, let $\mathcal{P}_{d}(T)$ be the space of all trivariate polynomials defined on $T$ with degrees less than or equal to $d$.
Note that the dimension of $\mathcal{P}_{d}(T)$ is $\binom{d+3}{3}$.

According to Theorem 15.8 in \cite{Lai:Schumaker:07} and Lemma \ref{LEM:basis} in Appendix Section A.4.1, the set of Bernstein basis polynomials
\[
	\bs{B}^d_T(\bs{p}):=\{B_{ijkl}^{d,T}(\bs{p}): i,j,k,l\geq0, i+j+k+l=d\}
\]
forms a basis for the space of polynomials $\mathcal{P}_{d}(T)$. In addition, Bernstein basis functions $\{B_{ijkl}^{d, T}(\bs{p})\}_{i+j+k+l=d}$ have the following properties:
\begin{enumerate}%[(i)]
	\item $\{B_{ijkl}^{d, T}\}$ form a partition of unity, i.e., for all $\bs{p}\in T$, $\sum_{i+j+k+l=d}B_{ijkl}^{d, T}(\bs{p})=1$;
	\item  for all $\bs{p}\in T$, $0\leq B_{ijkl}^{d, T}(\bs{p})\leq1$;
	\item $B_{ijkl}^{d, T}$ has a unique maximum at the point $d^{-1}(i\bs{v}_1+j\bs{v}_2+k\bs{v}_3+l\bs{v}_4)$.
\end{enumerate}
%\end{theorem}

\begin{remark}
Barycentric coordinates are invariant to linear transformations of Cartesian coordinates, that is, they do not depend on the orientation or location of the tetrahedra. Consequently, the trivariate splines based on Bernstein basis polynomials, which are constructed with barycentric coordinates, are also invariant to linear transformations.
\end{remark}

To further illustrate the Bernstein basis functions, we present an example of $\{B_{ijkl}^{d, T}(\bs{p})\}_{i+j+k+l=d}$ for $d=4$ in Figure \ref{Fig:Bd3}. In this example, there are $\binom{7}{3}=35$ basis functions in total. Note that all the function values vary between $0$ and $1$, and the colors scale to the quantiles of function values.

\begin{figure}[htbp]
\begin{center}
\scalebox{0.86}{
\begin{tabular}{cccccc}
	\includegraphics[scale=0.13]{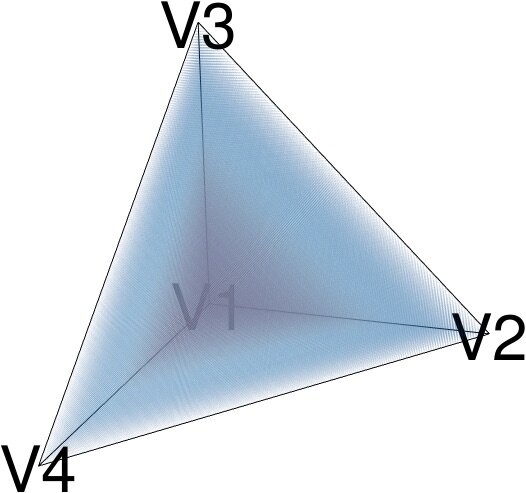} 
	\!\!\!\!&\!\!\!\!\includegraphics[scale=0.13]{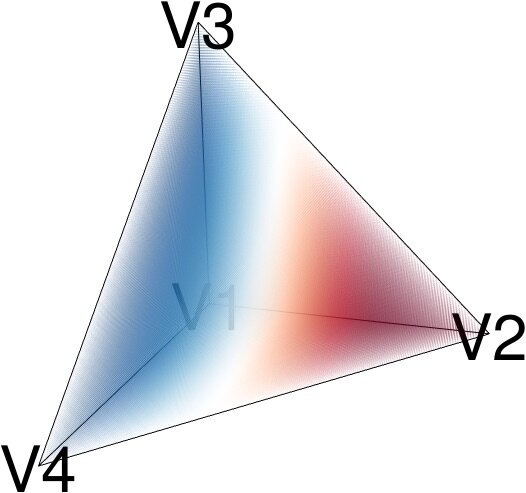} 
	\!\!\!\!&\!\!\!\!\includegraphics[scale=0.13]{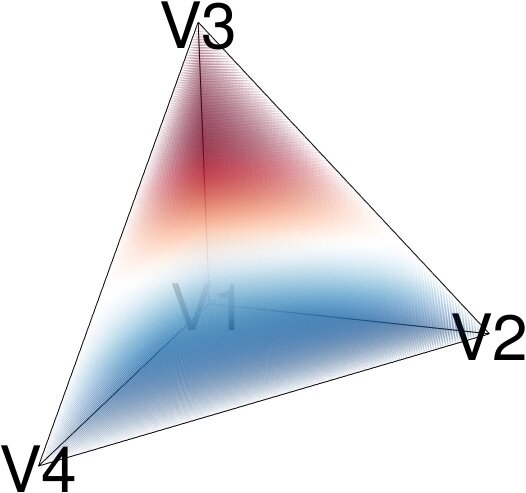} 
	\!\!\!\!&\!\!\!\!\includegraphics[scale=0.13]{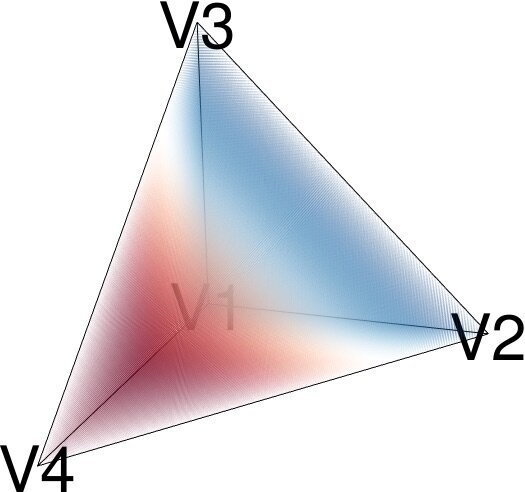} 
	\!\!\!\!&\!\!\!\!\includegraphics[scale=0.13]{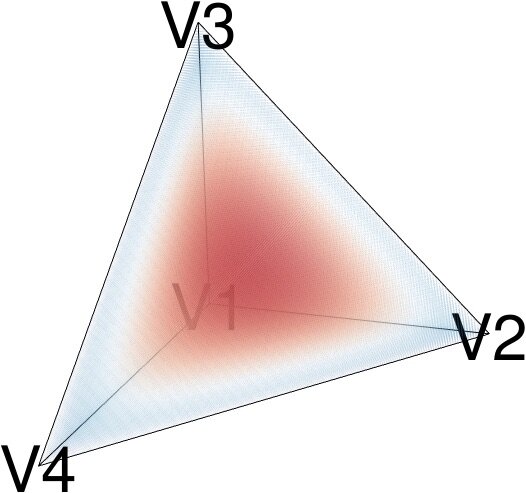}
	\!\!\!\!\!&\!\!\!\!\!\includegraphics[height=0.7in, width=0.65in]{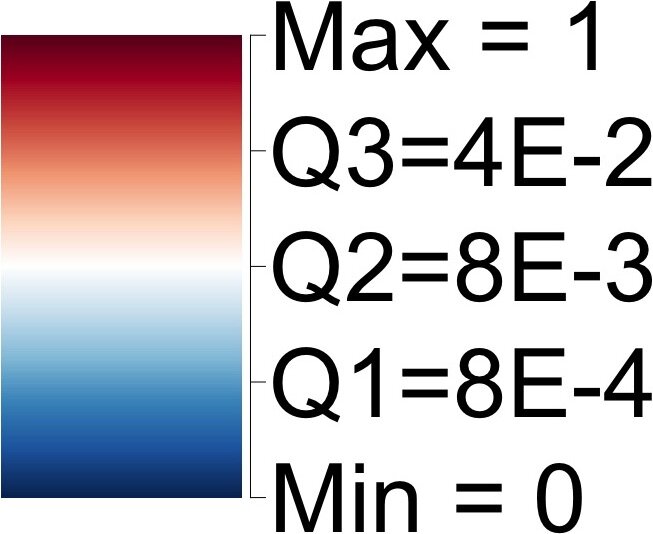}\\
	{\footnotesize $B_{4000}^{4, T}=b_1^4$} 
	\!\!\!\!\!&\!\!\!\!\! {\footnotesize $B_{0400}^{4, T}=b_2^4$}
	\!\!\!\!\!&\!\!\!\!\! {\footnotesize $B_{0040}^{4, T}=b_3^4$} 
	\!\!\!&\!\!\! {\footnotesize $B_{0004}^{4, T}=b_4^4$} 
	\!\!\!\!\!&\!\!\!\!\!\!\! {\footnotesize $B_{1111}^{4, T}=24b_1b_2b_3b_4$} 
	\!\!\!\!\!\!\!&\!\!\!\!\! {\footnotesize Color bar}  \\[3pt]
	\includegraphics[scale=0.13]{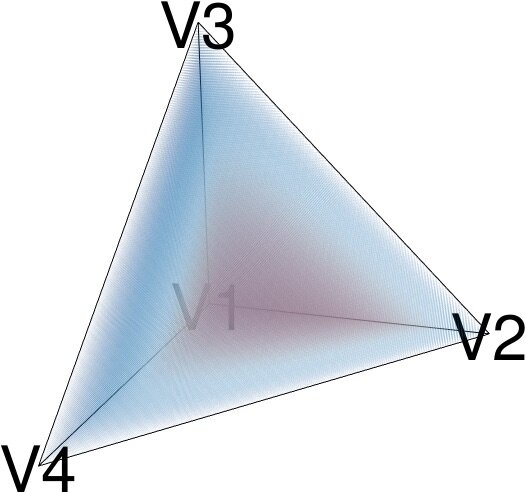} 
	\!\!\!\!&\!\!\!\!\!\includegraphics[scale=0.13]{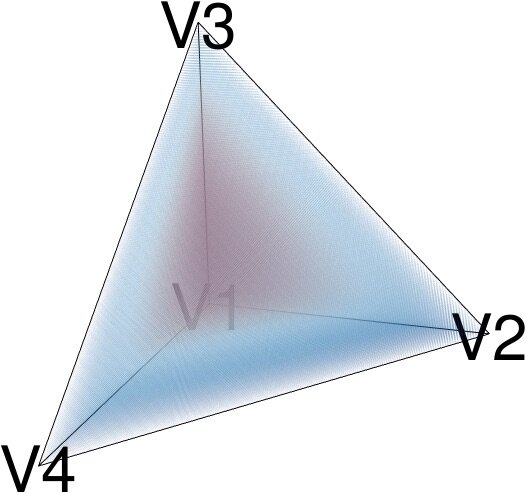} 
	\!\!\!\!\!&\!\!\!\!\!\includegraphics[scale=0.13]{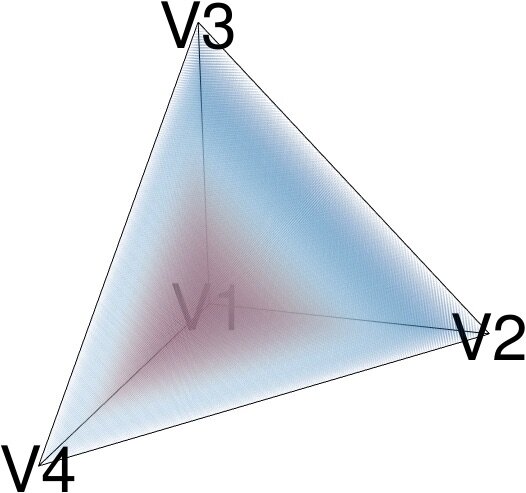}
	\!\!\!\!\!&\!\!\!\!\!\includegraphics[scale=0.13]{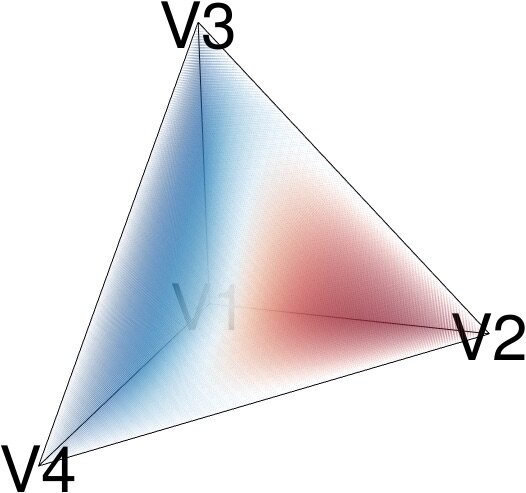}
	\!\!\!\!\!&\!\!\!\!\!\includegraphics[scale=0.13]{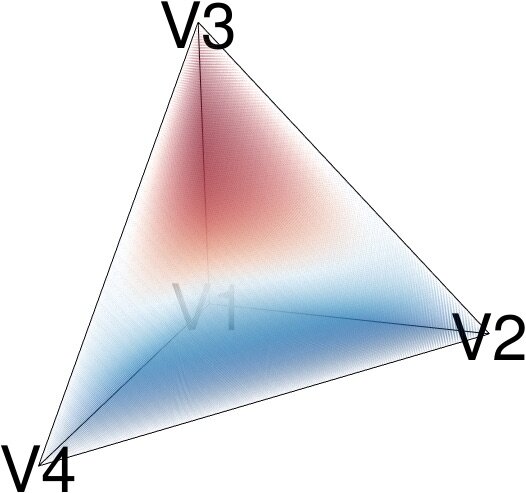}
	\!\!\!\!\!&\!\!\!\!\!\includegraphics[scale=0.13]{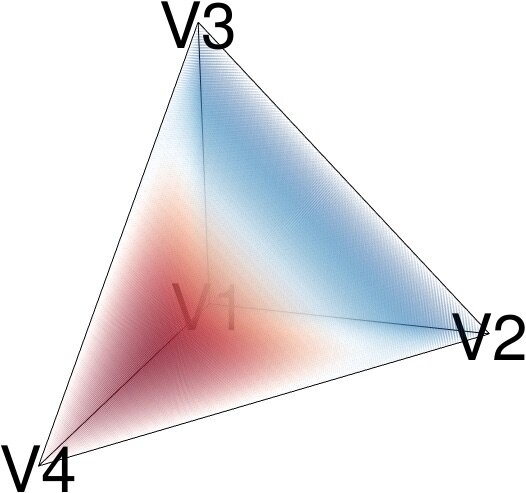}\\
	{\footnotesize $B_{3100}^{4, T}=4b_1^3b_2$} 
	\!\!\!\!\!&\!\!\!\!\! {\footnotesize $B_{3010}^{4, T}=4b_1^3b_3$} 
	\!\!\!\!\!&\!\!\!\!\! {\footnotesize $B_{3001}^{4, T}=4b_1^3b_4$} 
	\!\!\!\!\!&\!\!\!\!\! {\footnotesize $B_{1300}^{4, T}=4b_1b_2^3$} 
	\!\!\!\!\!&\!\!\!\!\! {\footnotesize $B_{1030}^{4, T}=4b_1b_3^3$} 
	\!\!\!\!\!&\!\!\!\!\! {\footnotesize $B_{1003}^{4, T}=4b_1b_4^3$}\\[3pt]
	\includegraphics[scale=0.13]{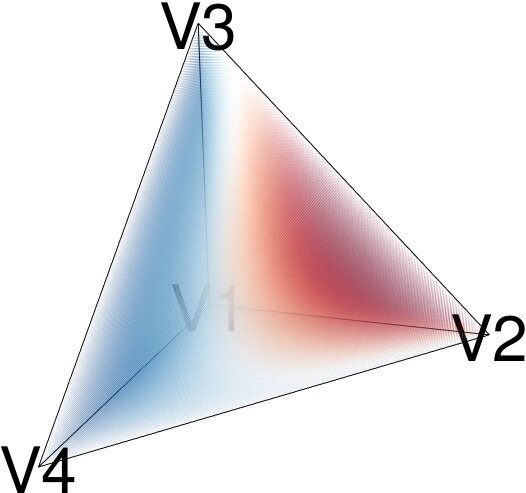} 
	\!\!\!\!&\!\!\!\!\!\includegraphics[scale=0.13]{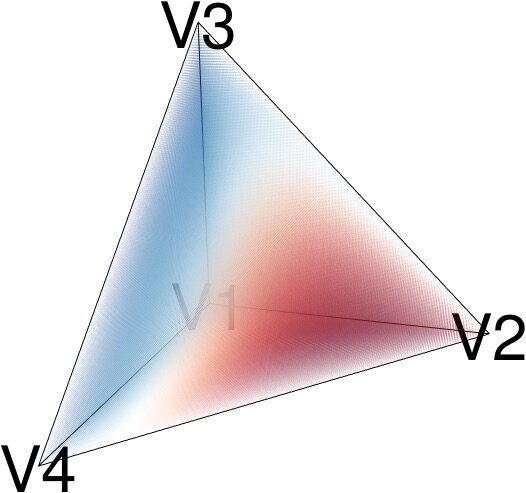} 
	\!\!\!\!\!&\!\!\!\!\!\includegraphics[scale=0.13]{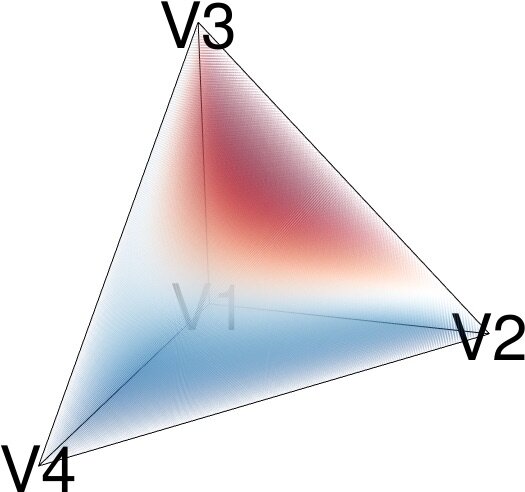} 
	\!\!\!\!\!&\!\!\!\!\!\includegraphics[scale=0.13]{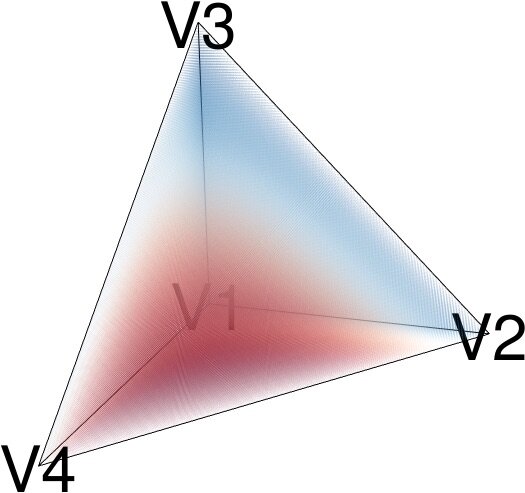} 
	\!\!\!\!\!&\!\!\!\!\!\includegraphics[scale=0.13]{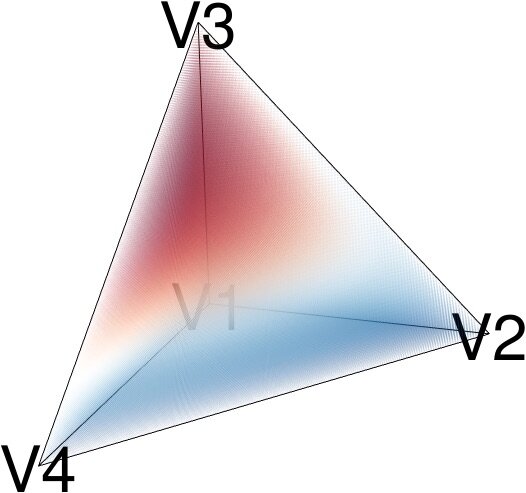}
	\!\!\!\!\!&\!\!\!\!\!\includegraphics[scale=0.13]{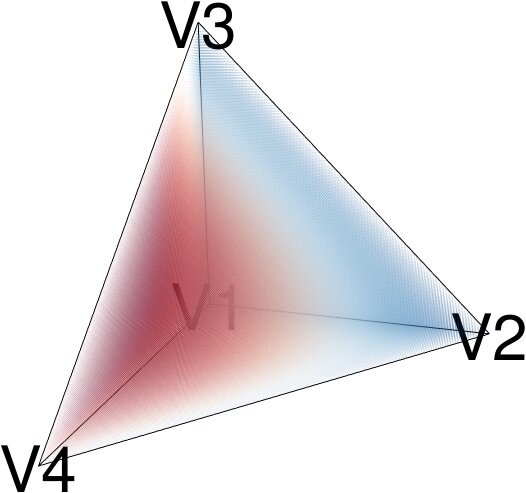}\\
	{\footnotesize $B_{0310}^{4, T}=4b_2^3b_3$} 
	\!\!\!\!\!&\!\!\!\!\! {\footnotesize $B_{0301}^{4, T}=4b_2^3b_4$} 
	\!\!\!\!\!&\!\!\!\!\! {\footnotesize $B_{0130}^{4, T}=4b_2b_3^3$} 
	\!\!\!\!\!&\!\!\!\!\! {\footnotesize $B_{0103}^{4, T}=4b_2b_4^3$} 
	\!\!\!\!\!&\!\!\!\!\! {\footnotesize $B_{0031}^{4, T}=4b_3^3b_4$} 
	\!\!\!\!\!&\!\!\!\!\! {\footnotesize $B_{0013}^{4, T}=4b_3b_4^3$}\\[3pt]
	\includegraphics[scale=0.13]{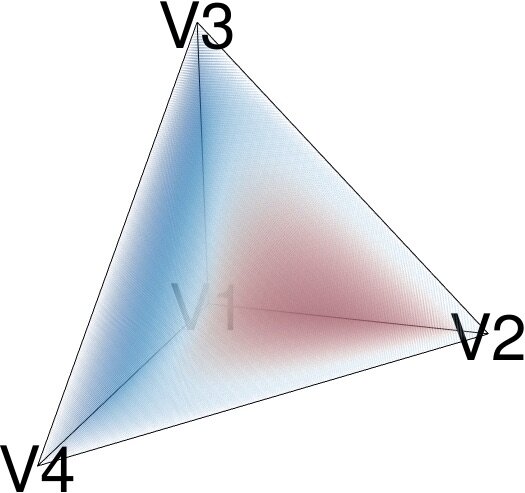} 
	\!\!\!\!&\!\!\!\!\!\includegraphics[scale=0.13]{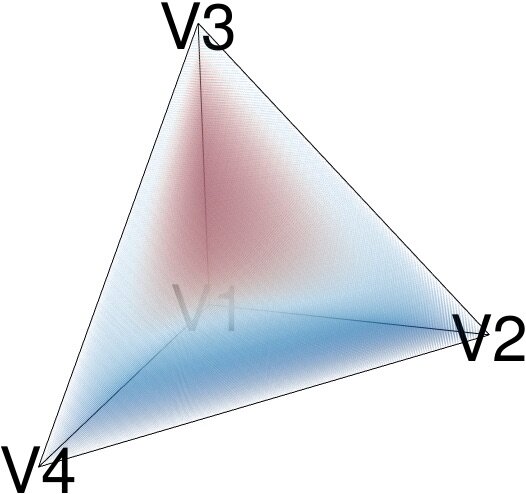} 
	\!\!\!\!\!&\!\!\!\!\!\includegraphics[scale=0.13]{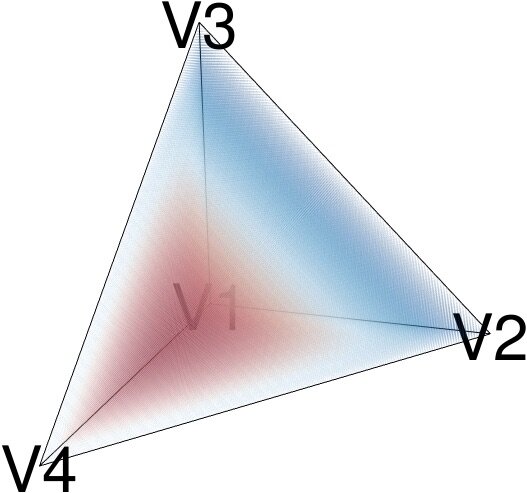} 
	\!\!\!\!\!&\!\!\!\!\!\includegraphics[scale=0.13]{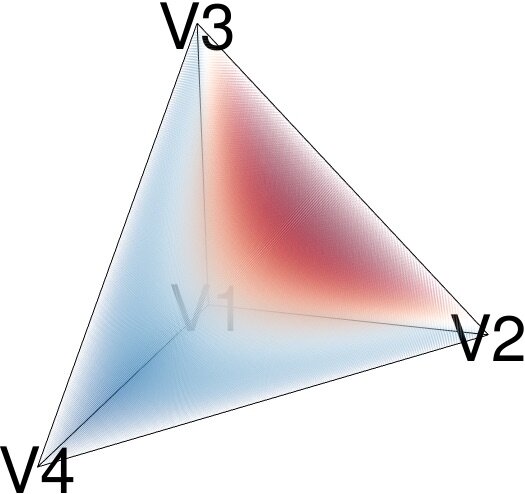} 
	\!\!\!\!\!&\!\!\!\!\!\includegraphics[scale=0.13]{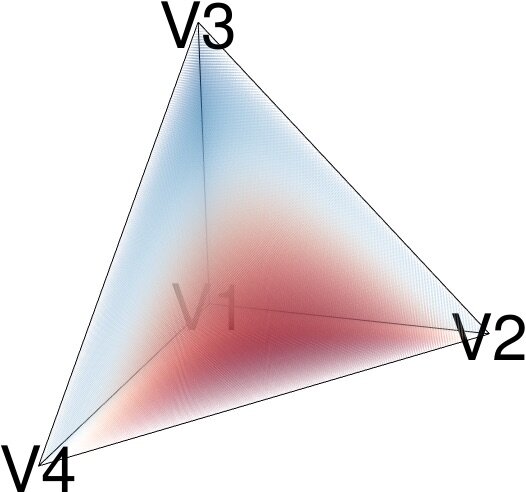}
	\!\!\!\!\!&\!\!\!\!\!\includegraphics[scale=0.13]{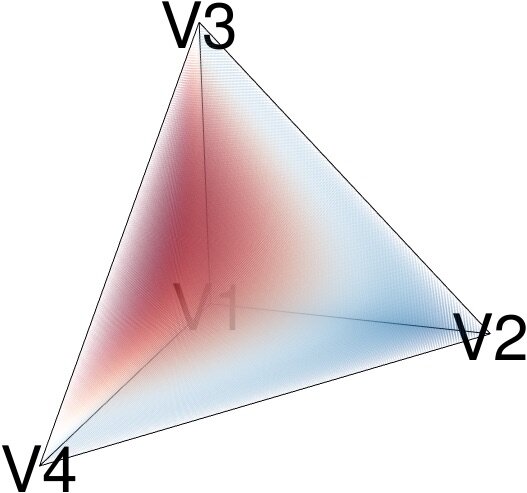}\\
	{\footnotesize $B_{2200}^{4, T}=6b_1^2b_2^2$} 
	\!\!\!\!\!&\!\!\!\!\! {\footnotesize $B_{2020}^{4, T}=6b_1^2b_3^2$} 
	\!\!\!\!\!&\!\!\!\!\! {\footnotesize $B_{2002}^{4, T}=6b_1^2b_4^2$} 
	\!\!\!\!\!&\!\!\!\!\! {\footnotesize $B_{0220}^{4, T}=6b_2^2b_3^2$} 
	\!\!\!\!\!&\!\!\!\!\! {\footnotesize $B_{0202}^{4, T}=6b_2^2b_4^2$} 
	\!\!\!\!\!&\!\!\!\!\! {\footnotesize $B_{0022}^{4, T}=6b_3^2b_4^2$}  \\[3pt]
	\includegraphics[scale=0.13]{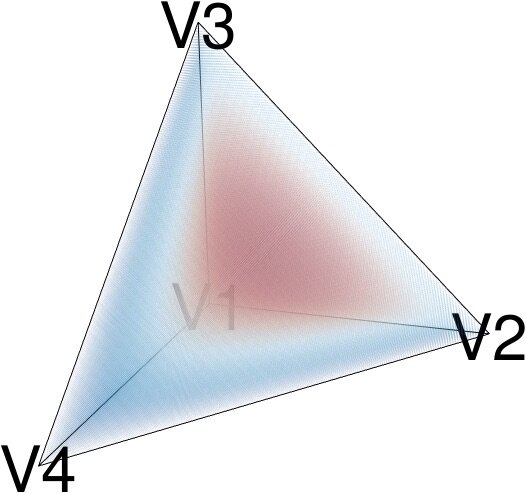} 
	\!\!\!\!&\!\!\!\!\!\includegraphics[scale=0.13]{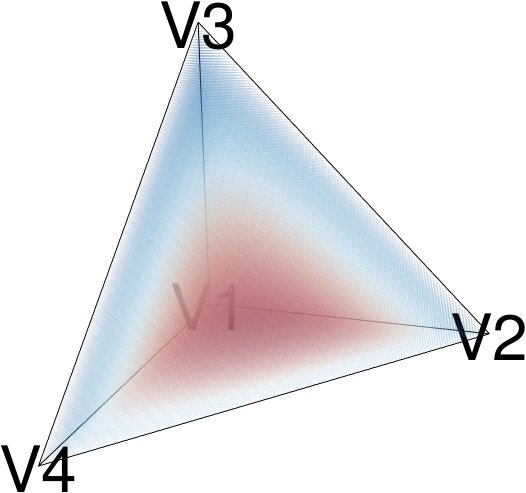} 
	\!\!\!\!\!&\!\!\!\!\!\includegraphics[scale=0.13]{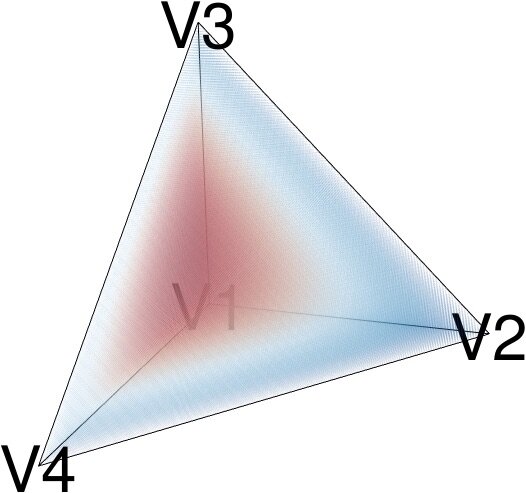}
	\!\!\!\!\!&\!\!\!\!\!\includegraphics[scale=0.13]{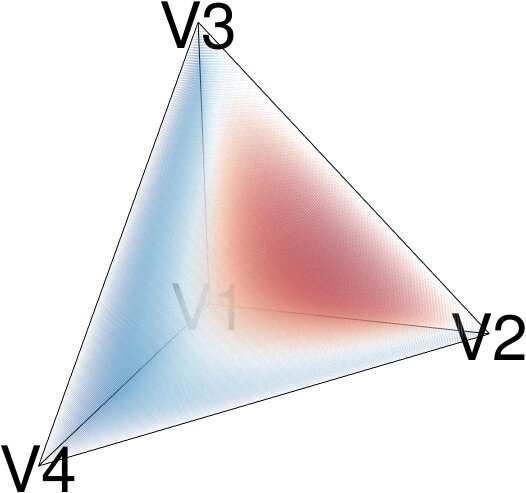}
	\!\!\!\!\!&\!\!\!\!\!\includegraphics[scale=0.13]{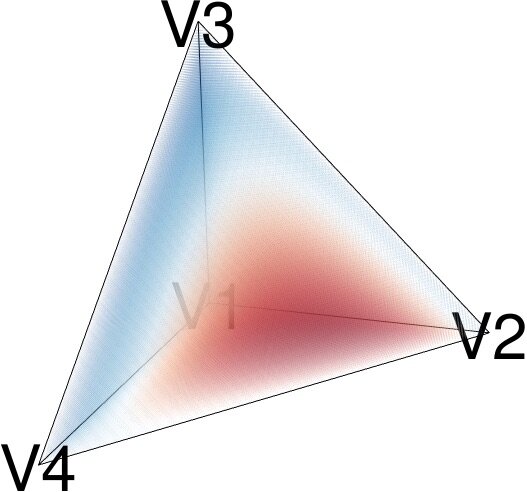}
	\!\!\!\!\!&\!\!\!\!\!\includegraphics[scale=0.13]{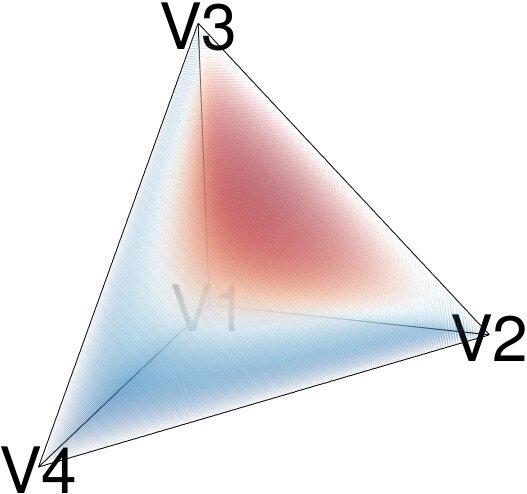}\\
	{\footnotesize $B_{2110}^{4, T}=12b_1^2b_2b_3$} 
	\!\!\!\!&\!\!\!\! {\footnotesize $B_{2101}^{4, T}=12b_1^2b_2b_4$} 
	\!\!\!\!&\!\!\!\! {\footnotesize $B_{2011}^{4, T}=12b_1^2b_3b_4$} 
	\!\!\!\!&\!\!\!\! {\footnotesize $B_{1210}^{4, T}=12b_1b_2^2b_3$} 
	\!\!\!\!&\!\!\!\! {\footnotesize $B_{1201}^{4, T}=12b_1b_2^2b_4$} 
	\!\!\!\!&\!\!\!\! {\footnotesize $B_{1120}^{4, T}=12b_1b_2b_3^2$}\\[3pt]
	\includegraphics[scale=0.13]{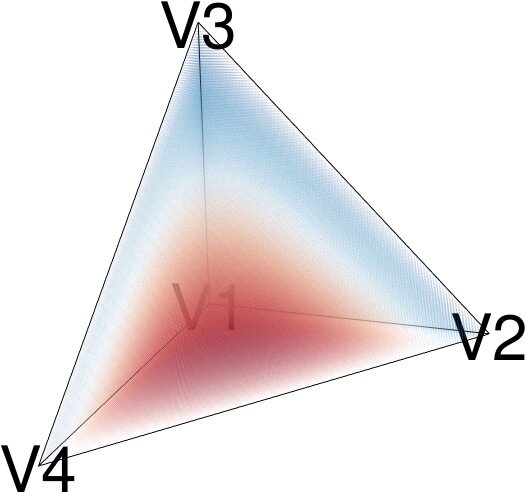} 
	\!\!\!\!&\!\!\!\!\!\includegraphics[scale=0.13]{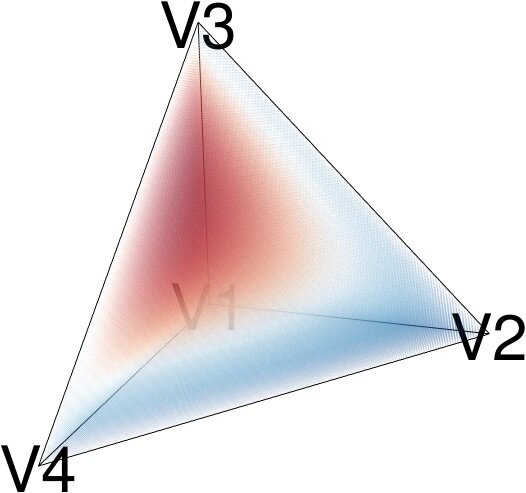} 
	\!\!\!\!\!&\!\!\!\!\!\includegraphics[scale=0.13]{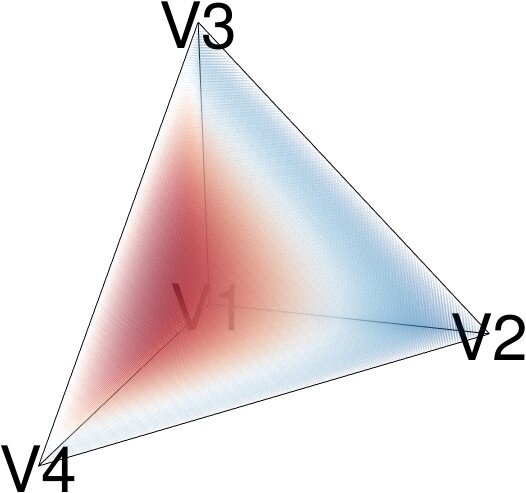}
	\!\!\!\!\!&\!\!\!\!\!\includegraphics[scale=0.13]{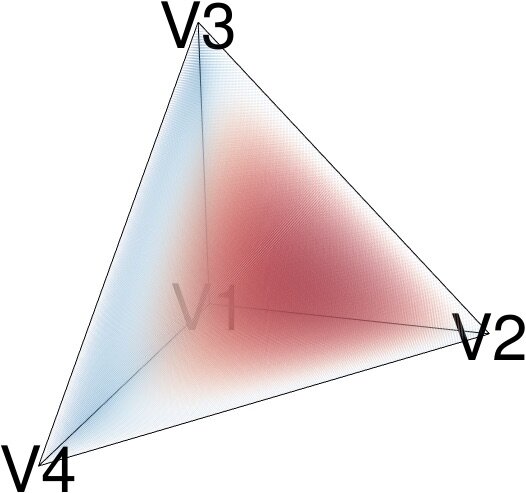}
	\!\!\!\!\!&\!\!\!\!\!\includegraphics[scale=0.13]{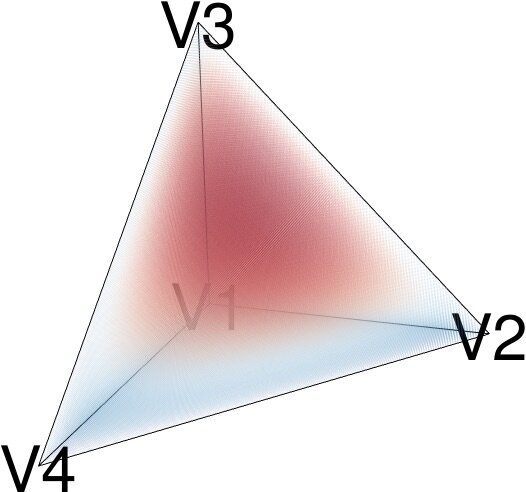}
	\!\!\!\!\!&\!\!\!\!\!\includegraphics[scale=0.13]{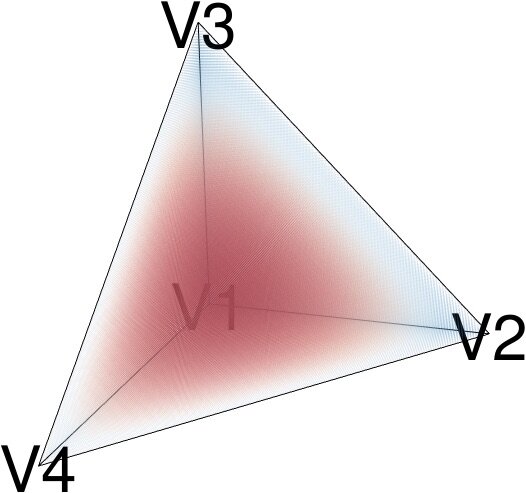}\\
	{\footnotesize $B_{1102}^{4, T}=12b_1b_2b_4^2$} 
	\!\!\!\!&\!\!\!\! {\footnotesize $B_{1021}^{4, T}=12b_1b_3^2b_4$} 
	\!\!\!\!&\!\!\!\! {\footnotesize $B_{1012}^{4, T}=12b_1b_3b_4^2$} 
	\!\!\!\!&\!\!\!\! {\footnotesize $B_{0211}^{4, T}=12b_2^2b_3b_4$} 
	\!\!\!\!&\!\!\!\! {\footnotesize $B_{0121}^{4, T}=12b_2b_3^2b_4$} 
	\!\!\!\!&\!\!\!\! {\footnotesize $B_{0112}^{4, T}=12b_2b_3b_4^2$}\\[3pt]
\end{tabular}} 
\end{center}
\caption{An illustration of Bernstein basis functions $\{B_{ijkl}^{4,T}(\bs{p})\}$. 
\label{Fig:Bd3}}
\end{figure}

Thus, given Bernstein basis functions $\{B_{ijkl}^{d, T}(\bs{p})\}_{i+j+k+l=d}$, any polynomial $\phi(\bs{p}) \in \mathcal{P}_{d}(T)$  can be written uniquely as \textsf{B-form}, 
\begin{equation}
\label{EQN:Bform}
    \phi(\bs{p}) |_{T}=\sum_{i+j+k+l=d} \gamma_{T;ijkl}B_{ijkl}^{d, T}(\bs{p})
    =\bs{B}^d_T(\bs{p})^{\top}\bs{\gamma}_{T},
\end{equation}
where the coefficients $\bs{\gamma}_{T}=\{\gamma_{T;ijkl}\}_{i+j+k+l=d}$ are called \textsf{B-coefficients} of $\phi$.
For the purpose of computer implementation, in this paper, we employ the lexicographical order for ordering of the coefficients $\bs{\gamma}_{T}$. To be specific, $\gamma_{T;ijkl}$ orders ahead of $\gamma_{T;i^{\prime}j^{\prime}k^{\prime}l^{\prime}}$ either
(i) $i > i^{\prime}$, or (ii) $i = i^{\prime}$ and $j > j^{\prime}$, or (iii) $i = i^{\prime}$, $j = j^{\prime}$ and $k > k^{\prime}$, or (iv) $i = i^{\prime}$, $j = j^{\prime}$, $k = k^{\prime}$ and $l > l^{\prime}$.
Consequently, we can express
\[
    \bs{B}^d_T(\bs{p}) = \left(B_{d,0,0,0}^{d, T}(\bs{p}), B_{d-1,1,0,0}^{d, T}(\bs{p}), B_{d-1,0,1,0}^{d, T}(\bs{p}),
    \ldots, B_{0,0,1,d-1}^{d, T}(\bs{p}), B_{0,0,0,d}^{d, T}(\bs{p})\right)^{\top}
\]
and
\begin{equation}
\label{EQN:gammaT}
    \bs{\gamma}_T = \left(\gamma_{T;d,0,0,0}, \gamma_{T;d-1,1,0,0}, \gamma_{T;d-1,0,1,0},
    \ldots, \gamma_{T;0,0,1,d-1}, \gamma_{T;0,0,0,d}\right)^{\top}.
\end{equation}
Accordingly, in (\ref{EQN:gammaT}), the index of the element $\gamma_{T;ijkl}$ in the vector $\bs{\gamma}_T$ is: 
\[
 	\sum_{m=0}^{d-i}\frac{(m+1)m}{2} + \sum_{n=0}^{d-i-j}(n+1) - k.
\]
Note that using a different ordering method will not affect the evaluation results for the trivariate polynomial functions.

It is convenient to derive conditions of continuous connection for polynomials defined on adjacent tetrahedra in using barycentric coordinates and Bernstein basis polynomials. 
We use the following Example \ref{EXM:TT} for illustration.
\begin{figure}[htbp!]
\begin{center}
\includegraphics[scale=.1]{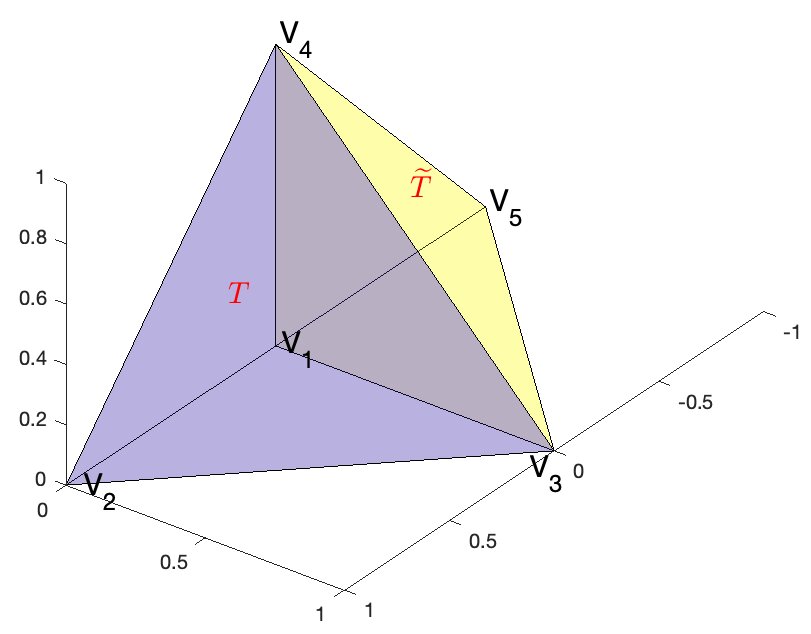}
\end{center}
\caption{An expample of two tetrahedra sharing a common face.}
\label{Fig:Tri_shape}
\end{figure}

\begin{example}
\label{EXM:TT}
Two adjacent tetrahedra $T=\langle \bs{v}_2, \bs{v}_1, \bs{v}_3, \bs{v}_4\rangle$ and $\widetilde{T}=\langle \bs{v}_5, \bs{v}_1, \bs{v}_4, \bs{v}_3\rangle$ share a common triangular face $F=\langle\bs{v}_1, \bs{v}_3, \bs{v}_4\rangle$, as illustrated in Figure \ref{Fig:Tri_shape}, where the Cartesian coordinates of the five vertices are $\bs{v}_1=(0,0,0)$, $\bs{v}_2=(1,0,0)$, $\bs{v}_3=(0,1,0)$, $\bs{v}_4=(0,0,1)$, $\bs{v}_5=(-1,0,0)$, respectively.
\end{example}

Assume two sets of Bernstein polynomial basis $\{B_{ijkl}^{d, T}(\bs{p})\}_{i+j+k+l=d}$ and $\{\widetilde{B}_{ijkl}^{d, \widetilde{T}}(\widetilde{\bs{p}})\}_{i+j+k+l=d}$ are defined on $T$ and $\widetilde{T}$ using the barycentric coordinates, respectively. Consider two degree-$d$ polynomials $\phi(\bs{p})$ and $\widetilde{\phi}(\widetilde{\bs{p}})$ defined on $T$ and $\widetilde{T}$, respectively, with B-forms
\begin{align*}
	\phi(\bs{p}) = \sum_{i+j+k+l=d}\gamma_{ijkl} B_{ijkl}^{d,T}(\bs{p}), ~~~~
	\widetilde{\phi}(\widetilde{\bs{p}}) = \sum_{i+j+k+l=d}\widetilde{\gamma}_{ijkl} \widetilde{B}_{ijkl}^{d, \widetilde{T}}(\widetilde{\bs{p}}).
\end{align*}
For point $\bs{p}_F\in F$, the barycentric coordinates with respect to $T$ and $\widetilde{T}$ are $(0, b_1, b_3, 1-b_1-b_3)$ and $(0, b_1, 1-b_1-b_3, b_3)$, respectively. Accordingly, we have
\begin{align}\nonumber
	\phi(\bs{p}_F) =& \sum_{j+k+l=d}\gamma_{0jkl} \frac{d!}{j!k!l!}b_1^j b_3^k (1-b_1-b_3)^l, \\
	\widetilde{\phi}(\bs{p}_F) =& \sum_{j+k+l=d}\widetilde{\gamma}_{0jkl} \frac{d!}{j!k!l!}b_1^j (1-b_1-b_3)^k b_3^l.
	\label{EQ:H_r0}
\end{align}
Therefore, $\phi$ and $\widetilde{\phi}$ are continuous on $F$ if and only if
\begin{equation}
    \gamma_{0jkl} = \widetilde{\gamma}_{0jlk}
\label{EQN:c0}
\end{equation}
for $j,k,l\geq0$ and $j+k+l=d$.

%%%%%%%%%%%%%%%%%%%%%%%%%%%%%%%%%%%%%%%%%%%%%%%%%%%%%%%%%%%%%%
%%%%%%%%%%%%%%%%%%%%%%%%%%%%%%%%%%%%%%%%%%%%%%%%%%%%%%%%%%%%%%
\vskip .10in \noindent \textbf{A.2. Directional derivatives and smoothness} \vskip .10in
\label{SUBSEC:deriv}

To generalize the smoothness restriction over the joint triangular face for two adjacent tetrahedra, we need to introduce the definitions of directional derivative first. Recall that for a general multivariate smooth function $\phi$, the directional derivative at point $\bs{p}$ with respect to direction $\bs{u}$ is defined as
\[
D_{\bs{u}}\phi(\bs{p}) := \frac{\partial}{\partial t}\phi(\bs{p} + t\bs{u}) \Big|_{t = 0} = \lim_{t\rightarrow0}\frac{\phi(\bs{p} + t\bs{u}) - \phi(\bs{p})}{t}.
\]
Accordingly, for vector $\bs{u}:=(u_x, u_y, u_z)\in\mathbb{R}^3$ and trivariate function $\phi$, the directional derivative at $\bs{p}=(x,y,z)$ is
\[
D_{\bs{u}}\phi(x,y,z) := \frac{\partial}{\partial t}\phi(x + tu_x, y + tu_y, z + tu_z) \Big|_{t = 0}.
\]
\begin{remark}
If $\phi$ is a polynomial of degree $d$, by calculus, $D_{\bs{u}}\phi(x,y,z) = u_x D_x \phi(x,y,z) + u_y D_y \phi(x,y,z) + u_z D_z \phi(x,y,z)$, so $D_{\bs{u}}\phi$ is a polynomial of degree $d - 1$.
\end{remark}

Consider direction $\bs{u} = \bs{p}_1 - \bs{p}_2$, where for $\jmath = 1,2$, $\bs{p}_{\jmath}\in\mathbb{R}^3$ have barycentric coordinates $(b_{\jmath1}, b_{\jmath2}, b_{\jmath3}, b_{\jmath4})$ with respect to $T$. Then $\bs{u}$ is uniquely described by the \textsf{directional coordinates} $\bs{a}=(a_1, a_2, a_3, a_4)=(b_{11}-b_{21}, b_{12}-b_{22}, b_{13}-b_{23}, b_{14}-b_{24})$. Obviously, $a_1+a_2+a_3+a_4=0$. Direct calculation gives the directional derivative of the Bernstein basis polynomial $B_{ijkl}^d$. 

\begin{lemma}
\citep[Lemma 15.12 in][]{Lai:Schumaker:07}
\label{LEM:ddb}
Consider direction $\bs{u}$ with directional coordinates $\bs{a}=(a_1, a_2, a_3, a_4)$. Then

\begin{align}
\label{EQN:ddb}
	D_{\bs{u}}B_{ijkl}^d(\bs{p}) 
	=& d\left\{a_1 B_{i-1,j,k,l}^{d-1}(\bs{p}) + a_2 B_{i,j-1,k,l}^{d-1}(\bs{p}) + a_3 B_{i,j,k-1,l}^{d-1}(\bs{p}) + a_4 B_{i,j,k,l-1}^{d-1}(\bs{p})\right\},
\end{align}
for any $\bs{p}\in T$ and $i+j+k+l=d$.
\end{lemma}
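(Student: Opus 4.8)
The plan is to reduce the directional derivative to a sum of three coordinate partial derivatives (as recorded in the Remark preceding the lemma) and then differentiate the explicit monomial form of the Bernstein basis polynomial. First I would recall that for the barycentric coordinates $b = (b_1, b_2, b_3, b_4)$ viewed as affine functions of $\bs{v} = (x,y,z)$, moving $\bs{v}$ in the direction $\bs{u} = \bs{w}_1 - \bs{w}_2$ changes each $b_m$ linearly. Concretely, since barycentric coordinates are affine in $\bs{v}$ and the $b_m$ of $\bs{w}_1, \bs{w}_2$ are $w_{1m}, w_{2m}$, one gets $D_{\bs{u}} b_m = w_{1m} - w_{2m} = a_m$ for $m = 1,2,3,4$. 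This is the key computational fact; it follows directly from the definition of the directional derivative together with affineness, or equivalently from Cramer's rule representation $b_m = \det(\mathbf{M}_m)/\det(\mathbf{M})$ in Section B.1, since each $\det(\mathbf{M}_m)$ is affine in $(x,y,z)$.

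Next I would apply the chain rule to $B_{ijkl}^d(\bs{v}) = \frac{d!}{i!j!k!l!} b_1^i b_2^j b_3^k b_4^l$, treating it as a function of $(b_1, b_2, b_3, b_4)$ composed with the affine map $\bs{v} \mapsto b(\bs{v})$. This gives
\[
D_{\bs{u}} B_{ijkl}^d(\bs{v}) = \frac{d!}{i!j!k!l!}\Bigl( a_1\, i\, b_1^{i-1} b_2^j b_3^k b_4^l + a_2\, j\, b_1^i b_2^{j-1} b_3^k b_4^l + a_3\, k\, b_1^i b_2^j b_3^{k-1} b_4^l + a_4\, l\, b_1^i b_2^j b_3^k b_4^{l-1}\Bigr).
\]
Then each term is re-expressed as a degree-$(d-1)$ Bernstein basis polynomial: for instance $\frac{d!}{i!j!k!l!}\, i\, b_1^{i-1} b_2^j b_3^k b_4^l = d \cdot \frac{(d-1)!}{(i-1)!j!k!l!} b_1^{i-1} b_2^j b_3^k b_4^l = d\, B_{i-1,j,k,l}^{d-1}(\bs{v})$, using $\frac{d!}{i!} i = d \frac{(d-1)!}{(i-1)!}$ and $(i-1) + j + k + l = d-1$. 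Treating the remaining three terms identically and summing yields the claimed formula \eqref{EQN:ddb}. One should note the boundary convention that a term with a negative subscript is interpreted as zero, which is consistent with the factor $i$ (resp.\ $j,k,l$) vanishing when $i = 0$.

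The only real subtlety — hardly an obstacle — is bookkeeping: verifying that the multinomial coefficients collapse correctly and that the edge cases ($i=0$, etc.) are handled by the vanishing prefactor rather than producing spurious terms. Everything else is the chain rule plus the affineness of barycentric coordinates, so no delicate analysis is required; the result is essentially Lemma 15.12 of \cite{Lai:Schumaker:07} and the proof is a direct computation.
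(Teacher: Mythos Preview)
Your proposal is correct and follows essentially the same route as the paper: both use that the barycentric coordinates of $\bs{v}+t\bs{u}$ are $(b_1+ta_1,\ldots,b_4+ta_4)$ (equivalently, $D_{\bs{u}}b_m=a_m$), differentiate the monomial $\frac{d!}{i!j!k!l!}b_1^ib_2^jb_3^kb_4^l$, and then regroup the factorials into degree-$(d-1)$ Bernstein polynomials. The paper writes this as a single $\partial/\partial t|_{t=0}$ computation rather than invoking the chain rule explicitly, but the two presentations are identical in substance.
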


%%%%%%%%%%%%%%%%%%%%%%%%%%%%%%%%%%%%%%%%%%%%%%%%%%%%%%%%%%%%%%%
Consequently, one can obtain the directional derivative for any trivariate polynomial $\phi$.
\begin{theorem} 
\cite[Theorems 15.13 and 15.14 in][]{Lai:Schumaker:07}
\label{THM:Dup}
Consider direction $\bs{u}$ with directional coordinates $\bs{a}=(a_1, a_2, a_3, a_4)$. Then for any trivariate polynomial $\phi$
with B-form in (\ref{EQN:Bform}), the directional derivative is 
\begin{equation}
\label{EQU:Dup}
	D_{\bs{u}}\phi(\bs{p}) 
	= d \sum_{i+j+k+l=d-1} \gamma_{ijkl}^{(1)}(\bs{a}) B_{ijkl}^{d-1}(\bs{p}),
\end{equation}
where $\gamma_{ijkl}^{(1)}(\bs{a})=a_1\gamma_{i+1,j,k,l}+a_2\gamma_{i,j+1,k,l}+a_3\gamma_{i,j,k+1,l}+a_4\gamma_{i,j,k,l+1}$. 

In general, given $\bs{u}_1,\ldots,\bs{u}_m$ with associated directional coordinates $\bs{a}^{(\imath)}=(a_1^{(\imath)}, a_2^{(\imath)}, a_3^{(\imath)}, a_4^{(\imath)})$, $\imath = 1,\ldots,m$,
\begin{equation}
\label{EQU:derivative}
D_{\bs{u}_m}\cdots D_{\bs{u}_1}\phi(\bs{p}) 
	= \frac{d!}{(d-m)!} \sum_{i+j+k+l=d-m} \gamma_{ijkl}^{(m)}(\bs{a}^{(1)},\ldots,\bs{a}^{(m)}) B_{ijkl}^{d-m}(\bs{p}),
\end{equation}
where the coefficients are defined recursively as follows:
\begin{align*}
    \gamma_{ijkl}^{(m)}(\bs{a}^{(1)},\ldots,\bs{a}^{(m)})
    =&a_1^{(m)}\gamma_{i+1,j,k,l}^{(m-1)}(\bs{a}^{(1)},\ldots,\bs{a}^{(m-1)})
        +a_2^{(m)}\gamma_{i,j+1,k,l}^{(m-1)}(\bs{a}^{(1)},\ldots,\bs{a}^{(m-1)}) \\ 
    &+a_3^{(m)}\gamma_{i,j,k+1,l}^{(m-1)}(\bs{a}^{(1)},\ldots,\bs{a}^{(m-1)})
        +a_4^{(m)}\gamma_{i,j,k,l+1}^{(m-1)}(\bs{a}^{(1)},\ldots,\bs{a}^{(m-1)}),
\end{align*}
for $m=1,\ldots,d$, with $\gamma_{ijkl}^{(0)}(\bs{a})=\gamma_{ijkl}$.
\end{theorem}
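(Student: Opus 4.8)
The plan is to prove the single-derivative identity (\ref{EQU:Dup}) directly from Lemma \ref{LEM:ddb} by linearity and a reindexing of the resulting sum, and then to obtain the general iterated formula (\ref{EQU:derivative}) by induction on $m$, exploiting the fact that each intermediate derivative is again a B-form of one lower degree on the same tetrahedron.

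For the first part, I would start from the B-form (\ref{EQN:Bform}) and use linearity of the directional-derivative operator to write $D_{\bs{u}}p(\bs{v}) = \sum_{i+j+k+l=d}\gamma_{ijkl}\,D_{\bs{u}}B_{ijkl}^{d}(\bs{v})$. Substituting (\ref{EQN:ddb}) produces four sums, with general terms $d\,a_1\gamma_{ijkl}B_{i-1,j,k,l}^{d-1}$, $d\,a_2\gamma_{ijkl}B_{i,j-1,k,l}^{d-1}$, $d\,a_3\gamma_{ijkl}B_{i,j,k-1,l}^{d-1}$ and $d\,a_4\gamma_{ijkl}B_{i,j,k,l-1}^{d-1}$, each summed over $i+j+k+l=d$, with the standard convention that a Bernstein polynomial carrying any negative subscript is identically zero. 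The crux is to regroup by the multi-index of the degree-$(d-1)$ basis: re-indexing the first sum via $i\mapsto i+1$ (and analogously the other three), the coefficient of $B_{ijkl}^{d-1}$ for each $(i,j,k,l)$ with $i+j+k+l=d-1$ becomes exactly $d\bigl(a_1\gamma_{i+1,j,k,l}+a_2\gamma_{i,j+1,k,l}+a_3\gamma_{i,j,k+1,l}+a_4\gamma_{i,j,k,l+1}\bigr) = d\,\gamma_{ijkl}^{(1)}(\bs{a})$. A brief check that out-of-range multi-indices contribute nothing, consistent with the zero convention, completes this step.

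For the iterated formula I would induct on $m$, the base case $m=1$ being (\ref{EQU:Dup}) together with $\gamma_{ijkl}^{(0)}=\gamma_{ijkl}$. Assuming the result for $m-1$, the function $q:=D_{\bs{u}_{m-1}}\cdots D_{\bs{u}_1}p$ is a polynomial of degree $d-m+1$ already presented in degree-$(d-m+1)$ B-form relative to $T$, namely $q(\bs{v})=\sum_{i+j+k+l=d-m+1}c_{ijkl}B_{ijkl}^{d-m+1}(\bs{v})$ with $c_{ijkl}=\frac{d!}{(d-m+1)!}\gamma_{ijkl}^{(m-1)}(\bs{a}^{(1)},\ldots,\bs{a}^{(m-1)})$. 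Applying the single-derivative identity already proven to $q$ in direction $\bs{u}_m$ gives $D_{\bs{u}_m}q(\bs{v}) = (d-m+1)\sum_{i+j+k+l=d-m}c_{ijkl}^{(1)}(\bs{a}^{(m)})B_{ijkl}^{d-m}(\bs{v})$, and the recursive definition of $\gamma^{(m)}$ is precisely the statement that $c_{ijkl}^{(1)}(\bs{a}^{(m)})=\frac{d!}{(d-m+1)!}\gamma_{ijkl}^{(m)}$. Since $(d-m+1)\cdot\frac{d!}{(d-m+1)!}=\frac{d!}{(d-m)!}$, this is exactly (\ref{EQU:derivative}), closing the induction.

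The only place that demands care is the index-shuffling in the single-derivative step: one must verify that the four re-indexed sums combine into the claimed coefficient and that multi-indices falling outside the admissible range are handled consistently with the convention that negatively-subscripted Bernstein polynomials vanish. Everything else is formal --- linearity to strip off one derivative, and an induction whose only real input is the observation that each partial derivative of a B-form is again a B-form of one lower degree on the same tetrahedron, so the single-step result applies verbatim at each stage.
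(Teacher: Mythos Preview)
Your proposal is correct and follows essentially the same approach as the paper: apply Lemma~\ref{LEM:ddb} termwise via linearity, re-index to collect coefficients of each $B_{ijkl}^{d-1}$, and then iterate. The paper's proof is terser (it simply says ``repeatedly applying (\ref{EQU:Dup})'' for the general case), whereas you spell out the induction explicitly, but the underlying argument is identical.
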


In preparation for the discussion of trivariate spline and spline spaces, we need the following conditions for a smooth join between two polynomial on adjoining tetrahedra, like Example \ref{EXM:TT}
illustrated in Figure \ref{Fig:Tri_shape}.

\begin{theorem}
\label{THM:smoothjoin}
Suppose $\{\gamma_{ijkl}\}$ and $\{\widetilde{\gamma}_{ijkl}\}$ are B-coefficients of $\phi$ and $\widetilde{\phi}$ relative to two tetrahedra $T=\langle \bs{v}_2, \bs{v}_1, \bs{v}_3, \bs{v}_4\rangle$ and $\widetilde{T}=\langle \bs{v}_5, \bs{v}_1, \bs{v}_4, \bs{v}_3\rangle$, respectively, where $T$ and $\widetilde{T}$ share a common face $F=\langle\bs{v}_1, \bs{v}_3, \bs{v}_4\rangle$.
Then the following statements are equivalent:
\begin{enumerate}
\item $\phi$ and $\widetilde{\phi}$ join together with $\mathcal{C}^r$ continuity across the face $F$;
	\item For all $\bs{p}_F \in F$, $m=0,\ldots,r$ and for all directions $\bs{u}$,
	\[
		D_{\bs{u}}^m \phi(\bs{p}_F)=D_{\bs{u}}^m \widetilde{\phi}(\bs{p}_F);
	\]
\item For $\phi$ and $\widetilde{\phi}$ with B-forms in (\ref{EQ:H_r0}), for $i+j+k=d-m$, $m=0,\ldots,r$,
	\begin{equation}
		\widetilde{\gamma}_{mijk}
		=\sum_{\nu+\mu+\kappa+\delta=m}\gamma_{\nu,i+\mu,k+\kappa,j+\delta}
			B_{\nu\mu\kappa\delta}^m(\bs{v}_5).
	\label{EQ:H_r}
	\end{equation}
\end{enumerate}
\end{theorem}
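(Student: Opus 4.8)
The plan is to establish the chain (1) $\Leftrightarrow$ (2) $\Leftrightarrow$ (3) by a standard argument specialized to the tetrahedral geometry of Figure \ref{Fig:nonTri_shape}(e). For the equivalence (1) $\Leftrightarrow$ (2), I would invoke the basic fact from multivariate calculus (cf.\ Chapter 15 of \cite{Lai:Schumaker:07}) that a piecewise polynomial built from $p$ on $T$ and $\widetilde{p}$ on $\widetilde{T}$ meeting along $F$ lies in $\mathcal{C}^r$ across $F$ if and only if all iterated directional derivatives of order up to $r$, in every direction $\bs{u}$, agree on $F$. Since $p$ and $\widetilde{p}$ are polynomials, it in fact suffices to check directions $\bs{u}$ transversal to $F$ together with the two directions spanning $F$; agreement of tangential derivatives is automatic from continuity of $p=\widetilde p$ on $F$ (established in (\ref{EQN:c0})), so the content is in the transversal direction. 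This reduces (2) to a finite family of polynomial identities on the $2$-dimensional face $F$.

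For (2) $\Leftrightarrow$ (3), I would use Theorem \ref{THM:Dup} to convert the directional-derivative statement into a statement about B-coefficients. Fix $m \le r$. Writing $\bs{v}_5$ in barycentric coordinates $(\nu,\mu,\kappa,\delta)/m$-style with respect to $T$ — that is, expressing the vertex $\bs{v}_5$ of $\widetilde T$ in barycentric coordinates relative to $T$ — the key observation is that the $m$-th ``slice'' of B-coefficients of $\widetilde p$ away from the face $F$, namely $\widetilde\gamma_{mijk}$, is exactly the value at $\bs{v}_5$ of the $m$-th directional derivative of $p$ taken $m$ times in the direction from a point of $F$ toward $\bs{v}_5$, suitably normalized. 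Applying the iterated-derivative formula (\ref{EQU:derivative}) and the de Casteljau/blossoming identity expands $\gamma^{(m)}$ evaluated at the directional coordinates of $\bs{v}_5$ into the sum $\sum_{\nu+\mu+\kappa+\delta=m}\gamma_{\nu,i+\mu,k+\kappa,j+\delta} B_{\nu\mu\kappa\delta}^m(\bs{v}_5)$, where the reshuffling of the last two indices ($j \leftrightarrow k$ on the $\gamma$ side) comes precisely from the vertex ordering mismatch between $T = \langle \bs{v}_2,\bs{v}_1,\bs{v}_3,\bs{v}_4\rangle$ and $\widetilde T = \langle \bs{v}_5,\bs{v}_1,\bs{v}_4,\bs{v}_3\rangle$. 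Running this for $m=0$ recovers (\ref{EQN:c0}); running it for $m=1,\dots,r$ gives all of (\ref{EQ:H_r}). Conversely, if (\ref{EQ:H_r}) holds for all $m\le r$, then reversing the computation shows all transversal derivatives up to order $r$ match on $F$, hence (2).

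The main obstacle is bookkeeping, not conceptual depth: one must carefully track how the common face $F=\langle \bs{v}_1,\bs{v}_3,\bs{v}_4\rangle$ sits as the $b_1=0$ face inside both $T$ and $\widetilde T$ under their respective (differently ordered) vertex lists, so that the index permutation in (\ref{EQ:H_r}) — the appearance of $\widetilde\gamma_{mijk}$ matched against $\gamma_{\nu,i+\mu,k+\kappa,j+\delta}$ with $j$ and $k$ swapped — is derived rather than asserted. I would handle this by first writing out the barycentric coordinates of a generic $\bs{w}\in F$ with respect to each tetrahedron (as already done preceding (\ref{EQ:H_r0})), then choosing the transversal direction $\bs{u} = \bs{v}_5 - \bs{w}$ for $\bs{w}\in F$ and observing that its directional coordinates relative to $T$ are $(b^{(5)}_1, b^{(5)}_2 - \cdot,\dots)$ where $(b^{(5)}_1,\dots,b^{(5)}_4)$ are the barycentric coordinates of $\bs{v}_5$ relative to $T$; the $F$-tangential components drop out after taking $D^m_{\bs u}$ and restricting to $F$, leaving exactly the claimed polynomial-in-$\bs{v}_5$ expression. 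The remaining steps are then direct substitution into (\ref{EQU:derivative}) and matching Bernstein coefficients, which is routine.
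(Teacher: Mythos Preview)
Your proposal is correct and follows essentially the same route as the paper: reduce (2) to a single transversal direction using that tangential derivatives already agree, then convert the derivative identity on $F$ into B-coefficient relations via Theorem~\ref{THM:Dup} and de~Casteljau. The only refinement worth noting is that the paper fixes the transversal direction once and for all as $\bs u=\bs v_5-\bs v_3$ (rather than the $\bs w$-dependent $\bs u=\bs v_5-\bs w$ you suggest), which gives the clean directional coordinates $(1,0,0,-1)$ in $\widetilde T$ and allows the computation to close via two short shift-operator lemmas (Lemmas~\ref{LEM:gammam} and~\ref{LEM:gamman}); your variable-direction choice would work in principle but adds unnecessary bookkeeping.
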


Consider the case of $d=2$ piecewise polynomial. There are in total 10 Bernstein basis polynomials with coefficients 
\begin{align}
\{\gamma_{ijkl}\}&=(\gamma_{2000}, \gamma_{1100}, \gamma_{1010}, \gamma_{1001},\gamma_{0200}, \gamma_{0110}, \gamma_{0101}, \gamma_{0020}, \gamma_{0011}, \gamma_{0002})^{\top}, \nonumber\\ 
\{\widetilde{\gamma}_{ijkl}\}&=(\widetilde{\gamma}_{2000}, \widetilde{\gamma}_{1100}, \widetilde{\gamma}_{1010}, \widetilde{\gamma}_{1001},\widetilde{\gamma}_{0200}, \widetilde{\gamma}_{0110}, \widetilde{\gamma}_{0101}, \widetilde{\gamma}_{0020}, \widetilde{\gamma}_{0011}, \widetilde{\gamma}_{0002})^{\top}.\label{EQ:gamma}
\end{align}
Note that the barycentric coordinate of $\bs{v}_5$ with respect to $T$ is $(2, -1, 0, 0)$. 

If the trivariate polynomial is continuous over the whole region, then applying (\ref{EQ:H_r}) for $m=r=0$ generates 
\begin{equation}
    \widetilde{\gamma}_{0ijk}=\gamma_{0ikj}B_{0000}^0(\bs{v}_5)=\gamma_{0ikj},~~~~i+j+k=d,
\label{EQN:c0dd}
\end{equation}
which matches the conclusion in (\ref{EQN:c0}). 

%%%%%%%%%%%%%%%%%%%%%%%%%%%%%%%%%%%%%%%%%%%%%%%%%%%%%%%%%%%%%%%
Based on the de Casteljau Algorithm (Theorem \ref{THM:deCasteljau} in Appendix Section A.4.2), we propose the following computationally efficient algorithm in Algorithm \ref{ALGO:deCas} to calculate the derivatives of $\phi(\bs{p})$.
\normalem{
\begin{algorithm}
\footnotesize

\SetKwBlock{Begin}{}{}
\caption{Algorithm for the derivatives of $\phi(\bs{p})$.}

\textbf{Inputs:} Polynomial with B-form $\phi(\bs{p}) = \sum_{i+j+k+l=d}\gamma_{ijkl}B_{ijkl}^d(\bs{p})$, directions $\bs{u}_1, \ldots, \bs{u}_m$ with associated directional coordinates $\bs{a}^{(\imath)} = \left(a_1^{(\imath)}, a_2^{(\imath)}, a_3^{(\imath)}, a_4^{(\imath)}\right), \imath = 1, \ldots, m$

\textbf{Initialize:} $\imath:=0$, $\gamma_{ijkl}^{(0)}:=\gamma_{ijkl}$

\textbf{Outputs:} $D_{\bs{u}_m}\ldots D_{\bs{u}_1}\phi(\bs{p})$

\Begin
{
\For {$\imath = 1,\ldots,m$ }{
    \For{$i+j+k+l = d-\imath$}{
    \begin{align}
        \gamma_{ijkl}^{(\imath)}&(\bs{a}^{(1)},\ldots,\bs{a}^{(m)}) \nonumber\\
    =&a_1^{(\imath)}\gamma_{i+1,j,k,l}^{(\imath-1)}(\bs{a}^{(1)},\ldots,\bs{a}^{(\imath-1)})
        +a_2^{(\imath)}\gamma_{i,j+1,k,l}^{(\imath-1)}(\bs{a}^{(1)},\ldots,\bs{a}^{(\imath-1)}) \nonumber\\ 
    &+a_3^{(\imath)}\gamma_{i,j,k+1,l}^{(\imath-1)}(\bs{a}^{(1)},\ldots,\bs{a}^{(\imath-1)})
        +a_4^{(\imath)}\gamma_{i,j,k,l+1}^{(\imath-1)}(\bs{a}^{(1)},\ldots,\bs{a}^{(\imath-1)}), \nonumber
    \end{align}
    }
}
$D_{\bs{u}_m}\cdots D_{\bs{u}_1}\phi(\bs{p}) 
	= \frac{d!}{(d-m)!} \sum_{i+j+k+l=d-m} \gamma_{ijkl}^{(m)}(\bs{a}^{(1)},\ldots,\bs{a}^{(m)}) B_{ijkl}^{d-m}(\bs{p})$
}
\label{ALGO:deCas}
\end{algorithm}
\ULforem}

%%%%%%%%%%%%%%%%%%%%%%%%%%%%%%%%%%%%%%%%%%%%%%%%%%%%%%%%%%%%%%
%%%%%%%%%%%%%%%%%%%%%%%%%%%%%%%%%%%%%%%%%%%%%%%%%%%%%%%%%%%%%%
\vskip .10in \noindent \textbf{A.3. Example of the constraint matrix} \vskip .10in
\label{SUBSEC:egH}

We illustrate the construction of the constraint matrix $\mathbf{H}$ using the Example \ref{EXM:TT} presented in Figure \ref{Fig:Tri_shape}.

Continue our discussions in Section A.2 and consider the case of $d=2$ piecewise polynomial. Recall that there are in total 10 Bernstein basis polynomials, with coefficients $\{\gamma_{ijkl}\}$ and $\{\widetilde{\gamma}_{ijkl}\}$ in (\ref{EQ:gamma}),
and the barycentric coordinate of $\bs{v}_5$ with respect to $T$ is $(2, -1, 0, 0)$. 

As shown in (\ref{EQN:c0dd}), if the trivariate polynomial is continuous over the whole region, then $\widetilde{\gamma}_{0ijk}=\gamma_{0ikj}$ for $i+j+k=d$. In this case, we can write the constraint matrix $\mathbf{H}$ as

\begin{equation}
	\mathbf{H}=\left(
	\begin{array}{cccccccccccccccccccc}
		0 \!&\! 0 \!&\! 0 \!&\! 0 \!&\! 1 \!&\! 0 \!&\! 0 \!&\! 0 \!&\! 0 \!&\! 0 \!&\! 0 \!&\! 0 \!&\! 0 \!&\! 0 \!&\! -1 \!&\! 0 \!&\! 0 \!&\! 0 \!&\! 0 \!&\! 0 \\
		0 \!&\! 0 \!&\! 0 \!&\! 0 \!&\! 0 \!&\! 1 \!&\! 0 \!&\! 0 \!&\! 0 \!&\! 0 \!&\! 0 \!&\! 0 \!&\! 0 \!&\! 0 \!&\! 0 \!&\! 0 \!&\! -1 \!&\! 0 \!&\! 0 \!&\! 0 \\
		0 \!&\! 0 \!&\! 0 \!&\! 0 \!&\! 0 \!&\! 0 \!&\! 1 \!&\! 0 \!&\! 0 \!&\! 0 \!&\! 0 \!&\! 0 \!&\! 0 \!&\! 0 \!&\! 0 \!&\! -1 \!&\! 0 \!&\! 0 \!&\! 0 \!&\! 0 \\
		0 \!&\! 0 \!&\! 0 \!&\! 0 \!&\! 0 \!&\! 0 \!&\! 0 \!&\! 1 \!&\! 0 \!&\! 0 \!&\! 0 \!&\! 0 \!&\! 0 \!&\! 0 \!&\! 0 \!&\! 0 \!&\! 0 \!&\! 0 \!&\! 0 \!&\! -1 \\
		0 \!&\! 0 \!&\! 0 \!&\! 0 \!&\! 0 \!&\! 0 \!&\! 0 \!&\! 0 \!&\! 1 \!&\! 0 \!&\! 0 \!&\! 0 \!&\! 0 \!&\! 0 \!&\! 0 \!&\! 0 \!&\! 0 \!&\! 0 \!&\! -1 \!&\! 0 \\
		0 \!&\! 0 \!&\! 0 \!&\! 0 \!&\! 0 \!&\! 0 \!&\! 0 \!&\! 0 \!&\! 0 \!&\! 1 \!&\! 0 \!&\! 0 \!&\! 0 \!&\! 0 \!&\! 0 \!&\! 0 \!&\! 0 \!&\! -1 \!&\! 0 \!&\! 0 \\
	\end{array}\right). \label{EQ:H_r0_mtx}
\end{equation}

Furthermore, if the trivariate polynomial has continuous first derivatives over the whole region, then (\ref{EQ:H_r}) holds for both $m=r=0$ and $m=1$. Thus, in addition, we also need for any non-negative integers $i$, $j$, $k$ such that $i+j+k=1$, 	
\begin{align}
	\widetilde{\gamma}_{1ijk}
	=&\gamma_{1ikj}B_{1000}^1(\bs{v}_5)+\gamma_{0,i+1,j}B_{0100}^1(\bs{v}_5) +\gamma_{0,i,k+1,j}B_{0010}^1(\bs{v}_5)+\gamma_{0,i,k,j+1}B_{0001}^1(\bs{v}_5) \nonumber \\
	=&2\gamma_{1ikj}-\gamma_{0,i+1,k,j}.
\label{EQN:c1}
\end{align}
Therefore, for $r = 1$ where the trivariate spline has continuous first derivatives over the whole region, we will obtain the following $\mathbf{H}$
\[
	\mathbf{H}=\left(
	\begin{array}{cccccccccccccccccccc}
		0 \!&\! 0 \!&\! 0 \!&\! 0 \!&\! 1 \!&\! 0 \!&\! 0 \!&\! 0 \!&\! 0 \!&\! 0 \!&\! 0 \!&\! 0 \!&\! 0 \!&\! 0 \!&\!\! -1 \!\!&\! 0 \!&\! 0 \!&\! 0 \!&\! 0 \!&\! 0 \\
		0 \!&\! 0 \!&\! 0 \!&\! 0 \!&\! 0 \!&\! 1 \!&\! 0 \!&\! 0 \!&\! 0 \!&\! 0 \!&\! 0 \!&\! 0 \!&\! 0 \!&\! 0 \!&\! 0 \!&\! 0 \!&\!\! -1 \!\!&\! 0 \!&\! 0 \!&\! 0 \\
		0 \!&\! 0 \!&\! 0 \!&\! 0 \!&\! 0 \!&\! 0 \!&\! 1 \!&\! 0 \!&\! 0 \!&\! 0 \!&\! 0 \!&\! 0 \!&\! 0 \!&\! 0 \!&\! 0 \!&\!\! -1 \!\!&\! 0 \!&\! 0 \!&\! 0 \!&\! 0 \\
		0 \!&\! 0 \!&\! 0 \!&\! 0 \!&\! 0 \!&\! 0 \!&\! 0 \!&\! 1 \!&\! 0 \!&\! 0 \!&\! 0 \!&\! 0 \!&\! 0 \!&\! 0 \!&\! 0 \!&\! 0 \!&\! 0 \!&\! 0 \!&\! 0 \!&\! -1 \\
		0 \!&\! 0 \!&\! 0 \!&\! 0 \!&\! 0 \!&\! 0 \!&\! 0 \!&\! 0 \!&\! 1 \!&\! 0 \!&\! 0 \!&\! 0 \!&\! 0 \!&\! 0 \!&\! 0 \!&\! 0 \!&\! 0 \!&\! 0 \!&\!\! -1 \!\!&\! 0 \\
		0 \!&\! 0 \!&\! 0 \!&\! 0 \!&\! 0 \!&\! 0 \!&\! 0 \!&\! 0 \!&\! 0 \!&\! 1 \!&\! 0 \!&\! 0 \!&\! 0 \!&\! 0 \!&\! 0 \!&\! 0 \!&\! 0 \!&\!\! -1 \!\!&\! 0 \!&\! 0 \\
		0 \!&\! 2 \!&\! 0 \!&\! 0 \!&\!\! -1 \!\!&\! 0 \!&\! 0 \!&\! 0 \!&\! 0 \!&\! 0 \!&\! 0 \!&\!\! -1 \!\!&\! 0 \!&\! 0 \!&\! 0 \!&\! 0 \!&\! 0 \!&\! 0 \!&\! 0 \!&\! 0 \\
		0 \!&\! 0 \!&\! 2 \!&\! 0 \!&\! 0 \!&\!\! -1 \!\!&\! 0 \!&\! 0 \!&\! 0 \!&\! 0 \!&\! 0 \!&\! 0 \!&\!\! -1 \!\!&\! 0 \!&\! 0 \!&\! 0 \!&\! 0 \!&\! 0 \!&\! 0 \!&\! 0 \\
		0 \!&\! 0 \!&\! 0 \!&\! 2 \!&\! 0 \!&\! 0 \!&\!\! -1 \!\!&\! 0 \!&\! 0 \!&\! 0 \!&\! 0 \!&\! 0 \!&\! 0 \!&\!\! -1 \!\!&\! 0 \!&\! 0 \!&\! 0 \!&\! 0 \!&\! 0 \!&\! 0 \\
	\end{array}\right).
\]
Here, the first six rows match the form in (\ref{EQ:H_r0_mtx}), and the last three rows correspond to (\ref{EQN:c1}).

%%%%%%%%%%%%%%%%%%%%%%%%%%%%%%%%%%%%%%%%%%%%%%%%%%%%%%%%%%%%%%
%%%%%%%%%%%%%%%%%%%%%%%%%%%%%%%%%%%%%%%%%%%%%%%%%%%%%%%%%%%%%%
\vskip .10in \noindent \textbf{A.4. Proof of Theorems in Section A.2} \vskip .10in
\label{SUBSEC:proof}

%%%%%%%%%%%%%%%%%%%%%%%%%%%%%%%%%%%%%%%%%%%%%%%%%%%%%%%%%%%%%%
\vskip .10in \noindent \textsc{A.4.1. Theoretical properties of Bernstein bases} \vskip .10in
\label{SUBSUBSEC:proBern}

%%%%%%%%%%%%%%%%%%%%%%%%%%%%%%%%%%%%%%%%%%%%%%%%%%%%%%%%%%%%%%
\begin{lemma}
\label{LEM:basis}
For the Bernstein basis functions $\{B_{ijkl}^{d, T}(\bs{p})\}_{i+j+k+l=d}$, each $B_{ijkl}^{d, T}$ has a unique maximum at the point $d^{-1}(i\bs{v}_1+j\bs{v}_2+k\bs{v}_3+l\bs{v}_4)$.
\end{lemma}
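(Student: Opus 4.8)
The plan is to carry the maximization into barycentric coordinates, where it becomes a standard constrained optimization over the standard simplex that is solved by the weighted AM--GM inequality. Since $T=\langle\bs{v}_1,\bs{v}_2,\bs{v}_3,\bs{v}_4\rangle$ is nondegenerate, the map $\bs{b}=(b_1,b_2,b_3,b_4)\mapsto\sum_m b_m\bs{v}_m$ is an affine bijection from $\Sigma:=\{\bs{b}\in\mathbb{R}^4: b_m\ge0,\ \sum_m b_m=1\}$ onto $T$, and under this identification $B_{ijkl}^{d,T}(\bs{v})=\tfrac{d!}{i!j!k!l!}\,g(\bs{b})$ with $g(\bs{b}):=b_1^ib_2^jb_3^kb_4^l$. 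Because the multinomial coefficient is positive, it suffices to prove that $g$ attains its maximum over $\Sigma$ at the single point $\bs{b}^\star:=d^{-1}(i,j,k,l)$; translating back, $\bs{b}^\star$ is the point with Cartesian coordinates $d^{-1}(i\bs{v}_1+j\bs{v}_2+k\bs{v}_3+l\bs{v}_4)$, which is exactly the claimed maximizer.

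The core step is the weighted AM--GM inequality. Let $S:=\{m:\text{the }m\text{th exponent among }i,j,k,l\text{ is positive}\}$, which is nonempty since $d\ge1$, and write $\alpha_m$ for the $m$th exponent. Applying weighted AM--GM with weights $\alpha_m/d$ ($m\in S$) to the numbers $b_m/\alpha_m$ gives
\[
\prod_{m\in S}\Big(\tfrac{b_m}{\alpha_m}\Big)^{\alpha_m/d}\ \le\ \sum_{m\in S}\tfrac{\alpha_m}{d}\cdot\tfrac{b_m}{\alpha_m}\ =\ \tfrac1d\sum_{m\in S}b_m\ \le\ \tfrac1d .
\]
Raising to the power $d$ and using $b_m^{0}=1$ for $m\notin S$ (with the convention $0^0=1$) yields $g(\bs{b})\le d^{-d}\prod_m\alpha_m^{\alpha_m}$ for every $\bs{b}\in\Sigma$. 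Equality holds if and only if $b_m=0$ for $m\notin S$ and the ratios $b_m/\alpha_m$ are all equal for $m\in S$; combined with $\sum_m b_m=1$ this forces $b_m=\alpha_m/d$ for every $m$, i.e. $\bs{b}=\bs{b}^\star$. Hence $\bs{b}^\star$ is the unique maximizer, and the lemma follows after translating back to Cartesian coordinates.

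I do not anticipate a genuinely difficult step; the only points that require a little care are the treatment of vanishing exponents (hence the index set $S$ and the $0^0$ convention) and the use of the \emph{equality case} of weighted AM--GM — not merely the inequality — in order to obtain uniqueness rather than just existence of a maximizer. If one prefers to avoid AM--GM, an equivalent route is to observe that, after reducing to the face of $\Sigma$ on which all relevant exponents are positive, $\log g=\sum_m\alpha_m\log b_m$ is strictly concave on the relative interior, so it has at most one stationary point; a Lagrange-multiplier computation $\alpha_m/b_m=\mu$ under $\sum_m b_m=1$ gives $\mu=d$ and $b_m=\alpha_m/d$, and since $g=0$ on the boundary of that face this stationary point is the global maximum.
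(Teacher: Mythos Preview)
Your proof is correct and complete, but it takes a different route from the paper. The paper computes the directional derivatives of $B_{ijkl}^{d,T}$ along the three edge directions $\bs{u}_1=\bs{v}_1-\bs{v}_2$, $\bs{u}_2=\bs{v}_1-\bs{v}_3$, $\bs{u}_3=\bs{v}_1-\bs{v}_4$ (via Lemma~\ref{LEM:ddb}), obtaining expressions such as $D_{\bs{u}_1}B_{ijkl}^d(\bs{v})=B_{ijkl}^d(\bs{v})(ib_1^{-1}-jb_2^{-1})$, and then solves the resulting first-order system together with $b_1+b_2+b_3+b_4=1$ to locate the stationary point at $\bs{b}=d^{-1}(i,j,k,l)$. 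This is essentially the Lagrange-multiplier argument you sketch as an alternative at the end. Your primary argument via weighted AM--GM is more elementary and, in fact, more complete: it yields the global bound and the uniqueness of the maximizer directly from the equality case, and it handles vanishing exponents and boundary faces explicitly, whereas the paper's derivative computation tacitly assumes $b_m>0$ and stops at identifying the critical point without verifying it is the unique global maximum. The derivative approach, on the other hand, ties in naturally with the paper's running theme of directional derivatives of Bernstein polynomials.
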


To show Lemma \ref{LEM:basis}, we first need to show Lemma \ref{LEM:ddb}.

%%%%%%%%%%%%%%%%%%%%%%%%%%%%%%%%%%%%%%%%%%%%%%%%%%%%%%%%%%%%%%
\begin{proof}[Proof of Lemma \ref{LEM:ddb}]
%\noindent\textbf{Proof of Lemma \ref{LEM:ddb}.}
    Suppose $\bs{b}=(b_1,b_2,b_3,b_4)$ are the barycentric coordinates of $\bs{p}$. Then the barycentric coordinates of $\bs{p}+t\bs{u}$ are $ (b_1 + ta_1,b_2 + ta_2,b_3 + ta_3,b_4 + ta_4)$. Thus, for $i+j+k+l=d$,
    \[
        B_{ijkl}^d(\bs{p}+t\bs{u}) 
	    = \frac{d!}{i!j!k!l!}(b_1 + ta_1)^i (b_2 + ta_2)^j (b_3 + ta_3)^k (b_4 + ta_4)^l.
    \]
    Hence, 
    \begin{align*}
        &D_{\bs{u}}B_{ijkl}^d(\bs{p}) 
    	= \frac{\partial}{\partial t}\phi(\bs{p}+t\bs{u}) \Big|_{t=0} \\
    	=& \frac{d!}{i!j!k!l!}\left(ia_1b_1^{i-1}b_2^jb_3^kb_4^l + ja_2b_1^ib_2^{j-1}b_3^kb_4^l + ka_3b_1^ib_2^jb_3^{k-1}b_4^l + la_4b_1^ib_2^jb_3^kb_4^{l-1}\right) \\
    	=& d\left\{a_1 B_{i-1,j,k,l}^{d-1}(\bs{p}) + a_2 B_{i,j-1,k,l}^{d-1}(\bs{p}) + a_3 B_{i,j,k-1,l}^{d-1}(\bs{p}) + a_4 B_{i,j,k,l-1}^{d-1}(\bs{p})\right\}. 
    \end{align*}
\end{proof}

%%%%%%%%%%%%%%%%%%%%%%%%%%%%%%%%%%%%%%%%%%%%%%%%%%%%%%%%%%%%%%
\begin{proof}[Proof of Lemma \ref{LEM:basis}]
%\noindent\textbf{Proof of Lemma \ref{LEM:basis}}
    For $T = \langle\bs{v}_1, \bs{v}_2, \bs{v}_3, \bs{v}_4\rangle$, the barycentric coordinates for $\bs{v}_1$, $\bs{v}_2$, $\bs{v}_3$ and $\bs{v}_4$ are $(1,0,0,0)$, $(0,1,0,0)$, $(0,0,1,0)$ and $(0,0,0,1)$, respectively. 
    Accordingly, the directional coordinates of $\bs{u}_1=\bs{v}_1-\bs{v}_2$, $\bs{u}_2=\bs{v}_1-\bs{v}_3$ and $\bs{u}_3=\bs{v}_1-\bs{v}_4$ are $(1,-1,0,0)$, $(1,0,-1,0)$ and $(1,0,0,-1)$, respectively.
    For $\bs{p}\in T$ with barycentric coordinates $\bs{b}=(b_1,b_2,b_3,b_4)$, consider derivatives of $B_{ijkl}^d(\bs{p})$ with respect to directions $\bs{u}_1$, $\bs{u}_2$ and $\bs{u}_3$, then
    \begin{align*}
        D_{\bs{u}_1}B_{ijkl}^d(\bs{p})
        &=B_{ijkl}^d(\bs{p}) (ib_1^{-1} - jb_2^{-1}), ~~~~
        D_{\bs{u}_2}B_{ijkl}^d(\bs{p})
        =B_{ijkl}^d(\bs{p}) (ib_1^{-1} - kb_3^{-1}), \\
        D_{\bs{u}_3}B_{ijkl}^d(\bs{p})
        &=B_{ijkl}^d(\bs{p}) (ib_1^{-1} - lb_4^{-1}).
    \end{align*}
    Setting these equations to zero and combining with $b_1+b_2+b_3+b_4=1$ gives $(b_1,b_2,b_3,b_4)=d^{-1}(i,j,k,l)$.
\end{proof}
    
%%%%%%%%%%%%%%%%%%%%%%%%%%%%%%%%%%%%%%%%%%%%%%%%%%%%%%%%%%%%%%
\vskip .10in \noindent \textsc{A.4.2. Proof of Theorems \ref{THM:Dup} and \ref{THM:smoothjoin}} \vskip .10in
\label{SUBSUBSEC:proofC}

%%%%%%%%%%%%%%%%%%%%%%%%%%%%%%%%%%%%%%%%%%%%%%%%%%%%%%%%%%%%%%
\begin{proof}[Proof of Theorem \ref{THM:Dup}]
%\noindent\textbf{Proof of Theorem \ref{THM:Dup}}
For $\phi(\bs{p})=\sum_{i+j+k+l=d} \gamma_{ijkl}B_{ijkl}^d(\bs{p})$, by Lemma \ref{LEM:ddb},
\begin{align*}
    &D_{\bs{u}}\phi(\bs{p})
    =\sum_{i+j+k+l=d} \gamma_{ijkl}D_{\bs{u}}B_{ijkl}^{d}(\bs{p}) \\
    =&\sum_{i+j+k+l=d} \gamma_{ijkl}d\left\{a_1 B_{i-1,j,k,l}^{d-1}(\bs{p}) + a_2 B_{i,j-1,k,l}^{d-1}(\bs{p}) + a_3 B_{i,j,k-1,l}^{d-1}(\bs{p}) + a_4 B_{i,j,k,l-1}^{d-1}(\bs{p})\right\} \\
    =&d \sum_{i+j+k+l=d-1} \left( a_1\gamma_{i+1,j,k,l}+a_2\gamma_{i,j+1,k,l}+a_3\gamma_{i,j,k+1,l}+a_4\gamma_{i,j,k,l+1}\right) B_{ijkl}^{d-1}(\bs{p}).
\end{align*}
Thus, (\ref{EQU:Dup}) follows. Consequently, one can obtain (\ref{EQU:derivative}) by repeatedly applying (\ref{EQU:Dup}) for directions $\bs{u}_1,\ldots,\bs{u}_m$. 
%\hfill \BlackBox
\end{proof}

To show Theorem \ref{THM:smoothjoin}, we need the following results.

%%%%%%%%%%%%%%%%%%%%%%%%%%%%%%%%%%%%%%%%%%%%%%%%%%%%%%%%%%%%%%
\begin{theorem}
\citep[Theorems 15.10 in][]{Lai:Schumaker:07}
\label{THM:deCasteljau}
Suppose $\phi(\bs{p})$ is a trivariate polynomial with B-form $\phi(\bs{p})=\sum_{i+j+k+l=d} \gamma_{ijkl}B_{ijkl}^d(\bs{p})$. Define $\gamma_{ijkl}^{(0)}:=\gamma_{ijkl}$, $i+j+k+l=d$. Suppose $\bs{p}$ has barycentric coordinates $\bs{b}=(b_1,b_2,b_3,b_4)$. Then
\[
    \phi(\bs{p})=\sum_{i+j+k+l=d-m} \gamma_{ijkl}^{(m)}B_{ijkl}^{d-m}(\bs{p}),
\]
where for $m=1,\ldots,d$, $\gamma_{ijkl}^{(m)}$ are computed by the recursion 
\begin{equation}
\label{EQN:deCasteljau}
    \gamma_{ijkl}^{(m)}=b_1\gamma_{i+1,j,k,l}^{(m-1)} + b_2\gamma_{i,j+1,k,l}^{(m-1)} + b_3\gamma_{i,j,k+1,l}^{(m-1)} + b_4\gamma_{i,j,k,l+1}^{(m-1)},
\end{equation}
for $i+j+k+l=d-m$.
\end{theorem}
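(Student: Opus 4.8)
\textbf{Proof proposal for Theorem~\ref{THM:deCasteljau}.} The plan is to prove, by induction on $m$, the statement that for every $m=0,1,\dots,d$ one has $p(\bs{v})=\sum_{i+j+k+l=d-m}\gamma^{(m)}_{ijkl}B^{d-m}_{ijkl}(\bs{v})$, where the numbers $\gamma^{(m)}_{ijkl}$ are those produced by the recursion \eqref{EQN:deCasteljau}. The base case $m=0$ is precisely the given B-form of $p$ together with the convention $\gamma^{(0)}_{ijkl}=\gamma_{ijkl}$, so there is nothing to prove there.

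First I would record the one elementary identity that is needed, an ``index-lowering'' relation for Bernstein polynomials: if $\bs{b}=(b_1,b_2,b_3,b_4)$ are the barycentric coordinates of $\bs{v}$ relative to $T$ and $n\ge 1$, then for any $i+j+k+l=n$ one has $b_1\,B^{\,n-1}_{i-1,j,k,l}(\bs{v})=\tfrac{i}{n}\,B^{\,n}_{ijkl}(\bs{v})$, and symmetrically with $b_2,b_3,b_4$ acting on the other three index slots; here a Bernstein symbol carrying a negative index is read as $0$, which is consistent with the coefficient $i/n=0$ when $i=0$. This follows immediately by writing both sides in the monomial form $\tfrac{d!}{i!j!k!l!}b_1^i b_2^j b_3^k b_4^l$ and cancelling factorials, so I would dispatch it as a one-line sublemma.

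For the inductive step, assume the claim at level $m-1$. Substituting the recursion \eqref{EQN:deCasteljau} into $\sum_{i+j+k+l=d-m}\gamma^{(m)}_{ijkl}B^{d-m}_{ijkl}(\bs{v})$ splits it into four sums, in which $b_1,b_2,b_3,b_4$ respectively multiply coefficients $\gamma^{(m-1)}$ with one index raised. In the $b_1$-sum I would reindex $i'=i+1$ and apply the index-lowering identity to turn $b_1 B^{d-m}_{i'-1,j,k,l}(\bs{v})$ into $\tfrac{i'}{d-m+1}B^{d-m+1}_{i'jkl}(\bs{v})$, and likewise for the other three sums; collecting the coefficient of each $B^{d-m+1}_{ijkl}(\bs{v})$ with $i+j+k+l=d-m+1$ then yields $\gamma^{(m-1)}_{ijkl}\cdot\frac{i+j+k+l}{d-m+1}=\gamma^{(m-1)}_{ijkl}$, so the whole expression collapses to $\sum_{i+j+k+l=d-m+1}\gamma^{(m-1)}_{ijkl}B^{d-m+1}_{ijkl}(\bs{v})$, which equals $p(\bs{v})$ by the induction hypothesis. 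This completes the induction and hence proves the theorem.

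I expect the only genuine care needed to lie in the bookkeeping of the four reindexings and the boundary terms where an index would drop below $0$. Once the index-lowering identity is in hand, the recombination of coefficients is forced purely by the index constraint $i+j+k+l=d-m+1$, and nothing deeper enters — in particular, the partition-of-unity property of the Bernstein basis is not needed at any point.
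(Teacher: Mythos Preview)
Your argument is correct. The index-lowering identity $b_1 B^{\,n-1}_{i-1,j,k,l}(\bs v)=\tfrac{i}{n}B^{\,n}_{ijkl}(\bs v)$ (and its three symmetric companions) is exactly the tool needed, and your handling of the boundary terms via the convention that a negative index gives $0$, together with the observation that the vanishing factor $i/n$ absorbs those cases, is clean. After the four reindexings the coefficient of $B^{d-m+1}_{ijkl}(\bs v)$ is $\gamma^{(m-1)}_{ijkl}\cdot\tfrac{i+j+k+l}{d-m+1}=\gamma^{(m-1)}_{ijkl}$, so the induction closes as you describe.

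As for comparison with the paper: the paper does not actually give a proof of this theorem. It records the statement as Theorem~15.10 of Lai--Schumaker, adds a remark pointing to Section~15.6 there for details, and then moves on to use it (via Lemma~\ref{LEM:gammam}) in the proof of Theorem~\ref{THM:smoothjoin}. Your proof is the standard one; the more commonly written variant runs the induction in the opposite direction, starting from the level-$(m-1)$ representation and applying the degree-lowering recursion $B^{\,n}_{ijkl}=b_1 B^{\,n-1}_{i-1,j,k,l}+b_2 B^{\,n-1}_{i,j-1,k,l}+b_3 B^{\,n-1}_{i,j,k-1,l}+b_4 B^{\,n-1}_{i,j,k,l-1}$ to descend to level~$m$, but the two are mirror images of each other and there is no substantive difference.
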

\begin{remark}
The recursive formula (\ref{EQN:deCasteljau}) is also referred as de$\,$Casteljau algorithm. See Section 15.6 in \cite{Lai:Schumaker:07} for more details.
\end{remark}

%%%%%%%%%%%%%%%%%%%%%%%%%%%%%%%%%%%%%%%%%%%%%%%%%%%%%%%%%%%%%%
\begin{lemma}\label{LEM:gammam}
The coefficients in the recursive formula of de$\,$Casteljau algorithm (\ref{EQN:deCasteljau}) are given by 
\begin{equation}
\label{EQN:gammam}
    \gamma_{ijkl}^{(m)}=\sum_{i^{\prime}+j^{\prime}+k^{\prime}+l^{\prime}=m}\gamma_{i+i^{\prime},j+j^{\prime},k+k^{\prime},l+l^{\prime}}B_{i^{\prime}j^{\prime}k^{\prime}l^{\prime}}^{m}(\bs{p}),
\end{equation}
where $i+j+k+l=d-m$.
\end{lemma}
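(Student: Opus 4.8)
The plan is to prove the closed form \eqref{EQN:gammam} for the de$\,$Casteljau intermediate coefficients by induction on $m$, which is the natural companion to the recursion \eqref{EQN:deCasteljau} of Theorem \ref{THM:deCasteljau}.

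\textbf{Base case.} For $m=0$ the right-hand side of \eqref{EQN:gammam} is the single term with $i'=j'=k'=l'=0$, namely $\gamma_{ijkl}B_{0000}^0(\bs{v})=\gamma_{ijkl}$, which matches the definition $\gamma_{ijkl}^{(0)}:=\gamma_{ijkl}$. (If one prefers to start at $m=1$, simply plug $m=1$ into \eqref{EQN:deCasteljau} and observe it is $b_1B_{1000}^1+b_2B_{0100}^1+b_3B_{0010}^1+b_4B_{0001}^1$ against the appropriate shifted $\gamma$'s, which is \eqref{EQN:gammam} for $m=1$ since $B_{i'j'k'l'}^1(\bs{v})=b_1^{i'}b_2^{j'}b_3^{k'}b_4^{l'}$ with unit multinomial coefficient.)

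\textbf{Inductive step.} Assume \eqref{EQN:gammam} holds for $m-1$. Apply the recursion \eqref{EQN:deCasteljau}, which expresses $\gamma_{ijkl}^{(m)}$ as $b_1\gamma_{i+1,j,k,l}^{(m-1)}+b_2\gamma_{i,j+1,k,l}^{(m-1)}+b_3\gamma_{i,j,k+1,l}^{(m-1)}+b_4\gamma_{i,j,k,l+1}^{(m-1)}$. Substitute the inductive formula for each of the four $\gamma^{(m-1)}$ terms; this turns $\gamma_{ijkl}^{(m)}$ into a sum over multi-indices $(i',j',k',l')$ with $i'+j'+k'+l'=m-1$, of $\gamma_{i+1+i',\ldots}$ times $b_1 B_{i'j'k'l'}^{m-1}(\bs{v})$, plus the analogous three families. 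Now re-index: in the first family set $(\tilde\imath,\tilde\jmath,\tilde k,\tilde l)=(i'+1,j',k',l')$, etc. After relabelling, every term of the combined sum is indexed by a multi-index $(\tilde\imath,\tilde\jmath,\tilde k,\tilde l)$ with $\tilde\imath+\tilde\jmath+\tilde k+\tilde l=m$ and carries the coefficient $\gamma_{i+\tilde\imath,\,j+\tilde\jmath,\,k+\tilde k,\,l+\tilde l}$ multiplied by the total contribution $b_1B_{\tilde\imath-1,\tilde\jmath,\tilde k,\tilde l}^{m-1}(\bs{v})+b_2B_{\tilde\imath,\tilde\jmath-1,\tilde k,\tilde l}^{m-1}(\bs{v})+b_3B_{\tilde\imath,\tilde\jmath,\tilde k-1,\tilde l}^{m-1}(\bs{v})+b_4B_{\tilde\imath,\tilde\jmath,\tilde k,\tilde l-1}^{m-1}(\bs{v})$, where terms with a negative index are understood to be absent. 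The remaining task is to recognize this bracket as $B_{\tilde\imath\tilde\jmath\tilde k\tilde l}^{m}(\bs{v})$, i.e. to verify the elementary Bernstein identity
\[
B_{\tilde\imath\tilde\jmath\tilde k\tilde l}^{m}(\bs{v})
= b_1B_{\tilde\imath-1,\tilde\jmath,\tilde k,\tilde l}^{m-1}(\bs{v})
+ b_2B_{\tilde\imath,\tilde\jmath-1,\tilde k,\tilde l}^{m-1}(\bs{v})
+ b_3B_{\tilde\imath,\tilde\jmath,\tilde k-1,\tilde l}^{m-1}(\bs{v})
+ b_4B_{\tilde\imath,\tilde\jmath,\tilde k,\tilde l-1}^{m-1}(\bs{v}).
\]
This follows by writing both sides in terms of $b_1^{\tilde\imath}b_2^{\tilde\jmath}b_3^{\tilde k}b_4^{\tilde l}$ and checking that the multinomial coefficients satisfy $\binom{m}{\tilde\imath\,\tilde\jmath\,\tilde k\,\tilde l}=\binom{m-1}{\tilde\imath-1\,\tilde\jmath\,\tilde k\,\tilde l}+\binom{m-1}{\tilde\imath\,\tilde\jmath-1\,\tilde k\,\tilde l}+\binom{m-1}{\tilde\imath\,\tilde\jmath\,\tilde k-1\,\tilde l}+\binom{m-1}{\tilde\imath\,\tilde\jmath\,\tilde k\,\tilde l-1}$ — the four-variable Pascal rule, which is immediate from $\frac{(m-1)!}{\cdots}$ expansions. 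Substituting this identity collapses the combined sum exactly into the right-hand side of \eqref{EQN:gammam} for $m$, completing the induction.

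\textbf{Main obstacle.} Nothing here is deep; the only place requiring care is the bookkeeping in the re-indexing step — making sure the four shifted copies of the inductive hypothesis align onto a common index set and that the boundary cases (a shifted index equal to $-1$, i.e. a missing Bernstein term) are handled consistently with the Pascal identity, where the corresponding multinomial coefficient is $0$. Once the convention "Bernstein polynomials with a negative subscript are zero" is fixed, the Pascal recursion absorbs those cases automatically, and the argument goes through cleanly.
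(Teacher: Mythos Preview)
Your induction argument is correct. The paper, however, takes a different and more compact route: it introduces shift operators $E_1\gamma_{ijkl}=\gamma_{i+1,j,k,l}$, \ldots, $E_4\gamma_{ijkl}=\gamma_{i,j,k,l+1}$, rewrites the de\,Casteljau step \eqref{EQN:deCasteljau} as $\gamma^{(m)}_{ijkl}=(b_1E_1+b_2E_2+b_3E_3+b_4E_4)\gamma^{(m-1)}_{ijkl}$, and hence $\gamma^{(m)}_{ijkl}=(b_1E_1+b_2E_2+b_3E_3+b_4E_4)^{m}\gamma_{ijkl}$; a single application of the multinomial theorem then produces the Bernstein polynomials $B^{m}_{i'j'k'l'}(\bs{v})$ and the shifted indices in one stroke. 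Your approach unwinds the same iteration explicitly, trading the operator formalism for a direct induction; the price is the re-indexing bookkeeping and the need to invoke the four-term Pascal identity for Bernstein polynomials, whereas in the paper that identity is implicitly absorbed into the multinomial expansion. Both arguments are equally valid; the operator version is slicker, yours is more elementary and self-contained.
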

\begin{proof}
Define operators $E_1\gamma_{ijkl}=\gamma_{i+1,j,k,l}$, $E_2\gamma_{ijkl}=\gamma_{i,j+1,k,l}$, $E_3\gamma_{ijkl}=\gamma_{i,j,k+1,l}$ and $E_4\gamma_{ijkl}=\gamma_{i,j,k,l+1}$.
Thus, by (\ref{EQN:deCasteljau}), for $i+j+k+l=d-m$,
\begin{align*}
    \gamma_{ijkl}^{(m)}
    =&\left(b_1E_1+b_2E_2+b_3E_3+b_4E_4\right)\gamma_{ijkl}^{(m-1)}
    =\left(b_1E_1+b_2E_2+b_3E_3+b_4E_4\right)^m\gamma_{ijkl}\\
    =&\sum_{i^{\prime}+j^{\prime}+k^{\prime}+l^{\prime}=m} B_{i^{\prime}j^{\prime}k^{\prime}l^{\prime}}^{m}(\bs{p}) E_1^{i^{\prime}}E_2^{j^{\prime}}E_3^{k^{\prime}}E_4^{l^{\prime}}\gamma_{ijkl}
    = \sum_{i^{\prime}+j^{\prime}+k^{\prime}+l^{\prime}=m}\gamma_{i+i^{\prime},j+j^{\prime},k+k^{\prime},l+l^{\prime}}B_{i^{\prime}j^{\prime}k^{\prime}l^{\prime}}^{m}(\bs{p}).
\end{align*}
\end{proof}

%%%%%%%%%%%%%%%%%%%%%%%%%%%%%%%%%%%%%%%%%%%%%%%%%%%%%%%%%%%%%%
\begin{lemma}
\label{LEM:gamman}
Suppose $\phi(\bs{p})$ is a trivariate polynomial with B-form $\phi(\bs{p})=\sum_{i+j+k+l=d} \gamma_{ijkl}B_{ijkl}^{d}(\bs{p})$. Then for any $1\leq n\leq d$, the $n$-th order directional derivative of $\phi$ with respect to the direction $\bs{u}=\bs{v}_4-\bs{v}_2$ is given by
\[
    D_{\bs{u}}^n\phi(\bs{p}) 
	= \frac{d!}{(d-n)!} \sum_{i+j+k+l=d-n} \gamma_{ijkl}^{(n)} B_{ijkl}^{d-n}(\bs{p})
\]
with
\[
    \gamma_{ijkl}^{(n)}=\sum_{m=0}^n\binom{n}{m}(-1)^m\gamma_{i+m,j,k,l+n-m}, ~~~~i+j+k+l=d-n.
\]
\end{lemma}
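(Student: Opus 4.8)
The plan is to deduce the statement from Theorem~\ref{THM:Dup} together with the operator computation already carried out in the proof of Lemma~\ref{LEM:gammam}, so that essentially no new machinery is needed. First I would record the directional coordinates of $\bs{u} = \bs{v}_4 - \bs{v}_2$ relative to $T = \langle\bs{v}_1,\bs{v}_2,\bs{v}_3,\bs{v}_4\rangle$: since the four vertices have barycentric coordinates $(1,0,0,0)$, $(0,1,0,0)$, $(0,0,1,0)$, $(0,0,0,1)$, the direction $\bs{u}$ has $a_4 = 1$, exactly one further entry equal to $-1$ (sitting in the slot of the subtracted vertex), and all remaining entries $0$; in particular $a_1+a_2+a_3+a_4 = 0$, as required by Theorem~\ref{THM:Dup}.

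Next I would apply the general formula (\ref{EQU:derivative}) of Theorem~\ref{THM:Dup} with all $m = n$ directions taken to be $\bs{u}$, i.e. $\bs{a}^{(1)} = \cdots = \bs{a}^{(n)} = \bs{a}$. This gives at once
\[
D_{\bs{u}}^{n}p(\bs{v}) = \frac{d!}{(d-n)!}\sum_{i+j+k+l=d-n}\gamma_{ijkl}^{(n)}\,B_{ijkl}^{d-n}(\bs{v}),
\]
where $\gamma_{ijkl}^{(n)}$ is produced by $n$ applications of the recursion in Theorem~\ref{THM:Dup}. Because $\bs{a}$ has only the two nonzero entries $+1$ and $-1$, each step of that recursion collapses to a single shift-and-difference operation: $\gamma_{ijkl}^{(m)}$ equals $\gamma_{ijkl}^{(m-1)}$ with its fourth index raised by one, minus $\gamma_{ijkl}^{(m-1)}$ with the subtracted vertex's index raised by one, starting from $\gamma_{ijkl}^{(0)} = \gamma_{ijkl}$. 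So the only work left is to solve this linear recursion in closed form.

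For that step I would reuse the shift-operator device from the proof of Lemma~\ref{LEM:gammam}: introducing $E_1,\dots,E_4$ by letting $E_r$ raise the $r$-th multi-index by one, the recursion reads $\gamma_{ijkl}^{(m)} = (E_4 - E_{\star})\,\gamma_{ijkl}^{(m-1)}$, where $E_{\star}$ is the shift attached to the subtracted vertex, and hence $\gamma_{ijkl}^{(n)} = (E_4 - E_{\star})^{n}\gamma_{ijkl}$. Since $E_4$ and $E_{\star}$ act on disjoint coordinates they commute, so the binomial theorem applies verbatim, $(E_4 - E_{\star})^{n} = \sum_{m=0}^{n}\binom{n}{m}(-1)^{m}E_{\star}^{m}E_4^{\,n-m}$, and evaluating this on $\gamma_{ijkl}$ reproduces exactly the displayed closed form for $\gamma_{ijkl}^{(n)}$. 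A quick check that the constraint $i+j+k+l = d-n$ forces the shifted indices to sum to $d$, and that every shift only raises an index (so no coefficient with a negative index is ever invoked), finishes the proof.

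The argument has no genuinely hard step; its only delicate aspect is bookkeeping — tracking which of the four multi-indices is incremented at each stage and matching the slots through the binomial expansion. As an alternative I would note that the same closed form follows from a one-line induction on $n$, applying Pascal's identity $\binom{n}{m} = \binom{n-1}{m} + \binom{n-1}{m-1}$ directly to the recursion, and I would include this as a remark for readers who prefer to avoid the operator calculus.
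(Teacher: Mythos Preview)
Your approach is essentially the paper's own: both appeal to Theorem~\ref{THM:Dup} and then resolve the recursion via the shift operators from the proof of Lemma~\ref{LEM:gammam}, with the paper simply citing the closed form (\ref{EQN:gammam}) at the point $\bs{a}=(-1,0,0,1)$ rather than re-expanding $(E_4-E_\star)^n$ by the binomial theorem. One bookkeeping correction: the displayed formula increments the \emph{first} index by $m$, so the implicit tetrahedron ordering here is $T=\langle\bs{v}_2,\bs{v}_1,\bs{v}_3,\bs{v}_4\rangle$ (as in Theorem~\ref{THM:smoothjoin}), giving $\bs{a}=(-1,0,0,1)$ and hence $E_\star=E_1$; under your assumed ordering $\langle\bs{v}_1,\bs{v}_2,\bs{v}_3,\bs{v}_4\rangle$ your expansion would produce $\gamma_{i,j+m,k,l+n-m}$ instead of the claimed $\gamma_{i+m,j,k,l+n-m}$.
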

\begin{proof}
The directional coordinates of $\bs{u}=\bs{v}_4-\bs{v}_2$ is $\bs{a}=(-1,0,0,1)$. By (\ref{EQN:gammam}),
\begin{align*}
    \gamma_{ijkl}^{(n)}
    =&\sum_{i^{\prime}+j^{\prime}+k^{\prime}+l^{\prime}=n}\gamma_{i+i^{\prime},j+j^{\prime},k+k^{\prime},l+l^{\prime}}B_{i^{\prime}j^{\prime}k^{\prime}l^{\prime}}^{n}(\bs{p})
    =\sum_{i^{\prime}+l^{\prime}=n}\gamma_{i+i^{\prime},j,k,l+l^{\prime}}\frac{n!}{i^{\prime}!l^{\prime}!}(-1)^{i^{\prime}}\\
    =&\sum_{m=0}^n\binom{n}{m}(-1)^m\gamma_{i+m,j,k,l+n-m}, 
    ~~~~~~~~i+j+k+l=d-n.
\end{align*}
\end{proof}

%%%%%%%%%%%%%%%%%%%%%%%%%%%%%%%%%%%%%%%%%%%%%%%%%%%%%%%%%%%%%%
\begin{proof}[Proof of Theorem \ref{THM:smoothjoin}]
%\noindent\textbf{Proof of Theorem \ref{THM:smoothjoin}}
The equivalence between Statements 1 and 2 is obvious by definition. Thus, we just show the equivalence between Statements 2 and 3.

i) We start to consider when $r=0$. It is equivalent to consider directions $\bs{\widetilde{u}}$ along $F$, where the directional coordinates are $(0,1-\widetilde{b}_3-\widetilde{b}_4,\widetilde{b}_3,\widetilde{b}_4)$ and $(0,1-\widetilde{b}_3-\widetilde{b}_4,\widetilde{b}_4,\widetilde{b}_3)$ with respect to the tetrahedron $T$ and $\widetilde{T}$, respectively. Thus, $\phi$ and $\widetilde{\phi}$ join continuously along $F$ if and only if
\[
    \sum_{j+k+l=d}\!\!\!\gamma_{0jkl}\frac{d!}{j!k!l!}(1-\widetilde{b}_3-\widetilde{b}_4)^j(\widetilde{b}_3)^k(\widetilde{b}_4)^l
    \!=\!\!\!\!\!\!\sum_{j+k+l=d}\!\!\!\widetilde{\gamma}_{0jkl}\frac{d!}{j!k!l!}(1-\widetilde{b}_3-\widetilde{b}_4)^j(\widetilde{b}_4)^k(\widetilde{b}_3)^l.
\]
That is, $\gamma_{0jkl}=\widetilde{\gamma}_{0jlk}$, with $j+k+l=d$, which matches our conclusion in (\ref{EQ:H_r}).

ii) Then we consider for $r>0$. First note that 
\begin{equation}
\label{EQN:Dp}
    D_{\bs{u}}^n\phi(\bs{p}_F) = D_{\bs{u}}^n\widetilde{\phi}(\bs{p}_F)
\end{equation}
holds for any $\bs{p}_F\in F$, $n=0,\ldots,r$, if and only if (\ref{EQN:Dp}) holds for the direction $\bs{u}=\bs{v}_5-\bs{v}_3$. It is a fact because by the argument in i), all derivatives of $\phi$ and $\widetilde{\phi}$ corresponding to the directions $\bs{v}_3-\bs{v}_1$ and $\bs{v}_4-\bs{v}_1$, agree at every point on $F$. And derivatives in all other directions can be written as linear combinations of $D_{\bs{u}}$, $D_{\bs{v}_3-\bs{v}_1}$ and $D_{\bs{v}_4-\bs{v}_1}$. 

Let $\bs{b}=(b_1,b_2,b_3,b_4)$ be the barycentric coordinates of $\bs{v}_5$ relative to the tetrahedron $T$. Correspondingly, the directional coordinates of $\bs{u}$ are $\bs{a}=(b_1,b_2,b_3-1,b_4)$ and $\bs{\widetilde{a}}=(1,0,0,-1)$ with respect to $T$ and $\widetilde{T}$, respectively.

By Theorem \ref{THM:Dup}, for each $0\leq n\leq r$,
\begin{align*}
    D_{\bs{u}}^n\phi(\bs{p}_F) |_F
	=& \frac{d!}{(d-n)!} \sum_{j+k+l=d-n} \gamma_{0jkl}^{(n)}(\bs{a}) B_{0jkl}^{d-n}(\bs{p}_F), \\
	D_{\bs{u}}^n\widetilde{\phi}(\bs{p}_F) |_F
	=& \frac{d!}{(d-n)!} \sum_{j+k+l=d-n} \gamma_{0jkl}^{(n)}(\bs{\widetilde{a}}) B_{0jkl}^{d-n}(\bs{p}_F).
\end{align*}
Since for points $\bs{p}_F\in F$, $\widetilde{B}_{0jkl}^{d-n}(\bs{p}_F)=B_{0jlk}^{d-n}(\bs{p}_F)$, it follows that (\ref{EQN:Dp}) holds if and only if for $j+k+l=d-n$, $n=0,\ldots,r$,
\begin{equation}\label{EQN:gamgam}
    \widetilde{\gamma}_{0jkl}^{(n)}(\bs{\widetilde{a}})
    =\gamma_{0jlk}^{(n)}(\bs{a}).
\end{equation}
By Lemma \ref{LEM:gamman}, for $j+k+l=d-n$, 
\begin{equation}\label{EQN:gamtil}
    \widetilde{\gamma}_{0jkl}^{(n)}(\bs{\widetilde{a}})
    =\sum_{m=0}^n(-1)^{n-m}\binom{n}{m}\widetilde{\gamma}_{m,j,k,d-m-j-k}.
\end{equation}
In another direction, following the proof of Lemma \ref{LEM:gammam}, for $j+l+k=d-n$,
\begin{align}
    \gamma_{0jlk}^{(n)}(\bs{a})
    =&\left\{b_1E_1+b_2E_2+(b_3-1)E_3+b_4E_4\right\}^n \gamma_{0jlk} \nonumber\\
    =&\left(b_1E_1+b_2E_2+b_3E_3+b_4E_4-E_3\right)^n \gamma_{0jlk} \nonumber\\
    =&\sum_{m=0}^n(-1)^{n-m}\binom{n}{m}\left(b_1E_1+b_2E_2+b_3E_3+b_4E_4\right)^n \gamma_{0,j,l+n-m,k} \nonumber\\
    =&\sum_{m=0}^n(-1)^{n-m}\binom{n}{m} \gamma_{0,j,d-j-k-m,k}^{(m)}(\bs{b}). \label{EQN:gamab}
\end{align}
Combining (\ref{EQN:gamtil}) and (\ref{EQN:gamab}), (\ref{EQN:gamgam}) holds if and only if for $j+k+l=d-n$ and $n=0,\ldots,r$, 
$\widetilde{\gamma}_{njkl}=\gamma_{0jlk}^{(n)}(\bs{b})$. By Lemma \ref{LEM:gammam}, 
\[
    \gamma_{0jlk}^{(n)}(\bs{b})
    =\sum_{i^{\prime}+j^{\prime}+k^{\prime}+l^{\prime}=n}\gamma_{i^{\prime},j+j^{\prime},k+k^{\prime},l+l^{\prime}}B_{i^{\prime}j^{\prime}k^{\prime}l^{\prime}}^{n}(\bs{v}_5).
\]
Thus, (\ref{EQN:Dp}) holds if and only if 
\[
    \widetilde{\gamma}_{njkl}
    =\sum_{i^{\prime}+j^{\prime}+k^{\prime}+l^{\prime}=n}\gamma_{i^{\prime},j+j^{\prime},k+k^{\prime},l+l^{\prime}}B_{i^{\prime}j^{\prime}k^{\prime}l^{\prime}}^{n}(\bs{v}_5),
\]
that is, (\ref{EQ:H_r}) follows. 
\end{proof}

%%%%%%%%%%%%%%%%%%%%%%%%%%%%%%%%%%%%%%%%%%%%%%%%%%%%%%%%%%%%%%
%%%%%%%%%%%%%%%%%%%%%%%%%%%%%%%%%%%%%%%%%%%%%%%%%%%%%%%%%%%%%%
%\vskip .10in \noindent \textbf{A.5. Example of directional derivatives for basis functions and penalty matrix} \vskip .10in
\vskip .10in \noindent \textbf{A.5. Directional derivatives for basis functions} \vskip .10in
\label{SUBSEC:Cmatrix}

Based on the conclusion in (\ref{EQU:derivative}) in the Appendix A, the $m$th order directional derivatives for all the Bernstein basis functions with degree $d$ can be written as some linear combination of Bernstein basis functions with degree $d-m$. Specifically, for directions $\bs{u}_1, \ldots, \bs{u}_m$, there exists a $\binom{d+3}{3} \times \binom{d+3-m}{3}$ matrix $\mathbf{C}^{(m)}_d(\bs{u}_1, \ldots, \bs{u}_m)$ such that 
\[
D_{\bs{u}_m} \cdots D_{\bs{u}_1} \mathbf{B}_d(\bs{p})= \mathbf{C}^{(m)}_d(\bs{u}_1, \ldots, \bs{u}_m) \mathbf{B}_{d-m}(\bs{p}).
\]
Consequently, with $D_{\bs{u}_{m-1}} \cdots D_{\bs{u}_1} \mathbf{B}_d(\bs{p})= \mathbf{C}^{(m-1)}_d(\bs{u}_1, \ldots, \bs{u}_{m-1}) \mathbf{B}_{d-m+1}(\bs{p})$, we have
\begin{align*}
D_{\bs{u}_m} D_{\bs{u}_{m-1}} \cdots D_{\bs{u}_1} \mathbf{B}_d(\bs{p}) 
& = \mathbf{C}^{(m-1)}_d(\bs{u}_1, \ldots, \bs{u}_{m-1}) D_{\bs{u}_m} \mathbf{B}_{d-m+1}(\bs{p}) \\
& = \mathbf{C}^{(m-1)}_d(\bs{u}_1, \ldots, \bs{u}_{m-1})  \mathbf{C}^{(1)}_{d-m+1}(\bs{u}_{m}) \mathbf{B}_{d-m}(\bs{p}),
\end{align*}
which implies $\mathbf{C}^{(m)}_d(\bs{u}_1, \ldots, \bs{u}_m) = \mathbf{C}^{(m-1)}_d(\bs{u}_1, \ldots, \bs{u}_{m-1})\mathbf{C}^{(1)}_{d-m+1}(\bs{u}_m) $. Keep decomposing matrix $\mathbf{C}^{(m-1)}_d(\bs{u}_1, \ldots, \bs{u}_{m-1})$, we have $\mathbf{C}^{(m)}_d(\bs{u}_1, \ldots, \bs{u}_m) = \mathbf{C}^{(1)}_{d}(\bs{u}_1) \cdots  \mathbf{C}^{(1)}_{d-m+1}(\bs{u}_m)$.

Therefore, we can obtain the explicit form of the $m$th order directional derivatives for the Bernstein basis functions with degree $d$, given the explicit form of matrix $\mathbf{C}^{(1)}_{d}(\bs{u})$. Based on Lemma \ref{LEM:ddb}, 
\[
D_{\bs{u}}B_{ijkl}^d(\bs{p}) = d\left\{a_1 B_{i-1,j,k,l}^{d-1}(\bs{p}) + a_2 B_{i,j-1,k,l}^{d-1}(\bs{p}) + a_3 B_{i,j,k-1,l}^{d-1}(\bs{p}) + a_4 B_{i,j,k,l-1}^{d-1}(\bs{p})\right\},
\]
where $(a_1, a_2, a_3, a_4)$ is the barycentric coordinate of direction $\bs{u}$. Then, the $\{\sum_{m=0}^{d-i}(m+1)m/2 + \sum_{n=0}^{d-i-j}(n+1) - k\}$th row of matrix $\mathbf{C}^{(1)}_{d}(\bs{u})$ is $d(a_1 \bs{e}_{\mathcal{I}_1} + a_2 \bs{e}_{\mathcal{I}_2} + a_3 \bs{e}_{\mathcal{I}_3} + a_4 \bs{e}_{\mathcal{I}_4})$, where $\bs{e}_{\mathcal{I}}$ is unit vector with $\mathcal{I}$th element being one,  $\mathcal{I}_1$,  $\mathcal{I}_2$, $\mathcal{I}_3$ and $\mathcal{I}_4$ are indexes of basis functions $B_{i-1,j,k,l}^{d-1}(\bs{p})$, $B_{i,j-1,k,l}^{d-1}(\bs{p})$, $B_{i,j,k-1,l}^{d-1}(\bs{p})$, and $B_{i,j,k,l-1}^{d-1}(\bs{p})$. 

In the following, we provide a simple example to illustrate how to calculate the first and second order derivatives and therefore to construct the penalty matrix $\mathbf{P}_T$ in (\ref{EQN:EsTP}) in Section A.6. Since the penalty function is calculated through the second-order derivatives in terms of the three-dimensional Cartesian coordinate system $x=(1,0,0)$, $y=(0,1,0)$ and $z=(0,0,1)$, in the following example, we only consider directional derivatives for $x, y, z$.

Consider the basis functions on the tetrahedron $T$ in the Figure \ref{Fig:Tri} with $d=3$. The horizontal axis $x=(1,0,0)$ can be written as $x=\bs{v}_1 - \bs{v}_2$, therefore, the barycentric coordinates of $x$ is $(1,-1,0,0)$. Then, the first and second order derivatives of all the Bernstein basis functions $\mathbf{B}_3(\bs{p})$ are   $D^{(1,0,0)} \mathbf{B}_3(\bs{p})= \mathbf{C}^{(1)}_3(1,0,0) \mathbf{B}_{2}(\bs{p})$ and $D^{(2,0,0)} \mathbf{B}_3(\bs{p})= \mathbf{C}^{(2)}_3(x, x) \mathbf{B}_1(\bs{p}) = \mathbf{C}^{(1)}_3(x) \mathbf{C}^{(1)}_2(x)\mathbf{B}_1(\bs{p})$, 
where matrices $\mathbf{C}^{(1)}_2(x)$ and $\mathbf{C}^{(1)}_{3}(x)$ are
\[
\begingroup 
\setlength\arraycolsep{2pt}
\mathbf{C}^{(1)}_2(x) = \begin{pmatrix}
2 & 0  & 0 & 0\\
-2 & 2  & 0 & 0\\
0 & 0  & 2 & 0\\
0 & 0  & 0 & 2\\
0 & -2  & 0 & 0\\
0 & 0  & -2 & 0\\
0 & 0  & 0 & -2\\
0 & 0  & 0 & 0\\
0 & 0  & 0 & 0\\
0 & 0  & 0 & 0\\
\end{pmatrix}
\endgroup,
~
\begingroup 
\setlength\arraycolsep{2pt}
\mathbf{C}^{(1)}_3(x) = \begin{pmatrix}
3 & 0 & 0 & 0 & 0 & 0 & 0 & 0 & 0 & 0 \\
-3 & 3 & 0 & 0 & 0 & 0 & 0 & 0 & 0 & 0 \\
0 & 0 & 3 & 0 & 0 & 0 & 0 & 0 & 0 & 0 \\
0 & 0 & 0 & 3 & 0 & 0 & 0 & 0 & 0 & 0 \\
0 & -3 & 0 & 0 & 3 & 0 & 0 & 0 & 0 & 0 \\
0 & 0 & -3 & 0 & 0 & 3 & 0 & 0 & 0 & 0 \\
0 & 0 & 0 & -3 & 0 & 0 & 3 & 0 & 0 & 0 \\
0 & 0 & 0 & 0 & 0 & 0 & 0 & 3 & 0 & 0 \\
0 & 0 & 0 & 0 & 0 & 0 & 0 & 0 & 3 & 0 \\
0 & 0 & 0 & 0 & 0 & 0 & 0 & 0 & 0 & 3 \\
0 & 0 & 0 & 0 & -3 & 0 & 0 & 0 & 0 & 0 \\
0 & 0 & 0 & 0 & 0 & -3 & 0 & 0 & 0 & 0 \\
0 & 0 & 0 & 0 & 0 & 0 & -3 & 0 & 0 & 0 \\
0 & 0 & 0 & 0 & 0 & 0 & 0 & -3 & 0 & 0 \\
0 & 0 & 0 & 0 & 0 & 0 & 0 & 0 & -3 & 0 \\
0 & 0 & 0 & 0 & 0 & 0 & 0 & 0 & 0 & -3 \\
0 & 0 & 0 & 0 & 0 & 0 & 0 & 0 & 0 & 0 \\
0 & 0 & 0 & 0 & 0 & 0 & 0 & 0 & 0 & 0 \\
0 & 0 & 0 & 0 & 0 & 0 & 0 & 0 & 0 & 0 \\
0 & 0 & 0 & 0 & 0 & 0 & 0 & 0 & 0 & 0 \\
\end{pmatrix}.
\endgroup
\]

%%%%%%%%%%%%%%%%%%%%%%%%%%%%%%%%%%%%%%%%%%%%%%%%%%%%%%%%%%%%%%
%%%%%%%%%%%%%%%%%%%%%%%%%%%%%%%%%%%%%%%%%%%%%%%%%%%%%%%%%%%%%%
\vskip .10in \noindent \textbf{A.6. Details of constructing penalty matrix} \vskip .10in
\label{SUBSEC:penalty}

Next, we introduce the details of constructing penalty $\mathbf{P}$ matrix. Following the definition of penalty matrix $\mathbf{P}_T$ in the Section 3, we can further implement as
\begin{align}
\mathcal{E}(s_T)
&= \sum_{|\bs{\alpha}| = 2}{2 \choose \alpha_1}{2 - \alpha_1 \choose \alpha_2}\int_T \bigg\{\sum_{i + j + k + l = d} \gamma_{T; ijkl} D^{\bs{\alpha}} B_{ijkl}^{d,T}(\bs{p})\bigg\}^2 \mathrm{d}\bs{p} \nonumber \\
\notag
&= \sum_{|\bs{\alpha}| = 2} {2 \choose \alpha_1}{2 - \alpha_1 \choose \alpha_2} \sum_{i + j + k + l = d}\sum_{i^{\prime} + j^{\prime} + k^{\prime} + l^{\prime} = d}  \gamma_{T; ijkl} \gamma_{T; i^{\prime}j^{\prime}k^{\prime}l^{\prime}} \times \\
& \quad \quad \int_T \left\{D^{\bs{\alpha}} B_{ijkl}^{d,T}(\bs{p})\right\} \left\{D^{\bs{\alpha}} B_{i^{\prime} j^{\prime} k^{\prime} l^{\prime}}^{d,T}(\bs{p})\right\} \mathrm{d} \bs{p} \nonumber \\
&= \sum_{|\bs{\alpha}| = 2} \bs{\gamma}^{\top}_T \mathbf{P}_T^{\bs{\alpha}} \bs{\gamma}_T = \bs{\gamma}^{\top}_T \mathbf{P}_T \bs{\gamma}_T, \label{EQN:EsTP}
\end{align}
where each $\mathbf{P}_T^{\bs{\alpha}}$  is a $\binom{d+3}{3} \times \binom{d+3}{3}$ matrix with entries $\int_T \{D^{\bs{\alpha}} B_{ijkl}^{d,T}(\bs{p})\}$ $\{D^{\bs{\alpha}} B_{i^{\prime} j^{\prime} k^{\prime} l^{\prime}}^{d,T}(\bs{p})\} \mathrm{d} \bs{p}$ for $\bs{\alpha}$ satisfying $|\bs{\alpha}| = 2$. Applying the results of directional derivatives in Section A.5, the second order derivative of $\mathbf{B}_d(\bs{p})$ at directions $\bs{u}_1$ and $\bs{u}_2$ are
\[D_{\bs{u}_2} D_{\bs{u}_1}\mathbf{B}_d(\bs{p}) =  \mathbf{C}^{(2)}_d(\bs{u}_1, \bs{u}_2)\mathbf{B}_{d-2}(\bs{p}).
\]

For notation simplicity, we denote $\mathbf{C}_d^{\bs{\alpha}}$ the derivative coefficient matrix for $D^{\bs{\alpha}}\mathbf{B}_d(\bs{p})$ with respect to direction(s) of Cartesian coordinate system, i.e., $D^{\bs{\alpha}}\mathbf{B}_d(\bs{p}) = \mathbf{C}_d^{\bs{\alpha}}\mathbf{B}_{d-|\bs{\alpha}|}(\bs{p})$. Therefore, for $\bs{\alpha}$ with $|\bs{\alpha}| = 2$, we have 
\begin{align*}
\int_T \left\{D^{\bs{\alpha}}\mathbf{B}_d(\bs{p})\right\} \left\{D^{\bs{\alpha}}\mathbf{B}_d(\bs{p})\right\}^{\top} \mathrm{d}\bs{p} = \mathbf{C}_d^{\bs{\alpha}} \left\{\int_T  \mathbf{B}_{d-2}(\bs{p}) \mathbf{B}_{d-2}(\bs{p})^{\top} \mathrm{d}\bs{p}\right\} (\mathbf{C}_d^{\bs{\alpha}})^{\top}.
\end{align*}
By the Lemma 15.29 in \cite{Lai:Schumaker:07}, we have 
\begin{align*}
\int_TB_{\nu \mu \kappa \delta}^{d-2, T}(\bs{p}) B_{\nu^{\prime} \mu^{\prime} \kappa^{\prime} \delta^{\prime}}^{d-2, T}(\bs{p}) \mathrm{d}\bs{p} 
&=\frac{{\nu+\nu^{\prime} \choose \nu}{\mu+\mu^{\prime} \choose \mu}{\kappa+\kappa^{\prime} \choose \kappa} {\delta+\delta^{\prime} \choose \delta}}{{2d-4 \choose d-2}} \int_{T}B_{\nu+\nu^{\prime}, \mu+\mu^{\prime}, \kappa+\kappa^{\prime}, \delta+\delta^{\prime}}^{2d-4, T}(\bs{p}) \mathrm{d}\bs{p}   \\
&= \frac{{\nu+\nu^{\prime} \choose \nu}{\mu+\mu^{\prime} \choose \mu}{\kappa+\kappa^{\prime} \choose \kappa} {\delta+\delta^{\prime} \choose \delta}}{{2d-4 \choose d-2}{2d-1 \choose 3}}V_T,
\end{align*}
recall that $V_T$ is the volume of tetrahedron $T$. Let $\mathbf{L}_T^{d-2} = \int_T  \mathbf{B}_{d-2}(\bs{p}) \mathbf{B}_{d-2}(\bs{p})^{\top} \mathrm{d}\bs{p}$ be the $\binom{d+1}{3} \times \binom{d+1}{3}$ matrix with entries $\int_{T}B_{\nu \mu \kappa \delta}^{d-2,T}(\bs{p}) B_{\nu^{\prime} \mu^{\prime} \kappa^{\prime} \delta^{\prime}}^{d-2,T} (\bs{p})\mathrm{d}\bs{p}$. We finally obtain $\mathbf{P}_{T}^{\bs{\alpha}}=\mathbf{C}_d^{\bs{\alpha}}\mathbf{L}_{T}^{d-2}(\mathbf{C}_d^{\bs{\alpha}})^{\top}$ and $\mathbf{P}_{T}=\sum_{|\bs{\alpha}| = 2} \mathbf{P}_{T}^{\bs{\alpha}}$. Consequently, $\mathcal{E}(s) = \bs{\gamma}^{\top} \mathbf{P} \bs{\gamma}$.

Following the same example in Section A.5, similarly, 
the barycentric coordinates of $y =\bs{v}_3 - \bs{v}_2$ and $z = \bs{v}_4 - \bs{v}_2$ are $(0,-1,1,0)$ and $(0,-1,0,1)$, respectively. Following similar inductions, the second directional derivative of the Bernstein basis polynomials are the linear combination of $\{B_{1000}^{1,T}(\bs{p}), B_{0100}^{1,T}(\bs{p}), B_{0010}^{1,T}(\bs{p}), B_{0001}^{1,T}(\bs{p})\}$. For example, given the formulae of $\mathbf{C}^{(1)}_2$ and $\mathbf{C}^{(1)}_3$ shown above, we have $D^{(1,0,0)}B_{2100}^{3,T}(\bs{p})=3B_{1100}^{2,T}(\bs{p}) - 3B_{2000}^{2,T}(\bs{p}),$ and $D^{(2,0,0)}B_{2100}^{3,T}(\bs{p})
=-12B_{1000}^{1,T}(\bs{p})+6B_{0100}^{1,T}(\bs{p})$. According to the results in Sections A.5 and A.6, the coefficient matrices for second order derivatives are $\mathbf{C}_3(2,0,0)$, $\mathbf{C}_3(0,2,0)$,  $\mathbf{C}_3(0,0,2)$, $\mathbf{C}_3(1,1,0)$, $\mathbf{C}_3(1,0,1)$, and $\mathbf{C}_3(0,1,1)$:
\[
\begingroup 
\setlength\arraycolsep{2pt}
\stackrel{\mbox{$\mathbf{C}_3(2,0,0)$}}{
\begin{pmatrix}
6 & 0  & 0 & 0\\
-12 & 6 & 0 & 0\\
0 & 0  & 6 & 0\\
0 & 0  & 0 & 6\\
6 & -12  & 0 & 0\\
0 & 0  & -12 & 0\\
0 & 0  & 0 & -12\\
0 & 0  & 0 & 0\\
0 & 0  & 0 & 0\\
0 & 0  & 0 & 0\\
0 & 6  & 0 & 0\\
0 & 0  & 6 & 0\\
0 & 0  & 0 & 6\\
0 & 0  & 0 & 0\\
0 & 0  & 0 & 0\\
0 & 0  & 0 & 0\\
0 & 0  & 0 & 0\\
0 & 0  & 0 & 0\\
0 & 0  & 0 & 0\\
0 & 0  & 0 & 0\\
\end{pmatrix}},
\endgroup
\begingroup 
\setlength\arraycolsep{2pt}
\stackrel{\mbox{$\mathbf{C}_3(0,2,0)$}}{
\begin{pmatrix}
0 & 0  & 0 & 0\\
0 & 0  & 0 & 0\\
0 & 0  & 0 & 0\\
0 & 0  & 0 & 0\\
6 & 0  & 0 & 0\\
-12 & 0  & 0 & 0\\
0 & 0  & 0 & 0\\
6 & 0  & 0 & 0\\
0 & 0  & 0 & 0\\
0 & 0  & 0 & 0\\
0 & 6  & 0 & 0\\
0 & -12  & 6 & 0\\
0 & 0  & 0 & 6\\
0 & 6  & -12 & 0\\
0 & 0  & 0 & -12\\
0 & 0  & 0 & 0\\
0 & 0  & 6 & 0\\
0 & 0  & 0 & 6\\
0 & 0  & 0 & 0\\
0 & 0  & 0 & 0\\
\end{pmatrix}},
\endgroup
\begingroup 
\setlength\arraycolsep{2pt}
\stackrel{\mbox{$\mathbf{C}_3(0,0,2)$}}{
\begin{pmatrix}
0 & 0  & 0 & 0\\
0 & 0  & 0 & 0\\
0 & 0  & 0 & 0\\
0 & 0  & 0 & 0\\
6 & 0  & 0 & 0\\
0 & 0  & 0 & 0\\
-12 & 0  & 0 & -12\\
0 & 0  & 0 & 0\\
0 & 0  & 0 & 0\\
6 & 0  & 0 & 0\\
0 & 6  & 0 & 0\\
0 & 0  & 6 & 0\\
0 & -12  & 0 & 6\\
0 & 0  & 0 & 0\\
0 & 0  & -12 & 0\\
0 & 6  & 0 & -12\\
0 & 0  & 0 & 0\\
0 & 0  & 6 & 0\\
0 & 0  & 0 & 6 \\
0 & 0  & 0 & 0\\
\end{pmatrix}},
\endgroup
\]

\[
\begingroup 
\setlength\arraycolsep{2pt}
\stackrel{\mbox{$\mathbf{C}_3(1,1,0)$}}{
\begin{pmatrix}
0 & 0  & 0 & 0\\
-6 & 0  & 0 & 0\\
6 & 0  & 0 & 0\\
0 & 0  & 0 & 0\\
6 & -6  & 0 & 0\\
-6 & 6  & -6 & 0\\
0 & 0  & 0 & -6\\
0 & 0  & 6 & 0\\
0 & 0  & 0 & 6\\
0 & 0  & 0 & 0\\
0 & 6 & 0 & 0\\
0 & -6  & 6 & 0\\
0 & 0  & 0 & 6\\
0 & 0  & -6 & 0\\
0 & 0  & 0 & -6\\
0 & 0  & 0 & 0\\
0 & 0  & 0 & 0\\
0 & 0  & 0 & 0\\
0 & 0  & 0 & 0\\
0 & 0  & 0 & 0\\
\end{pmatrix}},
\endgroup
~
\begingroup 
\setlength\arraycolsep{2pt}
\stackrel{\mbox{$\mathbf{C}_3(1,0,1)$}}{
\begin{pmatrix}
0 & 0  & 0 & 0\\
-6 & 0  & 0 & 0\\
0 & 0  & 0 & 0\\
6 & 0  & 0 & 0\\
6 & -6  & 0 & 0\\
0 & 0  & -6 & 0\\
-6 & 6  & 0 & -6\\
0 & 0  & 0 & 0\\
0 & 0  & 6 & 0\\
0 & 0  & 0 & 6\\
0 & 6  & 0 & 0\\
0 & 0  & 6 & 0\\
0 & -6  & 0 & 6\\
0 & 0  & 0 & 0\\
0 & 0  & -6 & 0\\
0 & 0  & 0 & -6\\
0 & 0  & 0 & 0\\
0 & 0  & 0 & 0\\
0 & 0  & 0 & 0\\
0 & 0  & 0 & 0\\
\end{pmatrix}},
\endgroup
~
\begingroup 
\setlength\arraycolsep{2pt}
\stackrel{\mbox{$\mathbf{C}_3(0,1,1)$}}{
\begin{pmatrix}
0 & 0  & 0 & 0\\
0 & 0  & 0 & 0\\
0 & 0  & 0 & 0\\
0 & 0  & 0 & 0\\
6 & 0  & 0 & 0\\
-6 & 0  & 0 & 0\\
-6 & 0  & 0 & 0\\
0 & 0  & 0 & 0\\
6 & 0  & 0 & 0\\
0 & 0  & 0 & 0\\
0 & 6  & 0 & 0\\
0 & -6  & 6 & 0\\
0 & -6 & 0 & 6\\
0 & 0  & -6 & 0\\
0 & 6  & -6 & -6\\
0 & 0  & 0 & -6\\
0 & 0  & 0 & 0\\
0 & 0  & 0 & 0\\
0 & 0  & 0 & 0\\
0 & 0  & 0 & 0\\
\end{pmatrix}}.
\endgroup
\]
Therefore, we have
\[
    \mathbf{L}_T^1 =\int_T  \mathbf{B}_{1}(\bs{p}) \mathbf{B}_{1}(\bs{p})^{\top} \mathrm{d}\bs{p}
    = \frac{1}{12}\left(
    \begin{array}{cccc}
        2 & 1 & 1 & 1 \\
        1 & 2 & 1 & 1 \\
        1 & 1 & 2 & 1 \\
        1 & 1 & 1 & 2 \\
    \end{array}
    \right),
    \text{ and }
    \mathbf{P}_T 
    = \sum_{|\bs{\alpha}| = 2} \mathbf{C}_3(\bs{\alpha}) \mathbf{L}_{T}^{1}\{\mathbf{C}_3(\bs{\alpha})\}^{\top}.
\]

%%%%%%%%%%%%%%%%%%%%%%%%%%%%%%%%%%%%%%%%%%%%%%%%%%%%%%%%%%%%%%
%%%%%%%%%%%%%%%%%%%%%%%%%%%%%%%%%%%%%%%%%%%%%%%%%%%%%%%%%%%%%%
%%%%%%%%%%%%%%%%%%%%%%%%%%%%%%%%%%%%%%%%%%%%%%%%%%%%%%%%%%%%%%
\fontsize{12}{14pt plus.8pt minus .6pt}\selectfont
\vskip 0.1in  \noindent \textbf{B. Proof of theoretical results} \vskip 0.1in 
\renewcommand{\theequation}{B.\arabic{equation}}
\renewcommand{\thesubsection}{B.\arabic{subsection}}
\renewcommand{\thetheorem}{B.\arabic{theorem}}
\renewcommand{\thelemma}{{\rm B.\arabic{lemma}}}
\renewcommand{\theproposition}{B.\arabic{proposition}}
\renewcommand{\thecorollary}{B.\arabic{corollary}}
\renewcommand{\thefigure}{B.\arabic{figure}}
\renewcommand{\thetable}{B.\arabic{table}}
\renewcommand{\theremark}{B.\arabic{remark}}
\setcounter{equation}{0}
\setcounter{theorem}{0}
\setcounter{lemma}{0}
\setcounter{figure}{0}
\setcounter{remark}{0}
\setcounter{proposition}{0}
\setcounter{subsection}{0}
\setcounter{subsubsection}{0}

\label{SEC:proof_main}

%%%%%%%%%%%%%%%%%%%%%%%%%%%%%%%%%%%%%%%%%%%%%%%%%%%%%%%%%%%%%%
%%%%%%%%%%%%%%%%%%%%%%%%%%%%%%%%%%%%%%%%%%%%%%%%%%%%%%%%%%%%%%
\vskip .10in \noindent \textbf{B.1. Preliminaries} \vskip .10in
\label{SSEC:Pre}

In this section, we start with the stability condition for trivariate spline bases over triangulations, and establish the uniform rate of the approximation of empirical inner product to the theoretical inner product based on the stability property. 

%%%%%%%%%%%%%%%%%%%%%%%%%%%%%%%%%%%%%%%%%%%%%%%%%%%%%%%%%%%%%
\begin{lemma}
\label{LEM:normequity}
Let $\{B_{\xi}\}_{\xi \in \mathcal{M}}$ be the basis for $\mathcal{S}_d^r(\triangle)$ constructed in \cite{Lai:Schumaker:07}, where $\mathcal{M}$ stands for the index set of spline bases.  Then there exist positive constants $ C_1$, $C_2$ depending on degree $d$ and partition quasi-uniform parameter $\beta$ such that
\begin{eqnarray*}
%\label{EQ:normequity}
	C_1|\triangle|^3\sum_{\xi \in \mathcal{M}}|c_{\xi}|^2
	\le\left\Vert \sum_{\xi \in \mathcal{M}}c_{\xi}B_{\xi}\right\Vert _{L^2(\Omega)}^2
	\le C_2|\triangle|^3\sum_{\xi \in \mathcal{M}}|c_{\xi}|^2, 
\end{eqnarray*}
for all $c_{\xi},\xi \in \mathcal{M}$.
\end{lemma}
%%%%%%%%%%%%%%%%%%%%%%%%%%%%%%%%%%%%%%%%%%%%%%%%%%%%%%%%%%%%%
\begin{proof} 
The proof of Lemma \ref{LEM:normequity} follows directly from the Theorem 17.18 in \cite{Lai:Schumaker:07}.
\end{proof}
We illustrate the connection between the theoretical norm and $L^2$ norm in Lemma \ref{LEM:L2TheoNorm}.

%%%%%%%%%%%%%%%%%%%%%%%%%%%%%%%%%%%%%%%%%%%%%%%%%%%%%%%%%%%%%
\begin{lemma}
\label{LEM:L2TheoNorm} 
Let $\{B_{\xi}\}_{\xi \in \mathcal{M}}$ be the basis for $\mathcal{S}_d^r(\triangle)$ as in Lemma \ref{LEM:normequity}.
Under Assumption (A3), if $d\ge 6r+3$ and
$\triangle$ is a $\beta$-quasi-uniform triangulation, there exist positive constants $C_1$, $C_2$ such that
\begin{eqnarray*}
	C_1\left\Vert \sum_{\xi \in \mathcal{M}}c_{\xi}B_{\xi}\right\Vert _{L^2(\Omega)}^2
	\le \left\Vert \sum_{\xi \in \mathcal{M}}c_{\xi}B_{\xi}\right\Vert _{\Omega}^2
	\le C_2\left\Vert \sum_{\xi \in \mathcal{M}}c_{\xi}B_{\xi}\right\Vert _{L^2(\Omega)}^2, 
\end{eqnarray*}
for all $c_{\xi},\xi \in \mathcal{M}$.
\end{lemma}
%%%%%%%%%%%%%%%%%%%%%%%%%%%%%%%%%%%%%%%%%%%%%%%%%%%%%%%%%%%%%
\begin{proof} 
It is straightforward to obtain upper and lower bounds of $\Vert \sum_{\xi \in \mathcal{M}}c_{\xi}B_{\xi}\Vert _{\Omega}^2$ by Assumption (A3).
\end{proof}

%%%%%%%%%%%%%%%%%%%%%%%%%%%%%%%%%%%%%%%%%%%%%%%%%%%%%%%%%%%%%
\begin{proof}[Proof of Lemma \ref{LEM:Rnorder}] 
%\noindent\textbf{Proof of Lemma \ref{LEM:Rnorder}.}
According to the definition of empirical inner product and induced norm of the theoretical $L^2$ inner product in Section 3,
\[
	\left\langle g_1,g_2\right\rangle _{n,\Omega}
	=\frac{1}{n}\sum_{i=1}^{n}\left\{ \sum_{\xi \in \mathcal{M}}c_{\xi}B_{\xi}\left(\bs{p}_i\right) \right\} \left\{ \sum_{\zeta \in \mathcal{M}}\tilde{c}_{\zeta }B_{\zeta }\left( \bs{p}_i\right) \right\}
	=\sum_{\xi,\zeta \in \mathcal{M}}c_{\xi}\tilde{c}_{\zeta}\left\langle B_{\xi},B_{\zeta }\right\rangle _{n,\Omega},
\]
\[
\left\| g_1\right\| _{\Omega}^2=\sum_{\xi,\xi^{\prime }\in \mathcal{%
M}}c_{\xi}c_{\xi^{\prime }}\left\langle B_{\xi},B_{\xi^{\prime
}}\right\rangle _{\Omega},\hbox{ and }\left\| g_2\right\|
_{\Omega}^2=\sum_{\zeta,\zeta^{\prime }\in \mathcal{M}}\tilde{c}_{\zeta }\tilde{c%
}_{\zeta^{\prime }}\left\langle B_{\zeta },B_{\zeta^{\prime
}}\right\rangle _{\Omega}.
\]
By Lemma \ref{LEM:normequity}, we have for $g_1$ and $g_2$, 
$C_1|\triangle|^3\sum_{\xi \in \mathcal{M}}|c_{\xi}|^2 
		\le \Vert g_1\Vert _{L^2(\Omega)}^2
		\le C_2|\triangle|^3\sum_{\xi \in \mathcal{M}}|c_{\xi}|^2$
and $C_1|\triangle|^3\sum_{\zeta \in\mathcal{M}}|\tilde{c}_{\zeta }|^2 
		\le \Vert g_2\Vert_{L^2(\Omega)}^2
		\le C_2|\triangle|^3\sum_{\zeta \in\mathcal{M}}|\tilde{c}_{\zeta }|^2$.
Consequently, by Lemma \ref{LEM:L2TheoNorm}, 
\[
	C_1|\triangle|^3\left\{ \sum_{\xi \in \mathcal{M}}|c_{\xi}|^2\sum_{\zeta \in \mathcal{M}}|\tilde{c}_{\zeta }|^2\right\}^{1/2}
	\leq \left\| g_1\right\| _{\Omega}\left\| g_2\right\|_{\Omega}
	\leq C_2|\triangle|^3\left\{ \sum_{\xi \in \mathcal{M}}|c_{\xi}|^2\sum_{\zeta \in \mathcal{M}}|\tilde{c}_{\zeta}|^2\right\}^{1/2}.
\]
Thus,
\begin{align}
	R_n
	&\leq \frac{\sum_{\xi,\zeta \in \mathcal{M}}|c_{\xi}\tilde{c}_{\zeta}|}
		{C_1|\triangle|^3\left\{ \sum_{\xi \in \mathcal{M}}|c_{\xi}|^2\sum_{\zeta \in \mathcal{M}}|\tilde{c}_{\zeta }|^2\right\}^{1/2}}
		\max_{\xi,\zeta \in \mathcal{M}}\left| \left\langle B_{\xi},B_{\zeta }\right\rangle _{n,\Omega}-\left\langle B_{\xi},B_{\zeta }\right\rangle _{\Omega}\right| \nonumber \\
	&\leq C_1^{-1}\left|\triangle\right|^{-3}
		\max_{\xi,\zeta \in \mathcal{M}}
		\left|\left\langle B_{\xi},B_{\zeta }\right\rangle _{n,\Omega}
		-\left\langle B_{\xi},B_{\zeta }\right\rangle _{\Omega}\right|.
\label{EQ:R_n}
\end{align}
To obtain the conclusion, it suffices to show that with probability one,
\begin{eqnarray}
	\max_{\xi,\zeta \in \mathcal{M}}\left| \left\langle B_{\xi},B_{\zeta }\right\rangle _{n,\Omega}-\left\langle B_{\xi},B_{\zeta }\right\rangle _{\Omega}\right| 
	=O_P\left\{(\log n)^{1/2}(nN)^{-1/2}\right\}. \label{EQ:B_n-B_2}
\end{eqnarray}
Let 
$ R_{\xi,\zeta,i}
	=B_{\xi}\left(\bs{p}_i\right) B_{\zeta }\left( \bs{p}_i\right)
		-\mathrm{E}B_{\xi}\left( \bs{p}_i\right) B_{\zeta }\left(\bs{p}_i\right)
$. Then we have the second moment
% \[
	$\mathrm{E}R_{\xi \mathbf{,}\zeta,i}^2
	=\mathrm{E}\{B_{\xi}^2\left(\bs{p}_i\right)B_{\zeta }^2\left(\bs{p}_i\right)\}
		-\left\{\mathrm{E}B_{\xi}\left(\bs{p}_i\right)B_{\zeta }\left(\bs{p}_i\right)\right\}^2$,
% \]
where 
$\mathrm{E}\{B_{\xi}^2\left( \bs{p}_i\right) B_{\zeta }^2\left( \bs{p}_i\right)\} 
	\sim \left|\triangle\right|^3$,
$\left\{\mathrm{E}B_{\xi}\left( \bs{p}_i\right) B_{\zeta}\left(\bs{p}_i\right) \right\}^2
	\sim \left| \triangle \right|^6$. Hence, 
$\mathrm{E}R_{\xi,\zeta,i}^2\sim \left| \triangle \right|^3$. Note that the $k$-th moment is
$\mathrm{E}| R_{\xi,\zeta,i}|^k 
	\leq 2^{k-1}\{\mathrm{E}|B_{\xi}(\bs{p}_i)B_{\zeta}(\bs{p}_i)|^k
		+|\mathrm{E}B_{\xi}(\bs{p}_i)B_{\zeta }(\bs{p}_i)|^k\}$,
where 
$\mathrm{E}\left| B_{\xi}\left( \bs{p}_i\right) B_{\zeta }\left(\bs{p}_i\right) \right|^k
	\sim \left| \triangle \right|^3$,
$\left| \mathrm{E}B_{\xi}\left( \bs{p}_i\right) B_{\zeta }\left( \bs{p}_i\right) \right|^k
	\sim \left| \triangle \right|^{3k}$. 
Then there exists a constant $C>0$ such that $\mathrm{E}\left|R_{\xi \mathbf{,}\zeta,i}\right|^k\leq C2^{k-1}k!\mathrm{E}R_{\xi\mathbf{,}\zeta,i}^2$. Thus, $\{R_{\xi\mathbf{,}\zeta,i}\}_{i=1}^n$ satisfying Cramer's condition with some constant $C\sim O(1)$. 
By Bernstein's inequality in \cite{Bosq:98}, for $\delta >0$ large enough,
\begin{eqnarray}
	P\left( \frac{1}{n}\left| \sum_{i=1}^{n}R_{\xi,\zeta,i}\right| \ge \delta \sqrt{\frac{\log n}{nN}}\right) 
	\le 
	2\exp \left( \frac{-\delta^2\log n}{4+2c\delta \sqrt{N\log n/n}}\right) 
	\leq 2n^{-4}. 
\label{EQ:Bernstein}
\end{eqnarray}
It is easy to see that the cardinality of $\mathcal{M}$ is $(d+1)(d+2)(d+3)N/6$ as discussed in Section 2.3.
Thus, for the $\delta>0$ in (\ref{EQ:Bernstein}),
\[
	\sum_{n=1}^{\infty }P\left( \max_{\xi,\zeta \in \mathcal{M}}\left| \frac{1}{n}\sum_{i=1}^{n}R_{\xi,\zeta,i}\right| \geq \delta \sqrt{\frac{\log n}{nN}}\right)
	\leq
	\frac{1}{6}\sum_{n=1}^{\infty}\frac{\left\{(d+1)(d+2)(d+3)N\right\}^2}{n^4}
	\leq C\sum_{n=1}^{\infty}n^{-2}<\infty,
\]
Borel-Cantelli Lemma entails that
$\max_{\xi,\zeta \in \mathcal{M}}| n^{-1}\sum_{l=1}^{n}R_{\xi,\zeta,i}|
	=O_P\{(\log n)^{1/2}(nN)^{-1/2}\}$.
The desired result follows from (\ref{EQ:R_n}) and (\ref{EQ:B_n-B_2}). 
%\hfill \BlackBox
\end{proof}

As a direct result of Lemma \ref{LEM:Rnorder}, we have
\begin{eqnarray}
	\sup_{g\in \mathcal{S}_d^r(\triangle)}\left| \left. \left\| g\right\| _{n,\Omega}^2\right/ \Vert g\Vert _{\Omega}^2-1\right| 
	=O_P\left\{(N\log n)^{1/2}{n}^{-1/2}\right\} .  
\label{EQ:normratio}
\end{eqnarray}

%%%%%%%%%%%%%%%%%%%%%%%%%%%%%%%%%%%%%%%%%%%%%%%%%%%%%%%%%%%%%%
%%%%%%%%%%%%%%%%%%%%%%%%%%%%%%%%%%%%%%%%%%%%%%%%%%%%%%%%%%%%%%
\vskip .10in \noindent \textbf{B.2. Size of the bias and noise terms (Propositions \ref{PROP:bias} -- \ref{PROP:maxnorm})} \vskip .10in
~\label{SSEC:BiasNoise}

In this section, we measure the size of the bias term and show the proofs of Proposition \ref{PROP:bias} -- \ref{PROP:maxnorm}. We firstly proof Lemma \ref{LEM:Vnorder}.

%%%%%%%%%%%%%%%%%%%%%%%%%%%%%%%%%%%%%%%%%%%%%%%%%%%%%%%%%%%%%
\begin{proof}[Proof of Lemma \ref{LEM:Vnorder}] 
%\noindent\textbf{Proof of Lemma \ref{LEM:Vnorder}.}
According to Lemma 15.2 of \cite{Lai:Schumaker:07}, for any $g\in \mathcal{S}_d^r(\triangle)$,
\[
	\left\| g\right\| _{\infty }
	=\left\| g\right\| _{\infty,T}
	\leq K_1V_T^{-1/2}\left\| g\right\|_{L^2(T)}
	\leq \frac{K}{\varrho_{\triangle}^{3/2}}\left\| g\right\| _{L^2(T)}
	\leq \frac{K}{\varrho_{\triangle}^{3/2}}\left\| g\right\| _{L^2(\Omega)}
	\leq\frac{K_{\beta }}{\left|\triangle\right|^{3/2}}\left\| g\right\|_{L^2(\Omega)},
\] 
where $K_\beta$ is a positive constant dependent on $\beta$ and recall that $V_T$ is the volume of tetrahedron $T$. 
Consequently, by the
Markov's inequality Theorem 15.28 in \cite{Lai:Schumaker:07}, for any $g\in \mathcal{S}_d^r(\triangle)$, 
$\| g\| _{\infty,\Omega}\leq K_{\beta }| \triangle|^{-3/2}\| g\| _{L^2(\Omega)}$ and $\| g\|_{\mathcal{E}}\leq K_{\beta }| \triangle |^{-2}\| g\| _{L^2(\Omega)}$.
Equation (\ref{EQ:normratio}) implies that
\[
	\sup_{g\in\mathcal{S}_d^r(\triangle)}\left\{ \left. \left\| g\right\| _{n,\Omega}\right/\left\| g\right\| _{L^2(\Omega)}\right\} 
	\geq \left[1-O_P\left\{(N\log n)^{1/2}n^{-1/2}\right\}\right]^{1/2}.
\]
Thus, we have
\begin{eqnarray*}
	V_n &\leq &K_{\beta }\left|\triangle\right|^{-3/2}\left[1-O_P\left\{(N\log n)^{1/2}n^{-1/2}\right\} \right]^{-1/2}
	=O_P\left(\left|\triangle\right|^{-3/2}\right), \\
	\overline{V}_n &\leq &K_{\beta }\left|\triangle\right|^{-2}\left[1-O_P\left\{(N\log n)^{1/2}n^{-1/2}\right\}\right]^{-1/2}
	=O_P\left(\left|\triangle\right|^{-2}\right).
\end{eqnarray*}
%\hfill \BlackBox
\end{proof}

%%%%%%%%%%%%%%%%%%%%%%%%%%%%%%%%%%%%%%%%%%%%%%%%%%%%%%%%%%%%%
\begin{proof}[Proof of Proposition \ref{PROP:bias}] 
%\noindent \textbf{Proof of Proposition \ref{PROP:bias}.}
By triangle inequality and the definition of $V_n$, 
\begin{eqnarray}
    \left\| m-s_{\rho_n,m}\right\| _{\infty,\Omega}
    &\leq& \left\|m-s_{0,m}\right\| _{\infty,\Omega}+\left\| s_{0,m}-s_{\rho_n
,m}\right\| _{\infty,\Omega} \nonumber \\ 
    &\leq& \left\|m-s_{0,m}\right\| _{\infty,\Omega}+V_n\left\| s_{0,m}-s_{\rho_n,m}\right\| _{n,\Omega}. \label{EQ:pls-ps-sup}
\end{eqnarray}
By the definition of $s_{\rho_n,m}$, we have $\forall u\in \mathcal{S}_d^r(\triangle)$, $t\in\mathbb{R}_{+}$,
\begin{eqnarray*}
    &&n\left\|m-s_{\rho_n,m}\right\|_{n,\Omega}^2 + \rho_n\|s_{\rho_n,m}\|_{\mathcal{E}}^2
    \leq n\left\|m-s_{\rho_n,m}-tu\right\|_{n,\Omega}^2 + \rho_n\|s_{\rho_n,m}+tu\|_{\mathcal{E}}^2 \\
    &\Rightarrow& n\left\langle m-s_{\rho_n, m},u\right\rangle _{n,\Omega}
	-\rho_n\left\langle s_{\rho_n, m},u\right\rangle _{\mathcal{E}}
	\leq \frac{t}{2} \left(\left\|u\right\|_{n,\Omega}^2+ \rho_n\|u\|_{\mathcal{E}}^2\right).
\end{eqnarray*}
Similarly, take $t\in\mathbb{R}_{-}$ and induct in a similar fashion, we thus obtain 
\begin{equation}
    \left|n\left\langle m-s_{\rho_n, m},u\right\rangle _{n,\Omega}
	-\rho_n\left\langle s_{\rho_n, m},u\right\rangle _{\mathcal{E}}\right|
	\leq \frac{t}{2} \left(\left\|u\right\|_{n,\Omega}^2+ \rho_n\|u\|_{\mathcal{E}}^2\right).
\label{EQN:nOmegaE}
\end{equation}
Since (\ref{EQN:nOmegaE}) holds for any $t\in\mathbb{R}$, we can conclude that the penalized spline $s_{\rho_n,m}$ of $m$ is characterized by the orthogonality relations
\begin{eqnarray}
	n\left\langle m-s_{\rho_n, m},u\right\rangle _{n,\Omega}
	=\rho_n\left\langle s_{\rho_n, m},u\right\rangle _{\mathcal{E}},
	\quad \textrm{for all }u\in \mathcal{S}_d^r(\triangle).
\label{EQ:pls}
\end{eqnarray}
When $\rho_n=0$, we have for $s_{0, m}$,
\begin{eqnarray}
\left\langle m -s_{0, m},u\right\rangle _{n,\Omega}=0,\quad \textrm{for all }u\in \mathcal{S}_d^r(\triangle).  \label{EQ:ls}
\end{eqnarray}
Combining (\ref{EQ:pls}) and (\ref{EQ:ls}), we obtain 
$ n\left\langle s_{0,m}-s_{\rho_n, m},u\right\rangle _{n,\Omega}
	=\rho_n \left\langle s_{\rho_n,m},u\right\rangle _{\mathcal{E}}$, 
for all $u\in \mathcal{S}_d^r(\triangle)$. Inserting $u=s_{0, m}-s_{\rho_n, m}$ yields
that
\begin{eqnarray}
	n\left\| s_{0, m}-s_{\rho_n, m}\right\| _{n,\Omega}^2
	=\rho_n\left\langle s_{\rho_n, m},s_{0, m}-s_{\rho_n, m}\right\rangle_{\mathcal{E}}.  
\label{EQ:pls-ls}
\end{eqnarray}
Thus, by Cauchy-Schwarz inequality and the definition of $\overline{V}_n$,
\[
	n\left\| s_{0,m}-s_{\rho_n,m}\right\| _{n,\Omega}^2
	\leq\rho_n \left\| s_{\rho_n,m}\right\|_{\mathcal{E}}\left\|s_{0,m}-s_{\rho_n,m}\right\| _{\mathcal{E}}
	\leq \overline{V}_n\rho_n \left\| s_{\rho_n,m}\right\|_{\mathcal{E}}\left\| s_{0,m}-s_{\rho_n,m}\right\|_{n,\Omega}.
\]
Similarly, using (\ref{EQ:pls-ls}), we have
\[
	n\left\| s_{0,m}-s_{\rho_n, m}\right\| _{n,\Omega}^2
	=\rho_n 
		\left\{\left\langle s_{\rho_n, m},s_{0, m}\right\rangle_{\mathcal{E}}
		-\left\langle s_{\rho_n, m},s_{\rho_n, m}\right\rangle _{\mathcal{E}}\right\} 
	\geq 0.
\]
Thus, by Cauchy-Schwarz inequality, 
$\| s_{\rho_n, m}\|_{\mathcal{E}}^2\leq \langle s_{\rho_n,g},s_{0, m}\rangle _{\mathcal{E}}
	\leq \|s_{\rho_n, m}\| _{\mathcal{E}}\| s_{0, m}\| _{\mathcal{E}}$,
which implies that $\left\| s_{\rho_n, m}\right\|
_{\mathcal{E}}\leq \left\| s_{0, m}\right\|
_{\mathcal{E}}$. Therefore
\begin{eqnarray}
\left\| s_{0, m}-s_{\rho_n, m}\right\| _{n,\Omega}\leq
n^{-1}\overline{V}_n \rho_n \left\| s_{0, m}\right\|
_{\mathcal{E}}. \label{EQ:pls-ps2}
\end{eqnarray}
Combining (\ref{EQ:pls-ps-sup}), (\ref{EQ:pls-ps2}) and Lemma \ref{eqGS} in Section 2.3 in main part yields that
\begin{eqnarray*}
    \left\| s_{0, m}-s_{\rho_n, m}\right\| _{\infty,\Omega}
    &\leq& V_n \left\| s_{0, m}-s_{\rho_n, m}\right\| _{n,\Omega}
    \leq n^{-1}V_n \overline{V}_n \rho_n \left\| s_{0, m}\right\|_{\mathcal{E}} \nonumber \\
    &=& n^{-1}V_n \overline{V}_n \rho_n C_1\left(\left| m\right| _{2,\infty,\Omega}+\sum_{|\bs{\alpha}|=2}
		\left\|D^{\bs{\alpha}}
		\left(m - s_{0, m}\right) \right\| _{\infty,\Omega}\right) \nonumber \\
	&\leq& n^{-1}V_n\overline{V}_n\rho_nC_2 
		\left(\left|m\right| _{2,\infty,\Omega}
		+\left| \triangle \right|^{\ell -1}\left| m\right|_{\ell+1,\infty,\Omega}\right). 
\end{eqnarray*}
Plugging the orders of $V_n$ and $\overline{V}_n$, therefore, 
\[
	\left\| s_{0, m}-s_{\rho_n, m}\right\| _{\infty,\Omega}
	=O_P\left\{\frac{\rho_n}{n\left|\triangle\right|^{7/2}}\left(\left|m\right|_{2,\infty,\Omega}
		+\left|\triangle\right|^{\ell -1}\left| m\right| _{\ell +1,\infty,\Omega}\right) \right\}.
\]
Hence by (\ref{EQ:pls-ps-sup}),
\begin{eqnarray*}
	\left\| m-s_{\rho_n, m}\right\| _{\infty,\Omega}
	&\leq& C_1\left| \triangle \right|^{\ell +1}\left| m\right|_{\ell+1,\infty,\Omega}
	+O_P\left\{\frac{\rho_n }{n\left|\triangle \right|^{7/2}}\left(\left|m\right| _{2,\infty,\Omega}
		+\left|\triangle\right|^{\ell-1}\left|m\right|_{\ell+1,\infty,\Omega}\right)\right\} .
\end{eqnarray*}
Therefore, Proposition \ref{PROP:bias} is established. 
%\hfill \BlackBox
\end{proof}

Next we give the proof of Proposition \ref{PROP:variance}.

%%%%%%%%%%%%%%%%%%%%%%%%%%%%%%%%%%%%%%%%%%%%%%%%%%%%%%%%%%%%%
\begin{proof}[Proof of Proposition \ref{PROP:variance}] 
%\noindent\textbf{Proof of Proposition \ref{PROP:variance}.}
It is known from \cite{Lai:Schumaker:07} that there is a locally support basis $B_{\xi}$, $\xi\in\mathcal{M}$ for $\mathcal{S}_d^r(\triangle)$. We write $s_{\rho_n,\epsilon}(\bs{p})=\sum_{\xi\in\mathcal{M}}c_{\rho_n,\xi}B_{\xi}(\bs{p})$ for some coefficients $c_{\rho_n,\xi}$. It is easy to obtain the orthogonal relations for the penalized spline $s_{\rho_n,\epsilon}$
$n\langle s_{\rho_n,\epsilon}-\varepsilon,u\rangle _{n,\Omega}
	+\rho_n \langle s_{\rho_n,\epsilon},u\rangle_{\mathcal{E}}=0$, 
for all $u\in \mathcal{S}_d^r(\triangle)$. Consequently
\[
	\sum_{i=1}^{n}s_{\rho_n,\epsilon}\left( \bs{p}_i\right)B_{\xi}\left( \bs{p}_i\right) 
		+\rho_n \langle s_{\rho_n,\epsilon},B_{\xi}\rangle _{\mathcal{E}}
	=\sum_{i=1}^nB_{\xi}\left(\bs{p}_i\right)\sigma\left(\bs{p}_i\right)\varepsilon_i
\]
for all $\xi \in \mathcal{M}$. 
Multiply $c_{\rho_n, \xi}$ to both sides
and take the summation of $\xi \in \mathcal{M}$ and use Cauchy-Schwarz's inequality, 
\begin{eqnarray*}
	\left\| s_{\rho_n,\epsilon}\right\| _{n,\Omega}^2 
	&\leq &\left\|s_{\rho_n,\epsilon}\right\| _{n,\Omega}^2+\frac{\rho_n }{n}\mathcal{E}(s_{\rho_n,\epsilon})
	= \frac{1}{n}\sum_{\xi \in \mathcal{M}}c_{\rho_n, \xi }
		\sum_{i=1}^nB_{\xi}\left(\bs{p}_i\right)\sigma\left(\bs{p}_i\right)\varepsilon_i \\
	&\leq& \left(\sum_{\xi \in \mathcal{M}}|c_{\rho_n, \xi }|^2\right)^{1/2}
		\left[ \sum_{\xi \in \mathcal{M}}\left\{ \frac{1}{n}\sum_{i=1}^{n}
		B_{\xi}\left( \bs{p}_i\right)\sigma \left( \bs{p}_i\right) \varepsilon_i\right\}^2\right]^{1/2} \\
	&\leq& \frac{1}{K_1|\triangle|^{3/2}}\Vert s_{\rho_n,\epsilon}\Vert _{L^2(\Omega)}
		\left[ \sum_{\xi \in \mathcal{M}}\left\{\frac{1}{n}\sum_{i=1}^{n}
		B_{\xi}\left(\bs{p}_i\right)\sigma\left(\bs{p}_i\right)\varepsilon_i\right\}^2\right]^{1/2}.
\end{eqnarray*}
It follows from (\ref{EQ:normratio}) that
\begin{align*}
	&\Vert s_{\rho_n,\epsilon}\Vert _{L^2(\Omega)}^2 
	\le \left\|s_{\rho_n,\epsilon}\right\| _{n,\Omega}^2
		+O_P\left\{n^{-1/2}(\log n)^{1/2}N^{1/2}\Vert s_{\rho_n,\epsilon}\Vert _{L^2(\Omega)}^2\right\} \\
	\le& \frac{\Vert s_{\rho_n,\epsilon}\Vert_{L^2(\Omega)}}{K_1|\triangle|^{3/2}}
		\left[ \sum_{\xi \in \mathcal{M}}\left\{\frac{1}{n}\sum_{i=1}^{n}B_{\xi}\left( \bs{p}_i\right)\sigma \left( \bs{p}_i\right) \varepsilon_i\right\}^2\right]^{1/2}
		+O_P\left\{\frac{(N\log n)^{1/2}}{n^{1/2}}\Vert s_{\rho_n,\epsilon}\Vert _{L^2(\Omega)}^2\right\}.
\end{align*}
Next we note that $V_{\Omega}\le C_1|\triangle|^3N$.
That is, $|\triangle|\ge (C_1N)^{-1/3}V_{\Omega}^{1/3}$. We thus have
\begin{eqnarray*}
	\Vert s_{\rho_n,\epsilon}\Vert _{L^2(\Omega)}
	&\leq&\frac{(C_1N)^{1/2}}{V_{\Omega}^{1/2}K_1}\left[ \sum_{\xi \in \mathcal{M}}\left\{ \frac{1}{n}\sum_{i=1}^{n}B_{\xi}\left( \bs{p}_i\right) \sigma \left(\bs{p}_i\right) \varepsilon_i\right\}^2\right]^{1/2}\\
	&&+O_P\left\{n^{-1/2}(N\log n)^{1/2}\Vert s_{\rho_n,\epsilon}\Vert _{L^2(\Omega)}\right\}
\end{eqnarray*}
Thus, we have
\begin{eqnarray}
	&&\left[1-O_P\left\{\frac{(N\log n)^{1/2}}{n^{1/2}}\right\}\right]\Vert s_{\rho_n,\epsilon}\Vert _{L^2(\Omega)}\nonumber \\
	&&\quad\quad\quad\quad\quad\quad\quad\quad\quad\quad\quad\quad \le \frac{CN^{1/2}}{V_{\Omega}^{1/2}}\left[ \sum_{\xi \in \mathcal{M}}\left\{ \frac{1}{n}\sum_{i=1}^{n}B_{\xi}\left( \bs{p}_i\right) \sigma \left(\bs{p}_i\right) \varepsilon_i\right\}^2\right]^{1/2}
\label{EQ:L2noise}
\end{eqnarray}
for a constant $C>0$. Observing that the two random variables $\varepsilon_i$ and $\bs{p}_i$ are independent, we have
$\mathrm{E}\{n^{-1}\sum_{i=1}^{n}\varepsilon_iB_{\xi}(\bs{p}_i) \sigma ( \bs{p}_i) \}^2
	=n^{-1}\mathrm{E}[B_{\xi}^2(\bs{p}_i)\sigma^2 ( \bs{p}_i)]$,
where according to Lemma 15.2 in \cite{Lai:Schumaker:07}, the expectation of $B_{\xi}^2\left(
\bs{p}_i\right)\sigma^2 \left( \bs{p}_i\right)$ can be
estimated as follows:
\begin{eqnarray}
	\!\!\!\!\!\!\!\!\mathrm{E}\left[B_{\xi}^2\left( \bs{p}_i\right)\sigma^2 \left(\bs{p}_i\right)\right]
	&\leq&  C_{\sigma}^2\left\| B_{\xi}\right\|^2_{L^2(\Omega)} \nonumber \\
	&\le& c(\beta,\sigma)|\triangle|^3\Vert B_{\xi}\Vert^2_{\infty,\Omega}
	\le C(\beta,\sigma)N^{-1}V_{\Omega} \label{EQ:error-rate}
\end{eqnarray}
for positive constants $c(\beta,\sigma)$ and $C(\beta,\sigma)$ which depend only on $\beta$ and $\sigma$. 
Thus, Assumption (A2) and (\ref{EQ:error-rate}) imply that
$n^{-1}\sum_{i=1}^{n}B_{\xi}( \bs{p}_i)\sigma ( \bs{p}_i)\varepsilon_i=O_P(n^{-1/2}N^{-1/2})$.
Therefore, we have
$\sum_{\xi \in \mathcal{M}}\{n^{-1}\sum_{i=1}^{n}B_{\xi}( \bs{p}_i)\sigma ( \bs{p}_i)\varepsilon_i\}^2=O_P(n^{-1})$.
Combining (\ref{EQ:L2noise}), we obtain that 
$\Vert s_{\rho_n,\epsilon}\Vert _{L^2(\Omega)}=O_P\left(N^{1/2}n^{-1/2}\right)$. 
So Proposition \ref{PROP:variance} is established. 
%\hfill \BlackBox
\end{proof}

Let $\bs{\Gamma}_{\rho_n}$ be the symmetric positive definite matrix
\begin{eqnarray}
	\left[ \frac{1}{n}\sum_{i=1}^{n}B_{\xi}\left(\bs{p}_i\right) B_{\zeta }\left( \bs{p}_i\right)+\frac{\rho_n }{n}\langle B_{\xi},B_{\zeta }\rangle_{\mathcal{E}}\right] _{\xi,\zeta \in \mathcal{M}}.
\label{DEF:Gamma_lambda}
\end{eqnarray}
We firstly show the bounds of $\bs{\Gamma}_{\rho_n}$.

%%%%%%%%%%%%%%%%%%%%%%%%%%%%%%%%%%%%%%%%%%%%%%%%%%%%
\begin{lemma}
\label{LEM:Gamma_rho_sup}
Suppose (A3) holds, $d\geq 6r+3$, $\triangle $ is a $\beta $-quasi-uniform triangulation and $n^{-1}N\log n \rightarrow 0$ as $n \rightarrow \infty$. For $\bs{\Gamma}_{\rho_n}$ defined in (\ref{DEF:Gamma_lambda}), we have the following asymptotic properties:
\begin{itemize}
	\item[(i)] As $n\rightarrow \infty$, for some constants $0 < c_{\rho} < C_{\rho} < \infty$, with probability approaching one, 
	\[
		c_{\rho}|\triangle|^{3} 
		\leq \rho_{\min}(\bs{\Gamma}_{\rho_n}) 
		\leq \rho_{\max}(\bs{\Gamma}_{\rho_n}) 
		\leq  C_{\rho}\left(|\triangle|^{3}+\frac{\rho_{n}}{n|\triangle|}\right).
	\]
	\item[(ii)] There exists a constant $M_d > 0$ such that $\| \bs{\Gamma}_{\rho_n}^{-1} \|_{\infty}\leq M_d |\triangle|^{-3}$. 
	\item[(iii)] For every vector $\bs{a} = (a_1, \ldots, a_n)^{\top}$, there exists a constant $C_{d} > 0$ such that 
	\[
		\left\| \mathbf{B}^{\top}(\bs{p})\ \bs{\Gamma}_{\rho_n}^{-1}\frac{1}{n}\sum_{i=1}^{n} \mathbf{B}(\bs{p}_i)a_i \right\|_{\infty} 
		\leq C_{d} \|\bs{a}\|_{\infty}.
	\]
\end{itemize}
\end{lemma}

%%%%%%%%%%%%%%%%%%%%%%%%%%%%%%%%%%%%%%%%%%%%%%%%%%%%
%\noindent 
\begin{proof} 
(i) Let's randomly pick $\bs{\theta}\in\mathbb{R}^{|\mathcal{M}|}$, then for $g(\bs{p}) = \mathbf{B}^{\top}(\bs{p}) \bs{\theta}$, we have $g \in \mathcal{S}_d^r(\triangle)$. Then
\begin{equation*}
	\bs{\theta}^{\top}\bs{\Gamma}_{\rho_n}\bs{\theta} 
	=\bs{\theta}^{\top}\frac{1}{n}\sum_{i=1}^{n} \mathbf{B}(\bs{p}_i)\mathbf{B}^{\top}(\bs{p}_i)\bs{\theta}
		+ \bs{\theta}^{\top}\frac{\rho_n}{n} [\langle B_{m},B_{m^{\prime}}\rangle_{\mathcal{E}}]_{m,m^{\prime}\in \mathcal{M}} \bs{\theta}
		=\|g\|_{n,\Omega}^2 +\frac{\rho_n}{n} \|g\|_{\mathcal{E}}^2.
\end{equation*}
By (\ref{EQ:normratio}), 
	$\big| \|g\|^2_{n,\Omega}/\|g\|^2_{L^2(\Omega)} -1\big| \leq R_n$, 
and combined with Lemma \ref{LEM:normequity}, we have 
\[
	c (1-R_n) |\triangle|^{3}\Vert \bs{\theta}\Vert^{2}   
	\leq (1-R_n) \Vert g\Vert_{L^2(\Omega)}^{2}
	\leq \Vert g \Vert_{n,\Omega}^{2} 
	\leq (1+R_n) \Vert g \Vert_{L^2(\Omega)}^{2}
	\leq C (1+R_n)|\triangle|^{3}\Vert \bs{\theta}\Vert^{2}.
\]
Thus, $\rho_{\min}(\bs{\Gamma}_{\rho_n}) \geq c_{\rho} |\triangle|^{3}$ for some positive constant $c_{\rho}$. 

In the other side, as shown in the proof of Proposition \ref{PROP:bias} and by Lemma \ref{LEM:normequity}, 
$
	\|g\|_{\mathcal{E}}^2
	\leq C|\triangle|^{-4}\Vert g\Vert_{L^2(\Omega)}^{2}
	\leq C|\triangle|^{-1}\Vert\bs{\theta}\Vert ^{2}.
$
Thus, 
\[
	\rho_{\max}(\bs{\Gamma}_{\rho_n}) 
	\leq C\left\{(1+R_n)|\triangle|^{3}+\frac{\rho_n}{n}\frac{1}{|\triangle|}\right\} 
	\leq C_{\rho} \left(|\triangle|^{3}+\frac{\rho_n}{n|\triangle|}\right),
\]
for some positive constant $C_{\rho}$.

(ii) By (i), $\bs{\Gamma}_{\rho_n}$ is an invertible symmetric matrix, and its condition number $c_{d} = \rho_{\max}(\bs{\Gamma}_{\rho_n})/\rho_{\min}(\bs{\Gamma}_{\rho_n})$ satisfies $1 < c_{d} \leq c_{\rho}^{-1}C_{\rho}$. 
	
	According to the definition of banded matrix in \cite{DeVore:Lorentz:93}, a matrix $\mathbf{A} = (a_{ij})$ is said banded with bandwidth $b$ if $a_{ij} = 0, |i-j| \geq b$, and if $b$ is the smallest integer with this property. Based on the construction of trivariate splines, $\bs{\Gamma}_{\rho_n}$ is a banded matrix with bandwidth $b = {d + 3 \choose 3}$.
By Theorem 13.4.3 in \cite{DeVore:Lorentz:93},
$\|\bs{\Gamma}_{\rho_n}^{-1}\|_{\infty} 
		\leq 2 \tau^{-2b} \|\bs{\Gamma}_{\rho_n}^{-1}\|_{2} (1 - \tau)^{-1}$,
where 
	$\tau =  ( {c_{d}^2 - 1}/{c_{d}^2 + 1} )^{1/{4b}} < 1$.
	Therefore, there exist some positive constant $M_{d}$ such that $\|\bs{\Gamma}_{\rho_n}^{-1}\|_{\infty}  \leq M_{d} |\triangle|^{-3}$.
	
(iii) Combining (i) and (ii), it is straightforward to have
\begin{align*}
	\left\|\mathbf{B}^{\top}(\bs{p})\bs{\Gamma}_{\rho_n}^{-1}\frac{1}{n}\sum_{i=1}^{n} \mathbf{B}(\bs{p}_{j})a_{j} \right\|_{\infty}  
	&\leq \left\|\mathbf{B}^{\top}\right\|_{\infty} 
		\left\|\bs{\Gamma}_{\rho_n}^{-1} \right\|_{\infty} 
		\left\|\frac{1}{n}\sum_{i=1}^{n} \mathbf{B}(\bs{p}_i)\right\|_{\infty} 
		\left\|\bs{a}\right\|_{\infty} 
	\leq  C_{d} \|\bs{a}\|_{\infty}.
\end{align*}
\end{proof}

%%%%%%%%%%%%%%%%%%%%%%%%%%%%%%%%%%%%%%%%%%%%%%%%%%%%
\begin{lemma}
\label{LEM:epshatorder} Under Assumptions (A2) and (A4)
\begin{equation}
	\left\|s_{0,\epsilon} \right\|_{\infty,\Omega}
	=O_P\left\{\frac{(\log n)^{1/2}}{\sqrt{n}|\triangle|^{3/2}}\right\}.
\label{EQ:epshatunifsize}
\end{equation}
\end{lemma}

%%%%%%%%%%%%%%%%%%%%%%%%%%%%%%%%%%%%%%%%%%%%%%%%%%%%%%%%%%%%%
%\noindent 
\begin{proof} 
Note that $s_{0,\epsilon}(\bs{p})=\sum_{\xi \in \mathcal{M}}\widehat{c}_{0,\xi}B_{\xi}(\bs{p})$ for some coefficients $\widehat{c}_{0,\xi}$, so the order of $s_{\rho_n,\epsilon}(\bs{p})$ is related to that of $\widehat{c}_{0,\xi}$. In fact
\[
	\left\|s_{0,\epsilon}\right\|_{\infty,\Omega} 
	\leq  c \left\|\widehat{\mathbf{c}}_0\right\|_{\infty}
	= \left\| \mathbf{\Gamma}_{0}^{-1} \left[n^{-1} \sum_{i=1}^{n}B_{\xi}\left( \bs{p}_i\right) \sigma \left( \bs{p}_i\right) \varepsilon_i\right] _{\xi \in \mathcal{M}}\right\| _{\infty,\Omega},
\]
where $\widehat{\mathbf{c}}_{0}=(\widehat{c}_{0,\xi})_{\xi \in \mathcal{M}}$ and $\mathbf{\Gamma} _{0}$ is the symmetric positive definite matrix
\[
\left[ \frac{1}{n}\sum_{i=1}^{n}B_{\xi}\left(
\bs{p}_i\right) B_{\zeta }\left( \bs{p}_i\right)
\right] _{\xi,\zeta \in \mathcal{M}}
\]
defined in (\ref{DEF:Gamma_lambda}). Thus,
\[
	\left\|s_{0,\epsilon}\right\|_{\infty,\Omega} 
	\leq  CN \max_{\xi \in \mathcal{M}}\left| n^{-1}\sum_{i=1}^{n}B_{\xi}\left( \bs{p}_i\right) \sigma \left( \bs{p}_i\right) \varepsilon_i\right|,\ \ a.s.
\]
Next we show that with probability $1$
\begin{equation}
	\max_{\xi \in \mathcal{M}}\left|n^{-1}\sum_{i=1}^{n}
		B_{\xi}\left(\bs{p}_i\right) \sigma\left(\bs{p}_i\right)\varepsilon_i\right| 
	=O\left\{(\log n)^{1/2}(nN)^{-1/2}\right\}.  \label{EQ:BEsupnorm}
\end{equation}
To prove (\ref{EQ:BEsupnorm}), we decompose the noise variable $\varepsilon_i$ into a truncated part and a tail part 
$\varepsilon_i=\varepsilon_{i,1}^{D_n}+\varepsilon _{i,2}^{D_n}+\mu^{D_n}$, 
where 
$D_n=n^{\alpha}$ with {$1/{(2+\eta)}\leq\alpha<(1-\gamma^{-1})/2$}, 
$\varepsilon_{i,1}^{D_n}=\varepsilon_iI\left\{ \left| \varepsilon_i\right|>D_n\right\} $,
\[
	\varepsilon _{i,2}^{D_n}=\varepsilon_iI\left\{ \left| \varepsilon_i\right| \leq D_n\right\} -\mu^{D_n},\quad 
	\mu^{D_n}=\mathrm{E}\left[\varepsilon_iI\left\{ \left| \varepsilon_i\right| \leq D_n\right\} \right].
\]
It is straightforward to verify that the mean of the truncated part is
uniformly bounded by $D_n^{-2}$, so the boundedness of trivariate spline basis and
of the function $\sigma^2$ entail that
\begin{align}
	\left| B_{\xi}\left( \bs{p}_i\right)\sigma \left( \bs{p}_i\right) \mu^{D_n}\right| 
	=O\left( D_n^{-2}\right)
	=o\left( n^{-2/3}\right) .
\label{EQN:muo}
\end{align}
Next we show that tail part vanishes almost surely. Recall that $\mathrm{E}|\varepsilon_n^{2+\eta}|\leq\upsilon_{\eta}$ in Assumption (A2), so
\begin{equation}
	\sum_{n=1}^{\infty }P\left\{ \left| \varepsilon_n\right| >D_n\right\} 
	\leq\sum_{n=1}^{\infty}\frac{E\left|\varepsilon_n\right|^{2+\eta}}{D_n^{2+\eta}}
	\leq \upsilon_{\eta}\sum_{n=1}^{\infty}D_n^{-\left(2+\eta \right) }<\infty .  
\label{keyH}
\end{equation}
By the Borel-Cantelli Lemma, we have
\[
P\left\{ \omega |\exists N\left( \omega \right),\left| \varepsilon _n\left(
\omega \right) \right| \leq D_n \textrm{\ for\ } n>N\left( \omega \right)
\right\} =1.
\]
Let $\upsilon _{\varepsilon }=\max \{\left| \varepsilon _1\right|,\left|
\varepsilon _2\right|,...,\left| \varepsilon _{N\left( \omega \right) }\right|
\}$ and there exists $N_1\left( \omega \right) >N\left( \omega \right) $, $%
D_{N_1\left( \omega \right) }>\upsilon _{\varepsilon }$. Since $D_n=n^{\alpha }$ is an increasing function, we have $D_n>D_{N_1\left( \omega \right) }>\upsilon _{\varepsilon }$, for $n>N_1\left( \omega \right)$. Thus,
\[
P\left\{ \omega |\exists N\left( \omega \right),\left| \varepsilon_i\left(
\omega \right) \right| \leq D_n,\ 1\leq i\leq n\textrm{,\ for\ }%
n>N\left( \omega \right) \right\} =1,
\]
which implies that
\[
P\left\{ \omega |\exists N\left( \omega \right),\left| \varepsilon
_{i,1}^{D_n}\right| =0,\ 1\leq i\leq n\textrm{,\ for\ } n>N\left( \omega
\right) \right\} =1.
\]
The boundedness of the spline basis implies that
\begin{align}
	\left| \frac{1}{n}\sum_{i=1}^{n}\sigma \left( \bs{p}_i\right)
		\varepsilon _{i,1}^{D_n}B_{\xi}\left(\bs{p}_i\right) \right| 
	=O_{a.s.}\left( n^{-k}\right),\textrm{\ for\ any\ } k>0.
\label{EQN:epsi1o}
\end{align}
Next let $Z_i=n^{-1}\varepsilon _{i,2}^{D_n}\sigma \left( \bs{p}_i\right) B_{\xi}\left( \bs{p}_i\right) $. Since $\mathrm{E}\left( \varepsilon _{i,2}^{D_n}\right) =0$, we have
\[
	\mathrm{Var}\left( \varepsilon _{i,2}^{D_n}\right)  
	=\mathrm{E}\left( \varepsilon_i^2\right) -\mathrm{E}\left[ \varepsilon_i^2I\left\{ \left| \varepsilon_i\right| >D_n\right\} \right] -\left( \mu^{D_n}\right)^2 \\
	=1+O_P\left\{ D_n^{-\eta }+D_n^{-2\left( 1+\eta \right) }\right\} .
\]
According to (\ref{EQ:error-rate}), 
$\mathrm{E}\left[B_{\xi}^2\left(\bs{p}_i\right)\sigma^2 \left(\bs{p}_i\right)\right]
	\le C(\beta,\sigma)N^{-1}V_{\Omega}$. 
Note the independence between $\varepsilon_{i,2}^{D_n}$ and $B_{\xi}\left(\bs{p}_i\right)$ and the independence of $\varepsilon _{i,2}^{D_n},i=1,\cdots,n$, we have $V_n^2=\mathrm{Var}\left( \sum_{i=1}^{n}Z_i\right) =c\left( nN\right)
^{-1}$ for some $c>0$. By the fact that $| \varepsilon _{i,2}^{D_n}|<2D_n$, we have 
\[
	\mathrm{E}\left| \varepsilon _{i,2}^{D_n}\right|^k 
	\leq 2^{k-2}D_n^{k-2}\mathrm{E}\left| \varepsilon _{i,2}^{D_n}\right|^2,
	\quad k\geq 2.
\]
Note that
\begin{align*}
	\mathrm{E}\left| Z_i\right|^k 
	&= n^{-k}\mathrm{E}\left|\varepsilon_{i,2}^{D_n}\right|^k 
		\mathrm{E}\left|\sigma \left( \bs{p}_i\right) B_{\xi}\left( \bs{p}_i\right)\right|^k \\
	&\leq n^{-k}\mathrm{E}\left|\varepsilon_{i,2}^{D_n}\right|^k 
		C_{\sigma}^{k-2} \left\|B_{\xi}\right\|_{\infty,\Omega}^{k-2}
		\mathrm{E}\left|\sigma \left( \bs{p}_i\right)B_{\xi}\left( \bs{p}_i\right)\right|^2 
	\leq (2D_nn^{-1})^{k-2}k!\mathrm{E}(Z_i^2).
\end{align*}
Thus, $\{Z_i\}_{i=1}^n$ satisfies the Cramer condition with constant $c^{*}=2n^{-1}D_n$.
By the Bernstein inequality, for any $\delta >0$ large enough,
\[
	P\left(\left|\sum_{i=1}^{n}Z_i\right|\ge\delta\sqrt{\frac{\log n}{nN}}\right)
	\!\leq \! 2\exp \left( \frac{-\delta^2\frac{\log n}{nN}}{4V_n^2+2c^{*}\delta \sqrt{\frac{\log n}{nN}}}\right) 
	\! =2 \! \exp \left(\frac{-\delta^2\log n}{4c+4D_n\delta \sqrt{\frac{N\log n}{n}}}\right)
	\! \leq 2n^{-3}.
\]
Therefore,
\[
	\sum_{n=1}^{\infty }P\left( \max_{\xi \in \mathcal{M}}\left| \frac{1}{n}\sum_{i=1}^{n}\varepsilon_{i,2}^{D_n}\sigma\left(\bs{p}_i\right)B_{\xi}\left( \bs{p}_i\right) \right| \ge \delta \sqrt{\frac{\log n}{nN}}\right) 
	\leq \frac{N}{3}(d+1)(d+2)(d+3)\sum_{n=1}^{\infty }n^{-3}
	<\infty
\]
for such $\delta >0$. Thus, the conclusion follows by combining Borel-Cantelli's lemma with (\ref{EQN:muo}) and (\ref{EQN:epsi1o}).  
\end{proof}

%%%%%%%%%%%%%%%%%%%%%%%%%%%%%%%%%%%%%%%%%%%%%%%%%%%%%%%%%%%%%
\begin{proof}[Proof of Proposition \ref{PROP:maxnorm}] 
%\noindent\textbf{Proof of Proposition \ref{PROP:maxnorm}.}
Note that the penalized spline $s_{\rho_n,\epsilon}$ of $\varepsilon$ is characterized by the orthogonality relations
\begin{eqnarray}
	n\left\langle \varepsilon -s_{\rho_n, \varepsilon},u\right\rangle _{n,\Omega}
	=\rho_n\left\langle s_{\rho_n, \varepsilon},u\right\rangle_{\mathcal{E}},
	\quad \textrm{for all }u\in \mathcal{S}_d^r(\triangle). 
\label{EQ:pls2}
\end{eqnarray}
In particular, $s_{0, \varepsilon}$ is characterized by
\begin{eqnarray}
	\left\langle \varepsilon -s_{0, \varepsilon},u\right\rangle _{n,\Omega}=0,
	\quad \textrm{for all }u\in \mathcal{S}_d^r(\triangle).  
\label{EQ:ls2}
\end{eqnarray}
Inserting $u=s_{0, \varepsilon}-s_{\rho_n,\epsilon}$ in (\ref{EQ:pls2}) and using (\ref{EQ:ls2}) with this $u$ yield that
\begin{eqnarray}
	n\left\| s_{0, \varepsilon}-s_{\rho_n, \varepsilon}\right\| _{n,\Omega}^2
	&=&\rho_n\left\langle s_{\rho_n, \varepsilon},s_{0, \varepsilon}-s_{\rho_n, \varepsilon}\right\rangle_{\mathcal{E}}  \nonumber \\
	&=&\rho_n (\langle s_{\rho_n,\epsilon}, s_{0,\epsilon}\rangle_{\mathcal{E}} -\langle s_{\rho_n,\epsilon}, s_{\rho_n,\epsilon}\rangle_{\mathcal{E}}) .  
%\label{EQ:pls-ls}
\end{eqnarray}
It follows, by Cauchy-Schwarz inequality, that
\[
\left\| s_{\rho_n, \varepsilon}\right\|
_{\mathcal{E}}^2\leq \left\langle s_{\rho_n
,\varepsilon},s_{0, \varepsilon}\right\rangle _{\mathcal{E}}\leq \left\|
s_{\rho_n, \varepsilon}\right\| _{\mathcal{E}}\left\| s_{0, \varepsilon}
\right\| _{\mathcal{E}},
\]
which implies that $\left\| s_{\rho_n, \varepsilon}\right\|
_{\mathcal{E}}\leq \left\| s_{0, \varepsilon}\right\|
_{\mathcal{E}}$.
Thus, by Cauchy-Schwarz inequality and the definition of $\overline{V}_n$.
\[
	n\left\| s_{0, \varepsilon}-s_{\rho_n, \varepsilon}\right\| _{n,\Omega}^2
	\leq\rho_n \left\| s_{\rho_n, \varepsilon}\right\| _{\mathcal{E}}\left\| s_{0, \varepsilon}-s_{\rho_n, \varepsilon}\right\| _{\mathcal{E}}
	\leq \overline{V}_n\rho_n \left\| s_{\rho_n, \varepsilon}\right\|_{\mathcal{E}}\left\| s_{0, \varepsilon}-s_{\rho_n, \varepsilon}\right\|_{n,\Omega}.
\]
Hence, we have
\begin{eqnarray}
\left\| s_{0, \varepsilon}-s_{\rho_n, \varepsilon}\right\| _{n,\Omega}\leq
n^{-1}\overline{V}_n \rho_n \left\| s_{0, \varepsilon}\right\|
_{\mathcal{E}}.
\label{EQ:pls-ps3}
\end{eqnarray}
Combining (\ref{DEF:Vn}) and (\ref{EQ:pls-ps3}) yields that
$$
\left\| s_{0, \varepsilon}-s_{\rho_n, \varepsilon}\right\| _{\infty,\Omega}\leq V_n
\left\| s_{0, \varepsilon}-s_{\rho_n, \varepsilon}\right\| _{n,\Omega}\leq
n^{-1}V_n \overline{V}_n \rho_n \left\| s_{0, \varepsilon}\right\|
_{\mathcal{E}}.
$$
Finally, we use Markov's inequality to get
$$
\left\| s_{0, \varepsilon}\right\|_{\mathcal{E}} \le \frac{C_1}{|\triangle|^2} \left\|s_{0, \varepsilon}
\right\|_{L^2(\Omega)} .
$$
It therefore follows
\[
	\|s_{\rho_n, \varepsilon}\|_{\infty,\Omega} 
	\le \|s_{0,\epsilon}\|_{\infty,\Omega} 
		+ \|s_{0, \varepsilon}- s_{\rho_n,\epsilon}\|_{\infty,\Omega} 
	\le\|s_{0, \varepsilon}\|_{\infty,\Omega}
		+ \frac{\rho_n}{n} V_n \bar{V}_n \frac{C_1}{|\triangle|^2}
		\|s_{0, \varepsilon}\|_{L^2(\Omega)}.
\]
According to Lemma \ref{LEM:epshatorder}, we have $\|s_{0, \varepsilon}\|_{\infty,\Omega}=O_P\left\{ n^{-1/2}|\triangle|^{-3/2} (\log n)^{1/2}\right\}$. 
The conclusion of Proposition \ref{PROP:maxnorm} follows from Proposition \ref{PROP:variance}.
%\hfill \BlackBox

\end{proof}

%%%%%%%%%%%%%%%%%%%%%%%%%%%%%%%%%%%%%%%%%%%%%%%%%%%%%%%%%%%%%%
%%%%%%%%%%%%%%%%%%%%%%%%%%%%%%%%%%%%%%%%%%%%%%%%%%%%%%%%%%%%%%
\vskip .10in \noindent \textbf{B.3. Variance of the noise term} \vskip .10in
\label{SSEC:VarNoise}

In this section, we first derive the size of the asymptotic conditional variance given in Theorem \ref{THM:varianceorder}.

%%%%%%%%%%%%%%%%%%%%%%%%%%%%%%%%%%%%%%%%%%%%%%%%%%%%%%%%%%%%%
\begin{theorem}
\label{THM:varianceorder} 
Under Assumptions (A1) -- (A4), if $\rho_n=o(n/N^2)$, we have with probability approaching one as $n \rightarrow \infty$
\[
	\frac{C_{1}c_{\sigma}^{2}}{n (1+n^{-1}|\triangle|^{-4}\rho_n)^2|\triangle |^3}
	\le \mathrm{Var}\left\{s_{\rho_n,\epsilon}(\bs{p})|\mathbb{P}\right\}
	\le \frac{C_{2}C_{\sigma}^{2}}{n|\triangle|^3},~\bs{p}\in \Omega,
\]
for positive constants $C_{1}$ and $C_{2}$.
\end{theorem}

%%%%%%%%%%%%%%%%%%%%%%%%%%%%%%%%%%%%%%%%%%%%%%%%%%%%%%%%%%%%%
\begin{proof} 
Note that
\begin{eqnarray}
s_{\rho_n,\epsilon}(\bs{p})=\mathbf{B}
(\bs{p})^{\top}\widehat{\mathbf{c}}_{\rho_n,\varepsilon},
\label{EQ:espilon}
\end{eqnarray}
where $\widehat{\mathbf{c}}_{\rho_n,\varepsilon }$ is the coefficient vector for $%
s_{\rho_n,\epsilon}$ using basis functions $B_{\xi}$, $\xi \in
\mathcal{M}$ and $\mathbf{B}(\bs{p})=[B_{\xi}(\bs{p})$, $\xi
\in \mathcal{M}]^{\top}$ is the vector of basis functions. Note that
\[
	\widehat{\mathbf{c}}_{\rho_n,\varepsilon }
	=\left[\sum_{i=1}^{n}B_{\xi}\left( \bs{p}_i\right) B_{\zeta }\left( \bs{p}_i\right) +\rho_n\langle B_{\xi},B_{\zeta }\rangle _{\mathcal{E}}\right] _{\xi,\zeta \in\mathcal{M}}^{-1}
		\left[ \sum_{i=1}^{n}B_{\xi}\left( \bs{p}_i\right) \sigma \left( \bs{p}_i\right) \varepsilon_i\right] _{\xi \in \mathcal{M}}
\]
and $\mathrm{Var}\{s_{\rho_n,\epsilon}(\bs{p})|\mathbb{P}\}
=\mathbf{B}(\bs{p})^{\top}\mathrm{E}\left( \widehat{\mathbf{c}}_{\rho_n,\varepsilon }\widehat{\mathbf{c}}_{\rho_n,\varepsilon }^{\top}|\mathbb{P}\right) \mathbf{B}(\bs{p})$.
Recall the definition of $\bs{\Gamma}_{\rho_n}$ in (\ref{DEF:Gamma_lambda}) in main part,
we have $\mathrm{E}\left( \widehat{\mathbf{c}}_{\rho_n,\varepsilon }\widehat{\mathbf{c}}_{\rho_n
,\varepsilon }^{\top}| \mathbb{P}\right)$ equal to
\[
\bs{\Gamma}_{\rho_n}^{-1}\mathrm{E}\left( \left. \left[ \frac{1}{n}%
\sum_{i=1}^{n}B_{\xi}\left( \bs{p}_i\right)\sigma \left(
\bs{p}_i\right) \varepsilon_i\right] _{\xi \in
\mathcal{M}}\left[ \frac{1}{n}\sum_{i=1}^{n}B_{\xi}\left(
\bs{p}_i\right) \sigma \left(
\bs{p}_i\right)\varepsilon_i\right] _{\xi \in \mathcal{M}}^{\top}\right| %
\mathbb{P}\right) \bs{\Gamma}_{\rho_n}^{-1}.
\]
The central conditional expectation term in the above line
satisfies that
\[
	\frac{c_{\sigma}^2}{n}\mathbf{\Gamma}_0
	\leq \frac{1}{n^2}\left[ \sum_{i=1}^{n}B_{\xi}\left( \bs{p}_i\right) B_{\zeta }\left( \bs{p}_i\right) \sigma^2 \left(\bs{p}_i\right)\right] _{\xi,\zeta \in \mathcal{M}}
	\leq\frac{C_{\sigma}^2}{n}\mathbf{\Gamma}_0.
\]
That is,
\[
	n^{-1}c_{\sigma}^2\mathbf{B}(\bs{p})^{\top}\bs{\Gamma}_{\rho_n}^{-1}\mathbf{\Gamma}_0\bs{\Gamma}_{\rho_n}^{-1}\mathbf{B}(\bs{p}) 
	\leq\mathrm{Var}\left\{s_{\rho_n,\epsilon}\left(\bs{p}\right)|\mathbb{P}\right\}
	\leq n^{-1}C_{\sigma}^2\mathbf{B}(\bs{p})^{\top}\bs{\Gamma}_{\rho_n}^{-1}\mathbf{\Gamma}_0\bs{\Gamma}_{\rho_n}^{-1}\mathbf{B}(\bs{p}).
\]
Let $\alpha_{\min}(\rho_n )$ and $\alpha_{\max}(\rho_n )$ be the
smallest
and largest eigenvalues of the positive definite matrix $\bs{\Gamma}_{\rho_n}$%
. It follows easily that with probability approaching one,
\[
	\frac{c_{\sigma}^2}{n} \alpha_{\max}(\rho_n )^{-2}\alpha_{\min}(0)\Vert \mathbf{B}(\bs{p})\Vert^2
	\le \mathrm{Var}\left\{ s_{\rho_n,\epsilon}(\bs{p})|\mathbb{P}\right\} 
	\le \frac{C _{\sigma}^2}{n} \alpha_{\min}(\rho_n )^{-2}\alpha_{\max}(0)\Vert \mathbf{B}(\bs{p})\Vert^2 .
\]
Note that $\Vert \mathbf{B}(\bs{p})\Vert^2=\sum_{\xi \in \mathcal{M}}B_{\xi}^2(\bs{p})$ is bounded above by a constant $C_2<\infty $ and below by $C_1>0$ for any point $\bs{p}\in \Omega $. Indeed, if $C_1=0$ for a point $\bs{p}$, then $B_{\xi}(\bs{p})=0$ for all $\xi \in \mathcal{M}$. That is, we have $0=\sum_{\xi \in \mathcal{M}}B_{\xi}(\bs{p})$. It follows that these basis functions are linearly dependent which is a contradiction. On the other hand, for any fixed point $\bs{p}\in \Omega$, there is at most $C_2$ nonzero terms in the above summation, where $C_2$ is dependent on the smallest angle of the triangulation $\triangle$. Note that $B_{\xi}$ is uniformly bounded for all $\xi \in \mathcal{M}$, say bounded by $1$. Thus, we know $C_2<\infty $. We summarize the above discussion to get
\[
	n^{-1}C_1c_{\sigma}^2\alpha_{\max}(\rho_n )^{-2}\alpha_{min}(0)
	\le \mathrm{Var}\left\{s_{\rho_n,\epsilon}(\bs{p})|\mathbb{P}\right\} 
	\le n^{-1}C_2C_{\sigma}^2\alpha_{\min}(\rho_n )^{-2}\alpha_{\max}(0)
\]
with probability approaching one.

We now spend some effort to estimate the largest and smallest eigenvalues of $\bs{\Gamma}_{\rho_n}$. It is easy to see that for any vector $\mathbf{a}=[a_{\xi},\xi \in \mathcal{M}]^{\top}$,
\begin{eqnarray*}
	\mathbf{a}^{\top}\bs{\Gamma}_{\rho_n}\mathbf{a} 
	&=&\mathbf{a}^{\top}\left[ \frac{1}{n}\sum_{i=1}^n B_{\xi}\left( \bs{p}_i\right) B_{\zeta }\left( \bs{p}_i\right) +\frac{\rho_n }{n}\langle B_{\xi},B_{\zeta }\rangle _{\mathcal{E}}\right] _{\xi,\zeta \in \mathcal{M}}\mathbf{a} \\
	&=&\left[ \frac{1}{n}\sum_{i=1}^{n}\left\{ \sum_{\xi \in\mathcal{M}}a_{\xi}B_{\xi}\left( \bs{p}_i\right) \right\}^2\right] +\frac{\rho_n }{n}\mathcal{E}\left(\sum_{\xi \in \mathcal{M}}a_{\xi}B_{\xi}\right).
\end{eqnarray*}

Let $s_a=\sum_{\xi \in \mathcal{M}}a_{\xi}B_{\xi}\in \mathcal{S}_d^r(\triangle)$
be the spline associated with vector $\mathbf{a}=(a_\xi, \xi\in
\mathcal{M})^T$. By (\ref{EQ:normratio}), we have
\[
	\Vert s_a\Vert _{n,\Omega}^2 
	=\left\{ 1+O_P\left(\sqrt{\frac{\log n}{n/N}}\right) \right\} \Vert s_a\Vert _{L^2(\Omega)}^2
	\le K_2\left\{ 1+O_P\left( \sqrt{\frac{\log n}{n/N}}\right) \right\} |\triangle|^3\Vert \mathbf{a}\Vert^2.
\]
Here we have used the stability conditions in Lemma \ref{LEM:normequity}. Furthermore, using Markov's inequality, we have
\[
	\frac{\rho_n}{n}\mathcal{E}\left(\sum_{\xi \in\mathcal{M}}a_{\xi}B_{\xi}\right)
	\le \frac{\rho_n}{n}\frac{C}{|\triangle|^4}\Vert s_a\Vert _{L^2(\Omega)}^2
	\le\frac{\rho_n }{n}\frac{C}{|\triangle|}K_2\Vert\mathbf{a}\Vert^2.
\]
Thus, the largest eigenvalue $\alpha_{\max}(\rho_n )$ of the matrix $\bs{\Gamma}_{\rho_n}$ in (\ref{DEF:Gamma_lambda}) is less than or equal to
\[
	K_2\left\{1+O_P\left( \sqrt{\frac{\log n}{n/N}}\right)\right\} |\triangle|^3
		+K_2\frac{\rho_n}{n}\frac{C}{|\triangle|}.
\]
Thus, we have with probability approaching one
\[
	\left\{
	\begin{array}{ll}
		\alpha_{\max}(0)\le C|\triangle|^3,\quad  & \hbox{ if }\rho_n =0 \\
		\alpha_{\max}(\rho_n )\le C \left(|\triangle|^3+\frac{\rho_n }{n|\triangle|}\right), & \hbox{ if }\rho_n >0
	\end{array}
	\right.
\]
for positive constant $C$. On the other hand, we use Lemma \ref{LEM:normequity} and (\ref{EQ:normratio}) to have
\[
	\Vert s_a\Vert _{n,\Omega}^2 
	=\left\{ 1+O_P\left(\sqrt{\frac{\log n}{n/N}}\right) \right\} \Vert s_a\Vert _{L^2(\Omega)}^2 
	\ge K_1\left\{ 1+O_P\left( \sqrt{\frac{\log n}{n/N}}\right)\right\} |\triangle|^3\Vert \mathbf{a}\Vert^2.
\]
Therefore, the smallest eigenvalue $\alpha_{\min}(\rho_n )$ of the matrix $\bs{\Gamma}_{\rho_n}$ in (\ref{DEF:Gamma_lambda}) is greater than 
$K_1\{ 1+O_P( \sqrt{Nn^{-1}\log n})\} |\triangle|^3
	=C|\triangle|^3 $. 
Summarizing the above discussions to conclude that for $\rho_n
=0$, we have with probability approaching one,
\[
	c_{\beta}c_{\sigma}^2n^{-1}|\triangle|^{-3}
	\le\mathrm{Var}\left\{ s_{\rho_n,\epsilon}(\bs{p})|\mathbb{P}\right\}
	\le C_{\beta}C_{\sigma}^2n^{-1}|\triangle|^{-3}.
\]
This establishes the result in the case for $\rho_n=0$. Similar for the case $\rho_n >0$. We have therefore completed the proof.
\end{proof}

The above variance result can be more precise when spline space $\mathcal{S}^{-1}_0(\triangle)$ is considered. The next lemma provides the pointwise variance of $\widehat{s}_{\rho_n,\varepsilon}$ when using
spline space $\mathcal{S}^{-1}_0(\triangle)$.

%%%%%%%%%%%%%%%%%%%%%%%%%%%%%%%%%%%%%%%%%%%%%%%%%%%%%%%%%%%%%
\begin{lemma}
\label{LEM:var-cons}
Consider piecewise constant spline space $\mathcal{S}^{-1}_0(\triangle)$ and suppose that the tetrahedra are of equal size. Under Assumptions (A1)--(A4), if the density function $f(\bs{p})$ of $\bs{p}$ is continuous and positive on $\Omega$, then
\[
	\mathrm{Var}\left\{s_{\rho_n,\epsilon}\left( \bs{p}\right)\right\}
	=\frac{1}{n}\frac{\sigma^2\left( \bs{p}\right)}{f\left( \bs{p} \right) V_T}\{1+o(1)\},
\]
where $V_T$ is the volume of the tetrahedron $T$ as defined before.
\end{lemma}

%%%%%%%%%%%%%%%%%%%%%%%%%%%%%%%%%%%%%%%%%%%%%%%%%%%%%%%%%%%%%
\begin{proof} 
When using spline space $\mathcal{S}^{-1}_0(\triangle)$, the space of piecewise constant functions over $\triangle$, we have ${\cal E}(s)=0$ for all $s\in\mathcal{S}^{-1}_0(\triangle)$. Thus,
following (\ref{EQ:espilon}), we have $s_{\rho_n,\varepsilon
}\left( \bs{p}\right) =\mathbf{B}
(\bs{p})^{\top}\widehat{\mathbf{c}}_{\rho_n,\varepsilon }$, where 
\[
	\widehat{\mathbf{c}}_{\rho_n,\varepsilon}
	=\left[\left\| B_{\xi}\right\| _{n,\Omega}^{-2}\frac{1}{n}\sum_{i=1}^{n}\varepsilon_iB_{\xi}\left( \bs{p}_i\right) \sigma \left(\bs{p}_i\right) \right] _{\xi \in \mathcal{M}}.
\]
Let $\tilde{s}_{\rho_n,\varepsilon }\left( \bs{p}\right) =\mathbf{B}
(\bs{p})^{\top}\tilde{\mathbf{c}}_{\rho_n,\varepsilon }$, where
\[
\tilde{\mathbf{c}}_{\rho_n,\varepsilon }=\left[ \left\| B_{\xi
}\right\| _{L^2(\Omega)}^{-2}\frac{1}{n}\sum_{i=1}^{n}\varepsilon
_iB_{\xi}\left( \bs{p}_i\right) \sigma \left(
\bs{p}_i\right) \right] _{\xi \in \mathcal{M}}.
\]
For any $\bs{p}\in \Omega$,
\[
\left| s_{\rho_n,\epsilon}\left( \bs{p}\right) -\tilde{s}%
_{\rho_n,\varepsilon }\left( \bs{p}\right) \right| \leq \left| \tilde{s}%
_{\rho_n,\varepsilon }\left( \bs{p}\right) \right| \max_{\xi \in \mathcal{%
M}}\left|{\left\| B_{\xi}\right\| _{L^2(\Omega)}^2}/{\left\|
B_{\xi}\right\| _{n,\Omega}^2}-1\right|.
\]
According to Lemma \ref{LEM:Rnorder}, we have
\[
\max_{\xi \in \mathcal{M}}\left| \frac{\left\| B_{\xi}\right\|
 _{L^2(\Omega)}^2}{\left\| B_{\xi}\right\| _{n,\Omega}^2}-1\right|\leq \frac{R_n}{%
1-R_n}=O_P\left\{ \sqrt{{N\log n}/{n}}\right\}.
\]
Thus,
\[
\left| s_{\rho_n,\epsilon}\left( \bs{p}\right) -\tilde{s}%
_{\rho_n,\varepsilon }\left( \bs{p}\right) \right|\leq
O_P\left\{ \left( N\log n/n\right)^{1/2} \left| \tilde{s}%
_{\rho_n,\varepsilon }\left( \bs{p}\right) \right| \right\}.
\]
Hence, finding the asymptotic variance of $\widehat{s}_{\rho_n,\varepsilon }(\bs{p})$ is equivalent to finding the asymptotic variance of $\tilde{s}_{\rho_n,\varepsilon }(\bs{p})$. Next we calculate the pointwise variance of
$\tilde{s}_{\rho_n,\varepsilon }(\bs{p})$. Note that
\begin{eqnarray*}
	\mathrm{Var}\left\{\tilde{s}_{\rho_n,\varepsilon }\left(\bs{p}\right) \right\} 
	&=& \mathbf{B}(\bs{p})^{\top}\mathrm{E}\left(\tilde{\mathbf{c}}_{\rho_n,\varepsilon }\tilde{\mathbf{c}}_{\rho_n,\varepsilon }^{\top}\right) \mathbf{B}(\bs{p}) \\
	&=& \sum_{\xi \in\mathcal{M}}B_{\xi}^2\left( \bs{p}\right) 
		\mathrm{E}\left\{\frac{1}{n}\sum_{i=1}^{n}\varepsilon_i
		B_{\xi}\left(\bs{p}_i\right) \sigma \left( \bs{p}_i\right) 
		\left\|B_{\xi}\right\| _{L^2(\Omega)}^{-2}\right\}^2.
\end{eqnarray*}
For any $\bs{p}\in \Omega$, let $T_{\xi(\bs{p})}$ be
the tetrahedron that contains $\bs{p}$. It is easy to see that
\[ 
	\left\| B_{\xi}\right\| _{L^2(\Omega)}^2
	=\int_{\Omega}B_{\xi}^2\left(\bs{p}\right)f\left(\bs{p}\right)\mathrm{d}\bs{p}
	=\int_{T _{\xi(\bs{p})}}f\left(\bs{p}\right) \mathrm{d}\bs{p}.
\]
For any continuous function $g$, let 
$\omega (g,\varrho )
	=\sup_{\bs{p},\bs{p}^{\prime}\in\Omega,\Vert\bs{p}-\bs{p}^{\prime}\Vert \leq \varrho}
	|g(\bs{p})-g(\bs{p}^{\prime })|$ 
be the moduli of continuity of $g$ on $\Omega$. Then for any $\bs{p}^{\prime}\in T _{\xi (\bs{p})}$, we can write 
$\{ f\left( \bs{p}^{\prime}\right) -\omega (f, d_T)\} V_T
	\leq \left\| B_{\xi}\right\| _{L^2(\Omega)}^2
	\leq\{ f\left( \bs{p}^{\prime}\right) +\omega (f, d_T)\} V_T$ 
with $d_T$ being the diameter of the smallest ball containing $T$.
Next, we have
\begin{eqnarray*}
	n\mathrm{Var}\left\{\tilde{s}_{\rho_n,\varepsilon }\left(\bs{p}\right) \right\}  
	&=&n\sum_{\xi\in\mathcal{M}}\left\|B_{\xi}\right\|_{L^2(\Omega)}^{-4}
		B_{\xi}^2\left( \bs{p}\right) \mathrm{E}\left\{\frac{1}{n}\sum_{i=1}^{n}
		\varepsilon_iB_{\xi}\left(\bs{p}_i\right)\sigma\left(\bs{p}_i\right)\right\}^2 \\
	&=&\sum_{\xi\in\mathcal{M}}\left\|B_{\xi}\right\|_{L^2(\Omega)}^{-4}
		B_{\xi}^2\left(\bs{p}\right)\mathrm{E}\left\{B_{\xi}^2\left( \bs{p}\right) 
		\sigma^2\left( \bs{p}\right)\right\}\\
	&=&\sum_{\xi \in \mathcal{M}}\left\| B_{\xi}\right\|_{L^2(\Omega)}^{-4}B_{\xi}^2\left( \bs{p}\right) \int_{\Omega}B_{\xi}^2\left(\bs{p}^{\prime}\right) \sigma^2\left( \bs{p}^{\prime}\right) f\left( \bs{p}^{\prime}\right) \mathrm{d}\bs{p}^{\prime}.
\end{eqnarray*}
By the continuity of functions $\sigma^2(\bs{p})$ and $f(\bs{p})$, we have
\begin{eqnarray*}
	\mathrm{Var}\left\{\tilde{s}_{\rho_n,\varepsilon }\left(\bs{p}\right) \right\}  
	&=&\frac{1}{n}\sum_{\xi \in\mathcal{M}}
		\left\|B_{\xi}\right\|_{L^2(\Omega)}^{-4}B_{\xi}^2\left(\bs{p}\right)\\
	&& \times\left[ \sigma^2\left(\bs{p}\right) f\left( \bs{p}\right) V_T
		+\int_{{T_{\xi\left(\bs{p}\right)}}}
		\left\{\sigma^2\left( \bs{p}^{\prime}\right) f\left( \bs{p}^{\prime}\right) 
		-\sigma^2\left(\bs{p}\right)f\left( \bs{p}\right) \right\} 
		\mathrm{d}\bs{p}^{\prime}\right]  \\
	&\leq&\frac{1}{n}
		\frac{\left\{\sigma^2\left(\bs{p}\right)f\left(\bs{p}\right) 
		+\omega (\sigma^2f,|T |)\right\}V_T}
		{\left[ \left\{ f\left( \bs{p}\right) -\omega(f,|T |)\right\} V_T\right]^2}
	=\frac{1}{n}\frac{\sigma^2\left(\bs{p}\right)}{f\left(\bs{v}\right)V_T}\{1+o(1)\}.
\end{eqnarray*}
The conclusion follows.
\end{proof}

%%%%%%%%%%%%%%%%%%%%%%%%%%%%%%%%%%%%%%%%%%%%%%%%%%%%%%%%%%%%%%
%%%%%%%%%%%%%%%%%%%%%%%%%%%%%%%%%%%%%%%%%%%%%%%%%%%%%%%%%%%%%%
\vskip .10in \noindent \textbf{B.4. Proof of Theorem \ref{THM:normality}} \vskip .10in

%%%%%%%%%%%%%%%%%%%%%%%%%%%%%%%%%%%%%%%%%%%%%%%%%%%%%%%%%%%%%%
\begin{proof}[Proof of Theorem \ref{THM:normality}]
%\noindent \textbf{Proof of Theorem \ref{THM:normality}.}
To prove Theorem \ref{THM:normality}, we first show that under Assumptions (A1)--(A4), 
\begin{equation}
	\frac{s_{\rho_n,\epsilon}\left(\bs{p}\right)}
		{\sqrt{\mathrm{Var}\left\{s_{\rho_n,\epsilon}\left(\bs{p}\right) |\mathbb{P}\right\}}}
		\stackrel{D}{\longrightarrow} N\left( 0,1\right), \quad n \to \infty.
\label{EQ:normality}
\end{equation}

Let $\{B_{\xi}\}_{\xi \in \mathcal{M}}$ be the constructed spline basis functions for $\mathcal{S}$, where $\mathcal{M}$ stands for the index set for spline bases.
Note that $s_{\rho_n,\epsilon}(\bs{p}) = \mathbf{B}(\bs{p})^{\top}\widehat{\mathbf{c}}_{\rho_n,\varepsilon}$,
where $\widehat{\mathbf{c}}_{\rho_n,\varepsilon }$ is the coefficient vector for $s_{\rho_n,\epsilon}$ with basis functions $B_{\xi}$, $\xi \in
\mathcal{M}$ and $\mathbf{B}(\bs{p})=[B_{\xi}(\bs{p})$, $\xi
\in \mathcal{M}]^{\top}$ is the vector of basis functions. Then, we have
\begin{equation*}
	\widehat{\mathbf{c}}_{\rho_n,\varepsilon }
	=\left[\sum_{i=1}^{n}B_{\xi}\left(\bs{p}_i\right) B_{\zeta }\left(\bs{p}_i\right) +\rho_n\langle B_{\xi},B_{\zeta }\rangle _{\mathcal{E}}\right] _{\xi,\zeta \in\mathcal{M}}^{-1}
		\left[\sum_{i=1}^{n}B_{\xi}\left(\bs{p}_i\right) \sigma \left(\bs{p}_i\right) \varepsilon_i\right] _{\xi \in \mathcal{M}}.
% \label{EQ:chat_rho_eps}
\end{equation*}
Recall the definition of $\bs{\Gamma}_{\rho_n}$ in (\ref{DEF:Gamma_lambda}), then we have
\begin{eqnarray*}
	s_{\rho_n,\epsilon}\left(\bs{p}\right)
	&=&\mathbf{B}(\bs{p})^{\top}\bs{\Gamma}_{\rho_n}^{-1}
		\left[\frac{1}{n}\sum_{i=1}^{n}B_{\xi}\left(\bs{p}_i\right)
		\sigma\left(\bs{p}_i\right)\varepsilon_i\right]_{\xi\in\mathcal{M}} 
	=\frac{1}{n}\sum_{i=1}^{n}\mathbf{B}(\bs{p})^{\top}\bs{\Gamma}_{\rho_n}^{-1}\mathbf{B} (\bs{p}_i)\sigma \left(
\bs{p}_i\right)\varepsilon_i.
\end{eqnarray*}
Let $a_i=n^{-1}\mathbf{B}(\bs{p})^{\top}\bs{\Gamma}_{\rho_n}^{-1}\mathbf{B}(\bs{p}_i)\sigma \left( \bs{p}_i\right)$. Obviously,
$s_{\rho_n,\epsilon}\left(\bs{p}\right)=\sum_{i=1}^{n}a_i\varepsilon_i$, and
\[
	a_i^2=n^{-2}\mathbf{B}(\bs{p})^{\top}\bs{\Gamma}_{\rho_n}^{-1}
		\mathbf{B} (\bs{p}_i)\mathbf{B} (\bs{p}_i)^{\top}\bs{\Gamma}_{\rho_n}^{-1}
		\mathbf{B}(\bs{p})\sigma^2 \left( \bs{p}_i\right).
\]
For any vector  ${\bf c}=(c_\xi, \xi\in {\cal M})^T$, we have
\begin{eqnarray*}
	{\bf c}^{\top} \mathbf{B} (\bs{p}_i)\mathbf{B} (\bs{p}_i)^{\top} {\bf c} 
	=\left\{\sum_{\xi\in {\cal M}} c_\xi B_\xi(\bs{p}_i)\right\}^2 
	\le \|{\bf c}\|_2^2\sum_{\xi\in {\cal M}} B^2_\xi(\bs{p}_i),
\end{eqnarray*}
which means the maximal eigenvalue of the matrix $\mathbf{B}(\bs{p}_i)\mathbf{B}(\bs{p}_i)^{\top}$ is bounded by
$\sum_{\xi\in {\cal M}} B_\xi^2(\bs{p}_i)$. Consequently, according to the proof of Theorem \ref{THM:varianceorder}  in the supplementary material (Section B), with probability approaching one,
\begin{eqnarray*}
	a_i^2 
	&\leq& n^{-2}\sum_{\xi\in {\cal M}} B^2_\xi(\bs{p}_i)\mathbf{B}(\bs{p})^{\top}
		\bs{\Gamma}_{\rho_n}^{-1}\bs{\Gamma}_{\rho_n}^{-1}
		\mathbf{B}(\bs{p}) \sigma^2 \left( \bs{p}_i\right) \\
	&\leq & \frac{C_{\sigma}^2}{n^2\left| \triangle \right|^6}\sum_{\xi\in {\cal M}} B^2_\xi(\bs{p}_i) \mathbf{B}(\bs{p})^{\top}\mathbf{B}(\bs{p})
	=\frac{C_{\sigma}^2}{n^2\left| \triangle \right|^6}\sum_{\xi\in \mathcal{M}}B_{\xi}^2\left( \bs{p}\right) \sum_{\xi\in{\cal M}} B^2_\xi(\bs{p}_i).
\end{eqnarray*}

On the other hand, 
$\sum a_i^2=\mathrm{Var}\{s_{\rho_n,\epsilon}(\bs{p})|\mathbb{P}\}$. 
Using similar arguments in the proof of Theorem
\ref{THM:varianceorder}  in the supplementary material (Section B), for $\rho_n =0$, we have 
$\sum a_i^2\geq c_{\sigma}^2n^{-1}|\triangle|^{-3}
	\sum_{\xi\in\mathcal{M}}B_{\xi}^2(\bs{p})$ 
with probability approaching one. Therefore, there exists some positive constant $C$, such that
\[
	\frac{\displaystyle \max_{1\leq i\leq n}a_i^2}{\sum a_i^2}
	\leq \frac{C C_{\sigma}^2}{n\left| \triangle\right|^3c_{\sigma}^2}\sum_{\xi \in \mathcal{M}}B_{\xi}^2\left( \bs{p}_i\right) 
	=O_P\left( Nn^{-1} \right) 
	=o_P\left( 1\right) .
\]
Hence,  
$\left. \sum a_i\varepsilon_i\right/ \left( \sum a_i^2\right)^{1/2}
	\rightarrow N\left( 0,1\right) $ 
by Linderberg-Feller CLT. Similarly for $\rho_n >0$, we have with probability approaching one,
\[
	\sum a_i^2 \geq \frac{c_{\sigma}^2}
		{n\left(1+n^{-1}|\triangle|^{-4}\rho_n\right )^2|\triangle|^3}
		\sum_{\xi \in \mathcal{M}}B_{\xi}^2\left( \bs{p}\right),
\]
and
\begin{eqnarray*}
	\frac{\displaystyle \max_{1\leq i\leq n}a_i^2}{\sum a_i^2}
	\leq\frac{CC_{\sigma}^2\left(1+n^{-1}|\triangle|^{-4}\rho_n\right)^2}
		{n\left| \triangle\right|^3c_{\sigma}^2}
		\sum_{\xi \in \mathcal{M}}B_{\xi}^2 \left(\bs{p}_i\right) 
	=O_P\left( Nn^{-1} \right) 
	=o_P\left(1\right), 
\end{eqnarray*}
which gives (\ref{EQ:normality}).

Note that under Assumptions (A3), (A4') and (A5), the bias term in (\ref{EQ:decomposition}) is negligible compared to the order of 
$[\mathrm{Var}\{s_{\rho_n,\epsilon}(\bs{p})|\mathbb{P}\}]^{1/2}$ given in Theorem \ref{THM:varianceorder}  in the supplementary material (Section B). Thus, Theorem \ref{THM:normality} follows directly from (\ref{EQ:normality}).
%\hfill \BlackBox
\end{proof}

\bibliographystyle{asa}
\bibliography{reference}

\end{document}